\PassOptionsToPackage{unicode}{hyperref}
\PassOptionsToPackage{hyphens}{url}
\PassOptionsToPackage{dvipsnames,svgnames,x11names}{xcolor}
\documentclass[
  12pt]{article}

\usepackage{amssymb}
\usepackage{mathrsfs}
\usepackage{iftex}

\usepackage{dsfont} 
\usepackage{graphicx,psfrag,epsf}
\usepackage{enumerate}
\usepackage{array}
\usepackage{url} 
\usepackage{amsmath}
\usepackage{tcolorbox}
\usepackage{subcaption}
\usepackage{bm}
\usepackage{arydshln}
\usepackage{algorithm,algorithmic}
\usepackage{enumitem}
\usepackage{caption}
\usepackage{xr-hyper}
\usepackage{hyperref}

\newtheorem{assumption}{Assumption}
\newcommand{\piSyst}{\pi_{\cM}(\mathscr{A}_{\cQ})}

\newcommand{\cMQ}{\mathcal{M}\text{-}\mathcal{Q}}

\let\hat\widehat
\let\tilde\widetilde

\newcommand{\ba}{\bm{a}}

\newcommand{\bbf}{\bm{f}}

\newcommand{\bx}{\bm{x}}

\newcommand{\Ab}{\mathbf{A}}

\newcommand{\Fb}{\mathbf{F}}
\newcommand{\Gb}{\mathbf{G}}
\newcommand{\Hb}{\mathbf{H}}
\newcommand{\Ib}{\mathbf{I}}

\newcommand{\Lb}{\mathbf{L}}
\newcommand{\Mb}{\mathbf{M}}

\newcommand{\Sbb}{\mathbf{S}}

\newcommand{\Yb}{\mathbf{Y}}

\newcommand{\bF}{\bm{F}}

\newcommand{\cA}{\mathcal{A}}

\newcommand{\cD}{\mathcal{D}}

\newcommand{\cJ}{\mathcal{J}}

\newcommand{\cL}{\mathcal{L}}
\newcommand{\cM}{\mathcal{M}}
\newcommand{\cN}{\mathcal{N}}

\newcommand{\cQ}{\mathcal{Q}}

\newcommand{\cS}{{\mathcal{S}}}
\newcommand{\cT}{{\mathcal{T}}}

\newcommand{\EE}{\mathbb{E}}

\newcommand{\RR}{\mathbb{R}}

\newcommand{\bbeta}{\bm{\beta}}

\newcommand{\blambda}{\bm{\lambda}}

\newcommand{\bxi}{\bm{\xi}}

\newcommand{\bpsi}{\bm{\psi}}

\newcommand{\bTheta}{\bm{\Theta}}
\newcommand{\bLambda}{\bm{\Lambda}}

\newcommand{\bSigma}{\bm{\Sigma}}

\newcommand{\bPhi}{\bm{\Phi}}
\newcommand{\bPsi}{\bm{\Psi}}

\newcommand*{\zero}{{\bm 0}}
\newcommand*{\one}{{\bm 1}}

\def\T{{ \intercal }}

\ifx\BlackBox\undefined
\newcommand{\BlackBox}{\rule{1.5ex}{1.5ex}}  
\fi

\ifx\QED\undefined
\def\QED{~\rule[-1pt]{5pt}{5pt}\par\medskip}
\fi

\ifx\proof\undefined
\newenvironment{proof}{\par\noindent{\bf Proof\ }}{\hfill\BlackBox\\[2mm]}
\fi

\ifx\theorem\undefined
\newtheorem{theorem}{Theorem}
\fi
\ifx\example\undefined
\newtheorem{example}{Example}
\fi
\ifx\property\undefined
\newtheorem{property}{Property}
\fi
\ifx\lemma\undefined
\newtheorem{lemma}{Lemma}
\fi
\ifx\proposition\undefined
\newtheorem{proposition}{Proposition}
\fi
\ifx\remark\undefined
\newtheorem{remark}{Remark}
\fi
\ifx\corollary\undefined
\newtheorem{corollary}{Corollary}
\fi
\ifx\definition\undefined
\newtheorem{definition}{Definition}
\fi
\ifx\conjecture\undefined
\newtheorem{conjecture}{Conjecture}
\fi
\ifx\fact\undefined
\newtheorem{fact}{Fact}
\fi
\ifx\claim\undefined
\newtheorem{claim}{Claim}
\fi
\ifx\cond\undefined
\newtheorem{cond}{Condition}
\fi

\ifPDFTeX
  \usepackage[T1]{fontenc}
  \usepackage[utf8]{inputenc}
  \usepackage{textcomp} 
\else 
  \usepackage{unicode-math}
  \defaultfontfeatures{Scale=MatchLowercase}
  \defaultfontfeatures[\rmfamily]{Ligatures=TeX,Scale=1}
\fi
\usepackage{lmodern}
\ifPDFTeX\else  
\fi
\IfFileExists{upquote.sty}{\usepackage{upquote}}{}
\IfFileExists{microtype.sty}{
  \usepackage[]{microtype}
  \UseMicrotypeSet[protrusion]{basicmath} 
}{}
\makeatletter
\@ifundefined{KOMAClassName}{
  \IfFileExists{parskip.sty}{%
    \usepackage{parskip}
  }{
    \setlength{\parindent}{2pt}
    \setlength{\parskip}{6pt plus 2pt minus 1pt}}
}{
  \KOMAoptions{parskip=half}}
\makeatother
\usepackage{xcolor}
\makeatletter
\ifx\paragraph\undefined\else
  \let\oldparagraph\paragraph
  \renewcommand{\paragraph}{
    \@ifstar
      \xxxParagraphStar
      \xxxParagraphNoStar
  }
  \newcommand{\xxxParagraphStar}[1]{\oldparagraph*{#1}\mbox{}}
  \newcommand{\xxxParagraphNoStar}[1]{\oldparagraph{#1}\mbox{}}
\fi
\ifx\subparagraph\undefined\else
  \let\oldsubparagraph\subparagraph
  \renewcommand{\subparagraph}{
    \@ifstar
      \xxxSubParagraphStar
      \xxxSubParagraphNoStar
  }
  \newcommand{\xxxSubParagraphStar}[1]{\oldsubparagraph*{#1}\mbox{}}
  \newcommand{\xxxSubParagraphNoStar}[1]{\oldsubparagraph{#1}\mbox{}}
\fi
\makeatother

\usepackage{longtable,booktabs,array}
\usepackage{calc} 
\usepackage{etoolbox}
\makeatletter
\patchcmd\longtable{\par}{\if@noskipsec\mbox{}\fi\par}{}{}
\makeatother
\IfFileExists{footnotehyper.sty}{\usepackage{footnotehyper}}{\usepackage{footnote}}
\makesavenoteenv{longtable}
\usepackage{graphicx}
\makeatletter
\def\maxwidth{\ifdim\Gin@nat@width>\linewidth\linewidth\else\Gin@nat@width\fi}
\def\maxheight{\ifdim\Gin@nat@height>\textheight\textheight\else\Gin@nat@height\fi}
\makeatother
\setkeys{Gin}{width=\maxwidth,height=\maxheight,keepaspectratio}
\makeatletter
\def\fps@figure{htbp}
\makeatother

\makeatletter
\@ifpackageloaded{caption}{}{\usepackage{caption}}
\AtBeginDocument{%
\ifdefined\contentsname
  \renewcommand*\contentsname{Table of contents}
\else
  \newcommand\contentsname{Table of contents}
\fi
\ifdefined\listfigurename
  \renewcommand*\listfigurename{List of Figures}
\else
  \newcommand\listfigurename{List of Figures}
\fi
\ifdefined\listtablename
  \renewcommand*\listtablename{List of Tables}
\else
  \newcommand\listtablename{List of Tables}
\fi
\ifdefined\figurename
  \renewcommand*\figurename{Figure}
\else
  \newcommand\figurename{Figure}
\fi
\ifdefined\tablename
  \renewcommand*\tablename{Table}
\else
  \newcommand\tablename{Table}
\fi
}
\@ifpackageloaded{float}{}{\usepackage{float}}
\floatstyle{ruled}
\@ifundefined{c@chapter}{\newfloat{codelisting}{h}{lop}}{\newfloat{codelisting}{h}{lop}[chapter]}
\floatname{codelisting}{Listing}

\makeatother
\makeatletter
\@ifpackageloaded{caption}{}{\usepackage{caption}}
\@ifpackageloaded{subcaption}{}{\usepackage{subcaption}}
\makeatother

\ifLuaTeX
  \usepackage{selnolig}  
\fi
\usepackage[]{natbib}
\usepackage{bookmark}

\IfFileExists{xurl.sty}{\usepackage{xurl}}{} 
\hypersetup{
  pdftitle={Title},
  pdfauthor={Author 1; Author 2},
  pdfkeywords={3 to 6 keywords, that do not appear in the title},
  colorlinks=true,
  linkcolor={blue},
  filecolor={Maroon},
  citecolor={Blue},
  urlcolor={Blue},
  pdfcreator={LaTeX via pandoc}}

\begin{document}

\def\spacingset#1{\renewcommand{\baselinestretch}%
{#1}\small\normalsize} \spacingset{1}


  \title{\bf  Statistical Analysis of Block Structured Latent Variable Models}
  \author{Chengyu Cui and Gongjun Xu\hspace{.2cm}\\
    Department of Statistics, University of Michigan}
    \date{}
  \maketitle

\bigskip
\begin{abstract}
This paper studies block structured latent variable models, in which observed variables are grouped into distinct blocks based on their relationships with the underlying latent variables. These block structures are prevalent in various fields such as psychology, education, economics, and genetics. Despite their widespread applications, the fundamental statistical properties of these models remain largely unexplored.
In this work, we present a comprehensive statistical analysis of the block structured latent variable models. In particular, we first derive conditions for model identifiability across various block designs. 
Furthermore, we investigate the maximum likelihood estimation under these identifiability constraints. To accommodate these intricate constraints associated with various block configurations, we introduce a Lagrangian-type formulation for the constrained nonconvex optimization problem and show that its optimum coincides with that of the original problem. This formulation serves as a critical tool for understanding the behavior of the constrained estimator under various block structures. 
Building on that, we establish sharp non-asymptotic error bounds and asymptotic distributions of the constrained maximum likelihood estimator. We also propose a computational framework to obtain the estimator and establish theoretical properties for the algorithm output. Our theoretical findings are validated through simulation studies and empirical data analyses. 
\end{abstract}

\noindent%
{\it Keywords: }{Structured latent variable model, block structure, maximum likelihood estimation, non-convex problem, Lagrangian formulation}

\spacingset{1.19}

\section{Introduction}\label{sec_intro}

Latent variable models are widely used statistical and machine learning tools in many scientific fields for analyzing high-dimensional data, with dependencies among observed variables captured by latent variables, also called latent factors~\citep{andersonintroduction,bartholomew2011latent}. In many scientific applications, the data are collected from multiple domains or sources, hereafter referred to as blocks, including psychology~\citep{thompson2004exploratory,chen2020structured,fang2021identifiability}, education~\citep{brunner2008no,reckase2009,gu2020partial}, economics~\citep{goyal2008common,ando2017clustering} and genetics~\citep{wiwie2015comparing,bing2020adaptive,zhang2020imputed,xue2021integrating,zhou2023multi}.
To capture within-block homogeneity and between-block heterogeneity, it is common to impose a block structured loading pattern in latent variable models~\citep{loehlin2004latent,reckase2009}. Under this specification, response variables within the same block are influenced by a specific subset of latent variables. 
Such block structured latent variable models have found extensive applications, such as the following examples.

     {\bf Psychological/Educational Measurement.} Block structured latent variable models have been a popular tool in measuring psychological and educational latent traits, such as personality and ability~\citep{thompson2004exploratory,reckase2009,cai2010two,chen2020structured,gu2023joint}. In these applications, participants typically complete questionnaires organized into distinct blocks, with each block specifically designed to measure one or a subset of the latent variables. Such structured design enhances the scientific interpretability of the latent variables, making the resulting inferences more closely aligned with the underlying psychological or educational constructs.

    {\bf Economics.} Block structured latent variable models have also been widely used in economics where economic indicators, such as asset returns, GDP, and credit risk, are often categorized into distinct groups~\citep{goyal2008common}. In these settings, some latent variables are common and affect all observations, while others are specific to certain groups, thereby accounting for the heterogeneity across groups. Block structured latent variable models effectively capture these structures and have extensive applications in areas such as asset pricing~\citep{ando2017clustering} and macroeconomic analysis~\citep{andreou2019inference}.

    {\bf Data Integration.} Block structured latent variable models have also gained popularity in data integration~\citep{lock2013joint,feng2018angle,zhu2020generalized,xue2021integrating}, or referred to as 
    multi-view data integration~\citep{li2018review}. When integrating multiple data sources, common latent factors are often assumed to account for the shared variation across all sources, while source-specific factors capture individual variations. These latent factors together explain both joint and unique patterns in the data, thereby modeling the homogeneity and heterogeneity of the sources simultaneously. This approach has found widespread applications in different fields such as genetics ~\citep{lock2013joint,li2018review}, omics~\citep{cantini2021benchmarking}, and public health~\citep{xue2021integrating}.  

\medskip

Despite the widespread application of block structured latent variable models, many of their statistical properties remain largely unexplored, including model identifiability, consistency of the maximum likelihood estimation under the identifiability constraints, and related statistical inference for the model parameters and latent factors. \begin{enumerate}[label=$(\roman*)$,leftmargin=0.7cm]

\item\label{prblm_id}
    For the identifiability of the model parameters and latent factors, most of the recent studies have focused on certain special cases of the block structured models~\citep{ando2017clustering,chen2020structured,bing2020adaptive,fang2021identifiability,qiao2025exploratory}. For instance, \cite{chen2020structured} investigated the identifiability of the latent factors in an asymptotic setting, but their assumptions are often too restrictive for the block structures commonly encountered in applications such as data integration~\citep{lock2013joint,feng2018angle}. \citet{bing2020adaptive} examined identifiability for structured linear latent variable models under the presence of ``pure variables'', which may not be satisfied in many psychological and educational applications~\citep{cai2010two,fang2021identifiability}.
    In addition, \cite{fang2021identifiability} and \cite{qiao2025exploratory} established identifiability conditions for bi-factor models and hierarchical latent factor models, two specific types of block structured latent variable models. Despite these recent developments, identifiability results for the general family of block structured models have yet to be established.

 \item\label{prblm_mle}

Another open problem concerns the consistency properties of maximum likelihood estimators of the model parameters and latent factors under identifiability constraints. This question presents unique challenges, as the maximum likelihood estimator arises as the solution to a nonlinear and non-convex constrained optimization problem. Specifically, the identifiability constraints are intricately related to the block structure. The complex interaction between the block structure and the constraints introduces unique challenges and sets our analysis apart from existing work~\citep{bai2012statistical,chen2020structured,bing2020adaptive,Wang2018Maximum,cui2025identifiability}. Another challenge concerns the non-convexity of the log-likelihood induced by low-rank matrix factorization, even in the absence of block structures~\citep{burer2005local,sun2016guaranteed,cui2026convexity}.
Block structures further complicate the problem because different blocks affect the likelihood in distinct ways, with these effects depending on their sizes and configurations as well as the associated identifiability constraints.
Moreover, the considered framework allows nonlinear observation models, for which techniques tailored to linear models are no longer applicable~\citep{bai2003inferential,bai2012statistical,fan2017sufficient}. Finally, in many settings, both the number of observed variables and the sample can be large, leading to a high-dimensional estimation problem. The complex constraints, non-convexity, nonlinearity, and increasing dimensionality together make establishing consistency especially challenging.


\item\label{prblm_infe} 
A further related open problem concerns how to conduct statistical inference for the model parameters and latent factors. Under the block structure and identifiability constraints, each loading parameter characterizes the relationship between an observed variable and a common or block-specific latent dimension, and the latent factor reflects a subject-specific characteristic represented by that dimension. Inference for these quantities quantifies the associated uncertainty, providing statistical evidence for scientific discovery. 
Relatedly, \citet{cui2025identifiability} developed inferential theory for generalized latent factor models under a restricted class of identifiability constraints requiring minimal sparsity. Their analysis does not cover the present setting, in which the constraints interact intricately with the block structure, fundamentally altering the geometry of the nonconvex likelihood and thereby requiring new tools for estimation and inference.

\end{enumerate}

In this paper, we address all the problems outlined above: \ref{prblm_id} model identifiability, \ref{prblm_mle} consistency of the constrained maximum likelihood estimation, and \ref{prblm_infe} related statistical inference, for block structured latent variable models. Our main results are summarized as follows. First, to address problem~\ref{prblm_id}, we establish identifiability conditions for the block structured latent variable models. The proposed conditions are sufficiently flexible to accommodate a wide range of block structures commonly encountered in practice, including those considered in existing studies \citep{chen2020structured,fang2021identifiability,qiao2025exploratory}. See Examples~\ref{example_simple}--\ref{example_multi} in Section~\ref{subsec_setup} for more discussions.

Another main contribution is a novel Lagrangian-type reformulation of the constrained non-convex likelihood, developed to address problems~\ref{prblm_mle} and~\ref{prblm_infe}. Unlike traditional Lagrangian methods, it is designed so that its global optimum coincides exactly with that of the original constrained non-convex problem, without introducing any bias. 
Moreover, our analysis identifies the key blocks that determine the smallest eigenvalue of the Hessian of the Lagrangian function, thereby precisely characterizing how the block structure governs its local curvature. Based on this, we establish that, when the block structure is sufficiently informative, the resulting objective admits strong convexity in a neighborhood of the true parameters. We further show that this local minimizer is globally optimal and coincides with the constrained maximum likelihood estimator. Compared with the original problem, our Lagrangian-type formulation possesses more tractable properties, thereby enabling further statistical analysis of the maximum likelihood estimator.

Building on this Lagrangian-type formulation, we fully address problems~\ref{prblm_mle} and~\ref{prblm_infe} by developing a comprehensive statistical framework for the constrained maximum likelihood estimator. In particular, 
we establish non-asymptotic $\ell_2$- and $\ell_{\infty}$-error bounds (Theorem~\ref{thm_precise_consis}) and asymptotic distributions for the estimated latent factors and loading parameters (Theorem~\ref{thm_asymptotic_normality}).  
Across a wide range of asymptotic regimes, we show that the established $\ell_2$-error bounds for latent factors and loading parameters achieve the oracle rate. For the asymptotic distributions, we show that our established asymptotic variances are optimal in the sense of achieving the Cramer-Rao information lower bounds, enabling valid and efficient inference for both latent factors and loading parameters. Moreover, we develop an efficient first-order algorithm for computing the estimator under the nonlinear and non-convex constraints. We prove its linear convergence and show that, after sufficiently many iterations, the algorithmic output has the same asymptotic distributions as the exact estimator, bridging the gap between statistical theory and computation.

The rest of the paper is organized as follows. Section~\ref{sec:setup} introduces the block structured latent variable models and establishes their identifiability conditions.
Section~\ref{sec_estimation} examines the properties of the constrained maximum likelihood estimation viewed as an optimization problem. Section~\ref{sec_mainres} provides statistical guarantees for the estimator, including non-asymptotic error bounds and asymptotic normality. Section~\ref{sec_gdalgorithm} provides a first-order algorithm to obtain the estimator with theoretical guarantees. Section~\ref{sec:simu} uses simulation studies to illustrate the theoretical results.  We apply the proposed estimation framework to analyze an educational assessment dataset in Section~\ref{sec:data}. The paper concludes with final remarks in Section~\ref{sec:end}. The Supplementary Material presents all technical derivations, a computation framework and additional numerical results.

\paragraph*{Notation}
We use the following notation throughout. For two sequences $a_n$ and $b_n$, we denote $a_n\asymp b_n$ if their ratio is bounded above and below by positive constants.
For any integer $N$, let $[N] = \{1, \dots, N\}$. For a set $\cS$, let $|\cS|$ denote its cardinality. For two sets $\cS$ and $\cS^{\prime}$, let $\cS\cup\cS^{\prime}$ and $\cS\cap\cS^{\prime}$ denote their union and intersection, respectively, and let $\cS\setminus \cS^{\prime}$ denote set difference.
 For a matrix $\Ab \in \mathbb{R}^{n \times m}$ and index sets $\cS_1\subseteq[n]$, $\cS_2 \subseteq [m]$, let $\Ab_{[\cS_1,\cS_2]}\in\RR^{|\cS_1|\times|\cS_2|}$ denote the submatrix formed by the corresponding rows and columns. We abbreviate $\Ab_{[[n],\cS_2]}$ and $\Ab_{[\cS_1,[m]]}$ as $\Ab_{[,\cS_2]}$ and $\Ab_{[\cS_1,]}$, respectively. We use $\Ab_{[-\cS_1,-\cS_2]}$ for the submatrix obtained by removing these rows and columns.
For a vector $\bx\in\mathbb{R}^n$, define $\|\bx\| = (\sum_{i=1}^nx_i^2)^{1/2}$ and $\|\bx\|_{\infty} = \max_{1\le i\le n}|x_i|$. For a matrix $\Ab = (a_{ij})\in\RR^{n\times m}$ let $\ba_i$ denote its $i$th row and define $\| \Ab\|_{F} = (\sum_{i=1}^n\sum_{j=1}^ma_{ij}^2)^{1/2}$, $\| \Ab\|_{\max} = \max_{1\le i\le n,1\le j\le m} |a_{ij}|$, and $\|\Ab\|_{\infty}= \max_{i=1,\ldots,n} \sum_{j=1}^m |a_{ij}|$.
For any square matrix $\Ab\in \mathbb{R}^{n \times n}$, denote $\lambda_{\min}(\Ab)$ and $\lambda_{\max}(\Ab)$ as the smallest and largest eigenvalues of $\Ab$, respectively. Let $\Ib_n$ denote the $n \times n$ identity matrix, $\zero_n$ the $n$-dimensional zero vector, $\zero_{n\times m}$ the $n \times m$ zero matrix, and $\one_n$ the $n$-dimensional all-one vector.

\section{Block Structured Latent Variable Model}\label{sec:setup}
\subsection{Model Setup}\label{subsec_setup}
Consider the setting with $N$ individuals, each responding to $J$ items (manifest variables). For each individual $i\in[N]$, let $\bbf_i=(f_{i1},\cdots,f_{iD})^\T$ be a $D$-dimensional latent factor representing the individual's latent features, and let latent factor matrix be $\Fb = (\bbf_1,\dots,\bbf_N)^\T\in\RR^{N\times D}$. Here the number of latent dimensions $D$ is assumed to be known and fixed. 
The $J$ items are partitioned into $K$ non-overlapping blocks with index sets $\cJ_1,\dots,\cJ_K$, such that, for any $k\neq r$, $\cJ_k\cap \cJ_r= \varnothing$ and $\cup_{k=1}^K\cJ_k = [J]$. Within each block $k$, the item responses depend on a non-empty subset of the latent factors indexed by $\cA_k\subseteq[D]$. 
Specifically, for any item $j \in \cJ_k$, the response of individual $i$, denoted by $Y_{ij}$, follows a general distribution with density or mass function
\begin{equation}
    g_{ij}(Y_{ij} = y_{ij}\mid w_{ij(k)}),~\text{ where }~ w_{ij(k)} ~:= ~\beta_j + \sum_{d\in\cA_k}\lambda_{jd}f_{id},\label{eq_factor_m}
\end{equation}where $y_{ij}$ denotes the realization of $Y_{ij}$ and $g_{ij}(\cdot\mid\cdot)$ is a given function allowed to vary across $i$ and $j$.
Here, $\lambda_{jd}$ is the loading parameter that captures the influence of the $d$-th latent factor on item $j$, while $\beta_j$ is an intercept. The formulation \eqref{eq_factor_m} enables modeling different data types through suitably chosen function $g_{ij}(\cdot\mid\cdot)$. For instance, continuous responses can be modeled via $\log g_{ij}(y|w)\propto -(y-w)^2$ and binary responses can be modeled using either a logistic link $g_{ij}(y|w) = {\exp( w)^y}/{(1 + \exp( w))}$ or probit link $g_{ij}(y|w) = \Phi( w)^y (1 - \Phi( w))^{1-y}$ where $\Phi(\cdot)$ is the cumulative density function of standard normal distribution. We use a binary indicator matrix $\cQ = (q_{jd})\in\{0,1\}^{J\times D}$ to characterize the block structure, where $q_{jd} = 1$ if there exists some $k\in[K]$ such that $j\in\cJ_k$ and $d\in\cA_k$; and $q_{jd} = 0$ otherwise. 
Let $\bLambda_{\cQ}$ denote the loading matrix constrained by the block structure $\cQ$, whose $(j,d)$-th entry equals the free parameter $\lambda_{jd}$ if $q_{jd} = 1$ and equals zero otherwise. Moreover, let $\bbeta = (\beta_1,\dots,\beta_J)^\T$. The latent factors $\Fb$, loading parameters $\bLambda_{\cQ}$, and intercepts $\bbeta$ are all treated as model parameters of \eqref{eq_factor_m}, denoted collectively by $(\Fb,\bLambda_{\cQ},\bbeta)$.

Following common setups in the literature~\citep{andersonintroduction,skrondal2004generalized}, we assume responses $\{Y_{ij}\}_{i\in[N],j\in[J]}$ are conditionally independent given $\big\{w_{ij(k)}\}_{i\in[N],j\in\cJ_k,k\in[K]}$. To fix the location and scale of the latent factors, we adopt the normalization \begin{equation}N^{-1}\sum_{i=1}^Nf_{id} = 0~\text{ and }~N^{-1}\sum_{i=1}^Nf_{id}^2 = 1,\;\text{ for each }d\in[D].\label{eq_cond_normal}\end{equation} In addition, the loading matrix in block $k$, denoted by $\bLambda_k := \big(\lambda_{jd}\big)_{j\in \cJ_k,d\in\cA_k}$, is assumed to have full rank, that is, $\mathrm{rank}(\bLambda_k) = |\cA_k|$ for every $k\in[K]$.

Besides the block structure, another commonly imposed constraint is orthogonality among certain latent factors to ensure that they are uncorrelated, motivated by practical considerations. For example, in exploratory factor analysis, all factors are assumed to be orthogonal~\citep{bartholomew2011latent,chen2019joint}; in data integration, common factors are assumed to be orthogonal to source-specific ones~\citep{lock2013joint,feng2018angle}. These constraints reflect the intended interpretation of the latent dimensions in applications.
We encode these constraints using a symmetric binary indicator matrix $\cM = (m_{rl})_{D\times D}$ as follows. For any $r,l\in[D]$,
\begin{equation}
         m_{rl} ~=~\left\{\begin{aligned}&0\text{, if }\Fb_{[,r]}\text{ and }\Fb_{[,l]}\text{ are constrained to be orthogonal, i.e., } N^{-1}{\Fb_{[,r]}}^\T{\Fb_{[,l]}}=0;\\&1 \text{, otherwise.}\end{aligned}\right.\label{eq_cond2}
    \end{equation}
For instance, in exploratory factor analysis, $\cM = \Ib_D$; in the data integration setting, $\cM_{[\cS_1, \cS_2]} =\cM_{[\cS_2, \cS_1]}^\T = \zero_{|\cS_1|\times |\cS_2|}$, where $\cS_1$ and $\cS_2$ denote the index sets of common and source-specific latent factors, respectively.

In the following, we present four commonly used examples of block structure designs, with their graphical illustrations shown in Figure~\ref{fig:illus}. 

{\spacingset{1.19}\begin{figure}[h]
\centering    
        \includegraphics[width=2.7in]{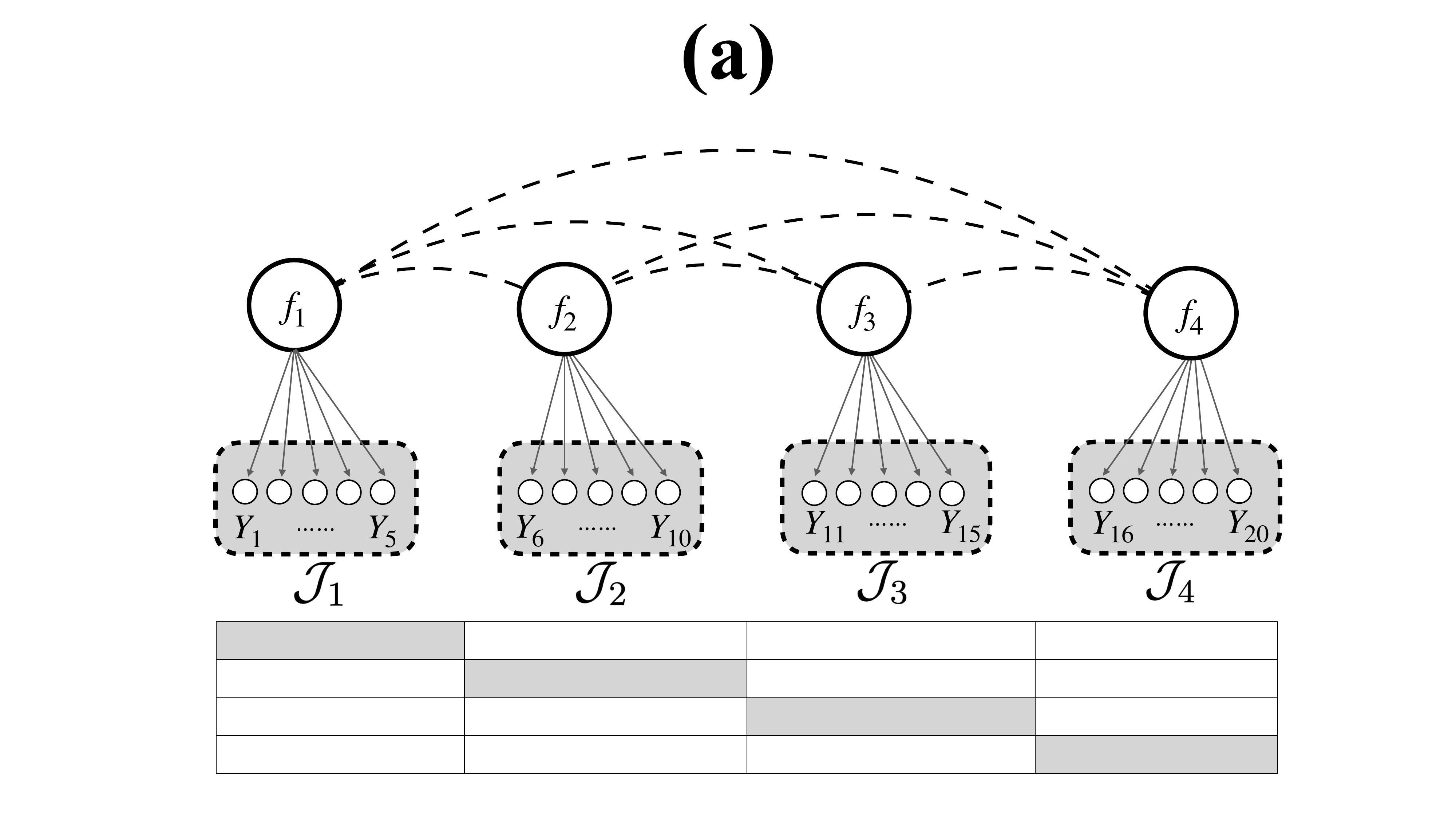}
        \includegraphics[width=2.7in]{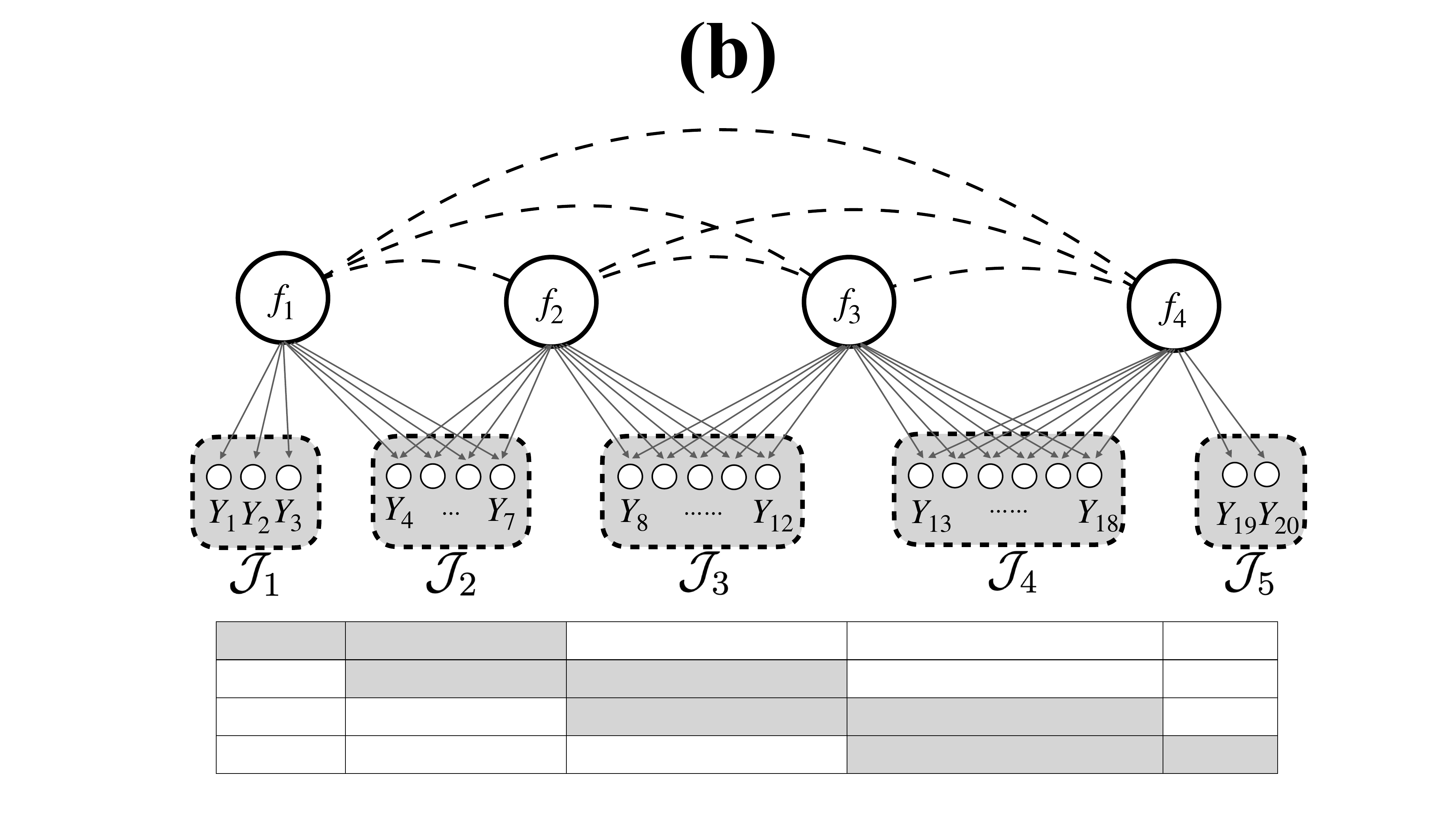}\\
        \includegraphics[width=2.7in]{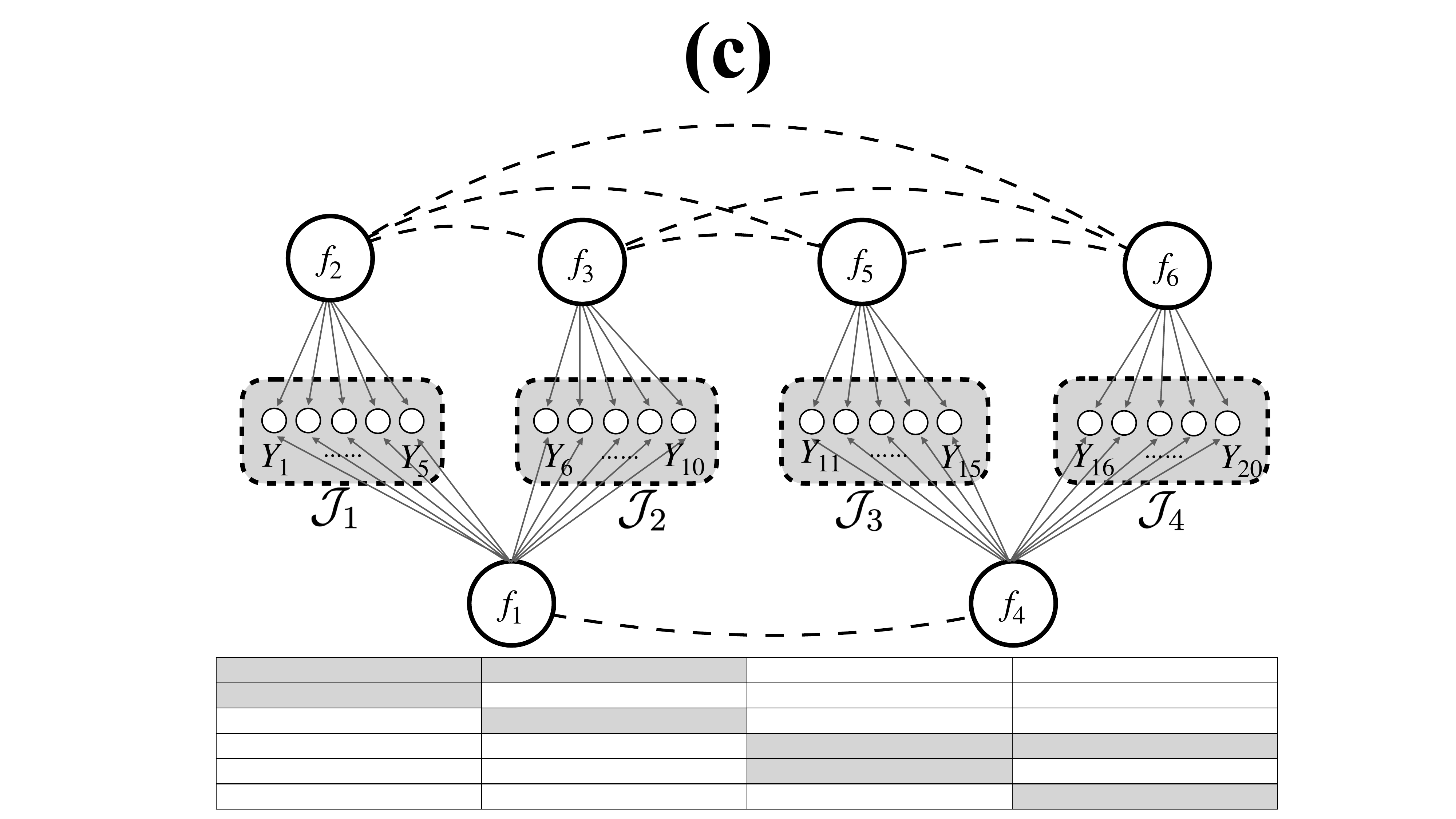}
        \includegraphics[width=2.7in]{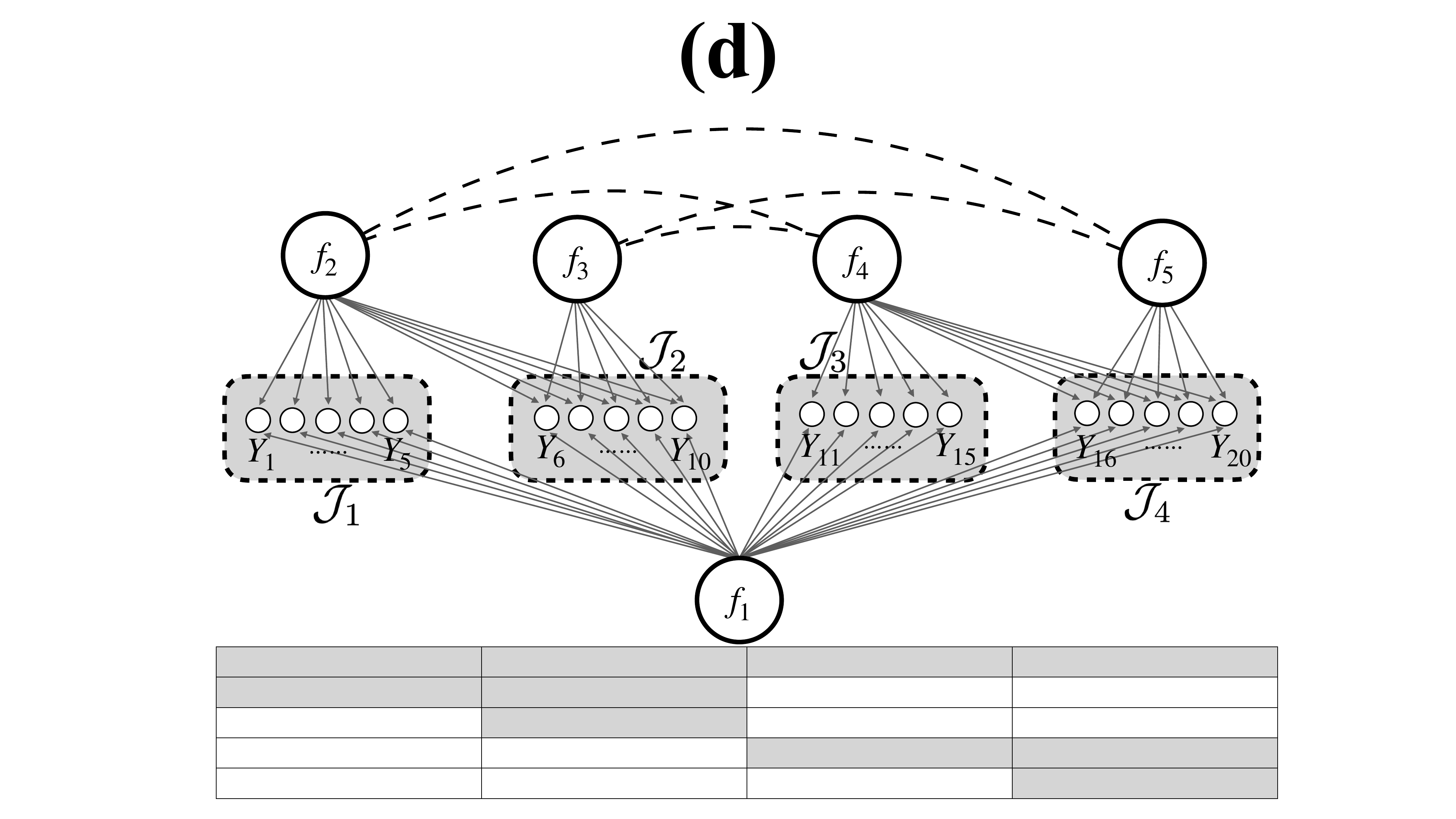}
        \caption{{\small Four examples of block structure designs in (a)--(d), with graphical representations shown in the upper panel and the corresponding matrix $\cQ^\T$ in the lower panel. In each graphical representation, dashed boxes denote blocks, and the nodes within them correspond to item response variables. Directed edges from latent factors to response variables indicate influence (absence of an edge implies independence). Undirected dashed edges among latent factors represent potential correlations. In $\cQ^\T$, black cells correspond to entries equal to $1$s, and white cells correspond to entries equal to $0$s.}}
    \label{fig:illus}
\end{figure}
}

\begin{example}\label{example_simple}\it One of the most popular block designs is the simple structure~\citep{reckase2009,trendafilov2014simple}, where the responses to the items in the same block are assumed to be influenced by one latent factor. This structure enhances factor differentiation and has been widely adopted for its interpretability, such as in Big-Five personality measurement~\citep{johnson2014measuring}. Figure~\ref{fig:illus}(a) provides one graphical illustration, where there are four latent factors $(f_{ 1}$--$f_{ 4})$ and four blocks $(\cJ_1$--$\cJ_4)$. For each $k\in[4]$, the items in block $\cJ_k$ are related to latent factor $f_{ k}$ only. 
In this setting, the factors are allowed to exhibit arbitrary correlations, which are often of substantive scientific interest~\citep{johnson2014measuring,chen2020structured}.
\end{example}

\begin{example}\it Moving beyond simple structures, block designs in which responses can be associated with multiple latent factors are also frequently encountered. These designs allow for more complex relationships between items' responses and latent factors, offering enhanced flexibility. It is worth mentioning that these structures have many applications in model-based clustering, especially in cases involving overlapping clusters~\citep{fraley2002model,bunea2020model,bing2020adaptive}. 
Figure~\ref{fig:illus}(b) shows an example with four clusters corresponding to four latent factors $(f_{ 1}$--$f_{ 4})$, where $\cJ_1\cup\cJ_2$ forms one cluster related to $f_{ 1}$ and $\cJ_2\cup\cJ_3$ forms another related to $f_{ 2}$, overlapping at $\cJ_2$ 
\end{example}

\begin{example}\label{example_bi_factor} \it  Bifactor model is another class of widely used block structured latent variable models~\citep{cai2010two,rodriguez2016evaluating,fang2021identifiability}. In this framework, latent factors are categorized into general factors, which capture pervasive constructs across multiple blocks, and domain-specific factors, which account for group-specific variation within a particular block. 
This modeling approach is particularly useful in psychological and educational assessments~\citep{cai2010two}, where it helps disentangle shared variance from specific group-level influences. Figure~\ref{fig:illus}(c) provides an illustration, where $f_{ 1}$ serves as the general factor for items in $\cJ_1$ and $\cJ_2$, $f_{ 2}$ acts as the general factor for items in $\cJ_3$ and $\cJ_4$, and $f_{ 3}$–$f_{ 6}$ represent domain-specific factors. In this setup, orthogonality constraints are often imposed between the general factors and domain-specific ones~\citep{cai2010two,fang2021identifiability}.
\end{example}

\begin{example}\label{example_multi}\it
Block structured latent variable models have also been applied in data integration~\citep{lock2013joint,feng2018angle,de2019multi}. Different from the bi-factor models, these models allow items' responses to be associated with more than two factors, resulting in a more complex structure. This framework is particularly effective for extracting common features from multiple data sources while distinguishing source-specific variations. One setup is provided in Figure~\ref{fig:illus}(d), which shows how data from two different sources can be integrated.  
Specifically, items in $\cJ_1$ and $\cJ_2$ come from one source, and items in $\cJ_3$ and $\cJ_4$ come from another, and both sources share a common factor $f_{ 1}$. Moreover, within items in $\cJ_1$ and $\cJ_2$, the subsets $\cJ_1$ and $\cJ_2$ can be viewed as arising from two sub-sources that share a factor $f_{ 2}$. A similar structure is exhibited by items from $\cJ_3$ and $\cJ_4$. Similar to Example~\ref{example_bi_factor}, orthogonality constraints between the general factors and domain-specific ones are often imposed in practice~\citep{lock2013joint, feng2018angle}.
\end{example}

\subsection{Model Identifiability}\label{sec_str_factor_model}
Before discussing our estimation method, we first study the model identifiability. A key indeterminacy in our model is that distinct parameter sets $(\Fb,\bLambda_{\cQ},\bbeta)$ and $(\bar\Fb,\bar\bLambda_{\cQ},\bar\bbeta)$ may satisfy $w_{ij(k)} = \beta_j + \sum_{d\in\cA_k}\lambda_{jd}f_{id}  = \bar\beta_j + \sum_{d\in\cA_k}\bar\lambda_{jd}\bar f_{id}$ for every $i\in[N]$, $j\in\cJ_k$, and $k\in[K]$, thereby leading to the same distribution for $Y_{ij}$. We formalize this specific identifiability issue as follows.
\begin{definition}[Structural Identifiability]\label{def_id}
Consider the model setup in Section~\ref{subsec_setup} with given block structure $\cQ$ and orthogonality constraint $\cM$. We say parameters $(\Fb,\bLambda_{\cQ},\bbeta)$ are structurally identifiable if the following holds: for any two parameter sets $(\Fb,\bLambda_{\cQ},\bbeta)$ and $(\bar\Fb,\bar\bLambda_{\cQ},\bar\bbeta)$ satisfying $\cQ$ and $\cM$, 
if \begin{equation*}\beta_j + \sum_{d\in\cA_k}\lambda_{jd}f_{id} ~ = ~ \bar\beta_j + \sum_{d\in\cA_k}\bar\lambda_{jd}\bar f_{id}\;\;\text{ for all }i\in[N],\;j\in\cJ_k\text{, and }k\in[K],\end{equation*} then $(\Fb,\bLambda_{\cQ},\bbeta)$ and $(\bar\Fb,\bar\bLambda_{\cQ},\bar\bbeta)$ must be equivalent up to a sign change of the latent factors and loading parameters, that is, there exists some diagonal matrix $\Gb$ with diagonal entries in $\{\pm 1\}$ such that $\bar\bLambda_{\cQ} = \bLambda_{\cQ}\Gb$, $\bar\Fb=\Fb\Gb$, and $\bar\bbeta = \bbeta$. 

\end{definition}

In Definition~\ref{def_id}, we permit column sign changes on the latent factors and loading parameters. This indeterminacy can be readily resolved by specifying that, for certain items, the latent factors exert a positive or negative influence on their responses. These specifications are often determined based on the interpretations of the latent factors.
\begin{remark}[Relation to Full Model Identifiability] \label{rem_full_id} 
Let $\mathbb P_{(\Fb,\bLambda_{\cQ},\bbeta)}$ denote the distribution of all observed responses under the considered model with $(\Fb,\bLambda_{\cQ},\bbeta)$. Full model identifiability requires
\begin{equation*}
\mathbb P_{(\Fb,\bLambda_{\cQ},\bbeta)} = \mathbb P_{(\bar\Fb,\bar\bLambda_{\cQ},\bar\bbeta)}\quad\implies\quad
(\Fb,\bLambda_{\cQ},\bbeta) =(\bar\Fb,\bar\bLambda_{\cQ},\bar\bbeta).\end{equation*}
Given structural identifiability as in Definition~\ref{def_id}, this property follows if the response distribution is identifiable from $w$, in the sense that
\begin{equation}
g_{ij}(\cdot\mid w)=g_{ij}(\cdot\mid \bar w) \quad\implies\quad w=\bar w \label{eq_response_identifiability}
\end{equation}
for every $i\in[N]$ and $j\in[J]$. Condition \eqref{eq_response_identifiability} can be ensured by the concavity imposed in Assumption~\ref{assumption_smoothness} below; see Section~\ref{supp_sec_model_id} of the Supplementary Material for verification. Hence, structural identifiability, together with \eqref{eq_response_identifiability} and the sign conventions described above, yields full model identifiability. Our analysis thus focuses on structural identifiability as it requires more delicate analysis and informs model interpretation, consistent with related treatments of nonlinear latent factor models~\citep{chen2020structured,cui2025identifiability}.
\end{remark}

Our goal is to determine, given an arbitrary pair $(\cM,\cQ)$, whether the parameters satisfying these constraints are structurally identifiable in the sense of Definition~\ref{def_id}. This problem is challenging under the joint presence of $\cM$, which involves nonlinear relations among the latent factors, and $\cQ$, which imposes entry-wise restrictions on the loading parameters. These constraints together define a nonlinear parameter space, within which establishing structural identifiability is highly nontrivial. We illustrate this issue via the following example.

\begin{example}\label{example_non_id}\it
    Consider the setup in Figure~\ref{fig:illus}(d) with $K=4$, $D=5$, and $J=20$. The block structure $\cQ$ is specified as in the figure, and the orthogonality constraint $\cM$ is \begin{equation}
    \cM=\begingroup\setlength{\arraycolsep}{5pt}
\renewcommand{\arraystretch}{0.4}
\left(
\begin{array}{c:c c c c} 
1 & 0 & 0 & 0 & 0 \\\hdashline  
0 & 1 & 0 & 1 & 1 \\
0 & 0 & 1 & 1 & 1 \\
0 & 1 & 1 & 1 & 0 \\
0 & 1 & 1 & 0 & 1
\end{array}
\right)
\endgroup.\label{eq_exampleof_cM}
\end{equation}Under this setup, the model parameters are structurally identifiable, as shown by Theorem~\ref{thm_id_iff} (see Example~\ref{example_id}).
However, relaxing any of the orthogonality constraints in $\cM$ makes the model parameters no longer structurally identifiable.
Consider an alternative orthogonality constraint $\cM^{\prime}$ defined by $\cM^{\prime} = \cM$ except for $\cM^{\prime}_{[2,3]} = \cM^{\prime}_{[3,2]} = 1$. Let $(\Fb,\bLambda_{\cQ},\bbeta)$ denote the model parameters. Let $\rho_{23} = N^{-1}\Fb_{[,2]}{}^\T\Fb_{[,3]}$. Given any scalars $x,y$ satisfying $x^2 + 2xy\rho_{23}+ y^2  =1$, define $(\bar\Fb,\bar\bLambda_\cQ,\bar\bbeta)$ as follows:
    \begin{equation*}
        \text{ for }j\in\cJ_2\cup\cJ_3,\; \bar\lambda_{j2} = \lambda_{j2} - \frac{y}{x}\lambda_{j3};\;\text{ for }j\in\cJ_3\text{, }\bar\lambda_{j3} = \frac{1}{x}\lambda_{j3};\text{ for }i\in[N],\;\bar f_{i3} =  xf_{i3}  + yf_{i2};
    \end{equation*}
    and all other parameters are set equal to those in $(\Fb,\bLambda_{\cQ},\bbeta)$. It can be checked that $(i)$ $\beta_j + \sum_{d\in\cA_k}\lambda_{jd}f_{id}  = \bar\beta_j + \sum_{d\in\cA_k}\bar\lambda_{jd}\bar f_{id}~$ for every $i\in[N]$, $j\in\cJ_k$, and $k\in[K]$; $(ii)$ parameters $(\bar\Fb,\bar\bLambda_\cQ,\bar\bbeta)$ satisfy $\cQ$ and $\cM^{\prime}$, as well as the constraints in \eqref{eq_cond_normal}. Thus, the parameters are not structurally identifiable under the considered  $\cM^{\prime}$ and $\cQ$.
    \end{example}

Example~\ref{example_non_id} illustrates that not all combinations of orthogonality constraint $\cM$ and block structure $\cQ$ guarantee structural identifiability, and the resulting indeterminacy is case-specific.
To determine structural identifiability under general $\cM$ and $\cQ$, we introduce a novel condition that implicitly confines $\cM$ and $\cQ$. 
We begin with the following definition. 
\begin{definition}
    Let $\mathscr{A}$ be a collection of subsets of $[D]$ and $\cM\in\{0,1\}^{D\times D}$. The $\cM$-extended $\pi$-system generated by $\mathscr{A}$, denoted by $\pi_{\cM}(\mathscr{A})$, is defined as the smallest collection satisfying the following conditions:
    \begin{enumerate}[label=$(\roman*)$]
    \item $\mathscr{A}\subseteq \pi_\cM(\mathscr{A})$;
    \item {For any positive integer $n$ and any $\{\cS_i\}_{i=1}^n \subseteq \pi_\cM(\mathscr{A})$,  
    $\cap_{i=1}^n \cS_i\in \pi_\cM(\mathscr{A})$. Furthermore, if  $\cS_0\in\piSyst$ satisfies $\cM_{[\cup_{r=1}^n \cS_r,\, \cS_0 \setminus (\cup_{r=1}^n \cS_r)]} = 0$, then $\bigcap_{r=1}^n \big(\cS_0 \setminus \cS_r\big) \in \pi_\cM(\mathscr{A})$.}
\end{enumerate}
\end{definition}
This $\cM$-extended $\pi$-system serves as a key tool to link the orthogonality constraint $\cM$ and block structure $\cQ$. In particular, we introduce the following condition.

\vspace{0.2cm}
\noindent\textbf{$\cM$-$\cQ$ Condition.} For any $r\in[D]$, $\{r\}\in\pi_{\cM}(\mathscr{A}_{\cQ})$, where $\mathscr{A}_{\cQ} :=\{\cA_k\}_{k\in[K]}$.
\vspace{0.2cm}

The collection $\piSyst$ integrates information from both the orthogonality constraint $\cM$ and block structure $\cQ$ in a carefully designed manner, encoding important structural relationships to determine the identifiability. In particular, it is designed to contain index sets that are distinguishable under these constraints, and the requirement that every singleton set $\{r\}$ belongs to $\piSyst$ ensures sufficient discrepancies between latent factors.
One case excluded by this condition is when two latent factors share identical block memberships, i.e., they either both appear or both do not appear in every $\cA_{k}$ for each $k\in[K]$. In such a scenario, no set $\cA\in\piSyst$ could include one latent factor while excluding the other, regardless of the $\cM$ imposed, thereby violating the $\cMQ$ Condition. 

The next theorem shows the sufficiency and necessity of the $\cMQ$ Condition for structural identifiability.
\begin{theorem}[Identifiability]\label{thm_id_iff}\it
 Consider the model specified in Section~\ref{sec:setup} with orthogonality {constraint} $\cM$ and block structure $\cQ$. The corresponding model parameters $(\Fb,\bLambda_{\cQ},\bbeta)$ are structurally identifiable if and only if the $\cM$-$\cQ$ Condition holds.
\end{theorem}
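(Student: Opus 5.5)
The plan is to reduce structural identifiability to a statement about column spaces of $\Fb$ and $\bar\Fb$, and then to propagate that statement through the construction of $\pi_\cM(\mathscr{A}_\cQ)$. For $\cS\subseteq[D]$ write $V_\cS=\mathrm{span}\{\Fb_{[,d]}:d\in\cS\}$ and define $\bar V_\cS$ analogously for $\bar\Fb$. I will assume, as the setting implicitly needs, that $N^{-1}\Fb^\T\Fb$ is nonsingular; otherwise even the block-wise factorization fails.

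\emph{Sufficiency: reduction to column spaces.}
\begin{enumerate}[label=(\arabic*)]
\item Fix a block $k$ and average the identity in Definition~\ref{def_id} over $i$. Because the columns of $\Fb$ and $\bar\Fb$ are centered by \eqref{eq_cond_normal}, this gives $\beta_j=\bar\beta_j$.
\item What remains is $\Fb_{[,\cA_k]}\bLambda_k^\T=\bar\Fb_{[,\cA_k]}\bar\bLambda_k^\T$. Since $\bLambda_k$ and $\bar\bLambda_k$ both have full column rank, this gives $V_{\cA_k}=\bar V_{\cA_k}$ for every $k$.
\end{enumerate}

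\emph{Sufficiency: the key claim.} I will show that $V_\cS=\bar V_\cS$ for every $\cS\in\pi_\cM(\mathscr{A}_\cQ)$, by induction along the generating rules. The family of sets with this property contains $\mathscr{A}_\cQ$ by step (2), so it suffices to show it is closed under both rules.
\begin{enumerate}[label=(\arabic*),resume]
\item \emph{Intersections.} Linear independence of the columns of $\Fb$ gives $V_{\cap_i\cS_i}=\cap_iV_{\cS_i}$, and the same holds for $\bar\Fb$ provided $\bar\Fb$ also has independent columns. That follows from $V_{\cup_k\cA_k}=\bar V_{\cup_k\cA_k}$: if some factor lies in no $\cA_k$, it cannot be a singleton of $\pi_\cM(\mathscr{A}_\cQ)$, so under the $\cMQ$ Condition $\cup_k\cA_k=[D]$ and both spaces are $D$-dimensional.
\item \emph{The difference rule.} Let $U=\cup_r\cS_r$, so the new set is $\cap_r(\cS_0\setminus\cS_r)=\cS_0\setminus U$. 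Since $V_U=\sum_rV_{\cS_r}$, we have $V_U=\bar V_U$.
\item The hypothesis $\cM_{[U,\cS_0\setminus U]}=0$ gives $V_{\cS_0\setminus U}\perp V_U$, and likewise for $\bar\Fb$. Together with $V_{\cS_0}=V_{\cS_0\cap U}\oplus V_{\cS_0\setminus U}$ and $V_{\cS_0\cap U}\subseteq V_U$, this yields $V_{\cS_0\setminus U}=V_{\cS_0}\cap V_U^{\perp}$.
\item The same identity holds for $\bar\Fb$, so $\bar V_{\cS_0\setminus U}=V_{\cS_0\setminus U}$.
\end{enumerate}

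\emph{Sufficiency: conclusion.}
\begin{enumerate}[label=(\arabic*),resume]
\item Applying the claim to singletons gives $\bar\Fb_{[,d]}\in\mathrm{span}\,\Fb_{[,d]}$ for each $d$. The unit-variance normalization in \eqref{eq_cond_normal} then forces $\bar\Fb=\Fb\Gb$ with $\Gb$ a diagonal matrix of signs.
\item Substituting back into $\Fb_{[,\cA_k]}\bLambda_k^\T=\bar\Fb_{[,\cA_k]}\bar\bLambda_k^\T$ and using the full column rank of $\Fb_{[,\cA_k]}$ gives $\bar\bLambda_\cQ=\bLambda_\cQ\Gb$.
\end{enumerate}

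\emph{Necessity.} Suppose $\{r\}\notin\pi_\cM(\mathscr{A}_\cQ)$, and let $\cT$ be the intersection of all members of $\pi_\cM(\mathscr{A}_\cQ)$ containing $r$. If no member contains $r$, the construction below is simply unconstrained by blocks. Then $\cT\in\pi_\cM(\mathscr{A}_\cQ)$ and $|\cT|\ge2$. Every $\cA_k$ containing $r$ contains $\cT$. Following Example~\ref{example_non_id}, I would set $\bar\Fb_{[,r]}=x\Fb_{[,r]}+\sum_{l\in\cT\setminus\{r\}}y_l\Fb_{[,l]}$ with a small nonzero $y$, keep all other columns, and compensate in the loadings $\bar\lambda_{jl}$ for $j$ in blocks containing $r$.

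For $(\bar\Fb,\bar\bLambda_\cQ,\bar\bbeta)$ to satisfy all constraints, three conditions must hold:
\begin{itemize}
\item the row-block supports are preserved, which holds because $\cT\subseteq\cA_k$ whenever $r\in\cA_k$;
\item $\bar\Fb_{[,r]}$ has unit norm, which is arranged by choosing $x$;
\item $\bar\Fb_{[,r]}\perp\Fb_{[,l]}$ for every $l$ with $m_{rl}=0$.
\end{itemize}

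The main obstacle is the third condition. I must show that a nonzero $y$ exists such that the combination stays orthogonal to all required columns; this is at most $|\{l:m_{rl}=0\}|$ linear equations. The argument is by contradiction using minimality of $\cT$. If the only solution were $y=0$, let $\cS_1=\{l:m_{rl}=0\}\cup\{\text{indices forced}\}$; I would argue that then some difference-rule set $\cT\setminus\cS_1$ lies in the system. That set contains $r$ and is strictly smaller than $\cT$, which contradicts the minimality of $\cT$.

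I would try first to carry this out for generic parameters, for instance $\Fb$ with correlations chosen as freely as $\cM$ allows. Under such a choice the orthogonality equations are nondegenerate exactly when the $\cM$-structure fails to separate $r$ inside $\cT$, and a nontrivial $y$ yields a parameter set equivalent to the original one but not related to it by a sign change.
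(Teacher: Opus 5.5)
Your sufficiency argument is sound. The column-space invariant $V_\cS=\bar V_\cS$ propagates through the intersection and difference rules and, read off on singletons, gives the sign-change conclusion. The full-column-rank assumption on $\Fb$ is genuinely needed there.

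\textbf{The gap is in the necessity direction.} You fix an arbitrary $r$ with $\{r\}\notin\piSyst$ and perturb only the column $\Fb_{[,r]}$. The claimed contradiction does not go through: the difference rule may only remove sets that already belong to $\piSyst$, and $\{l:m_{rl}=0\}$ (with ``forced'' indices) need not be such a set.

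Concretely, take $D=2$, $K=1$, $\cA_1=\{1,2\}$, $\cM=\Ib_2$. Then $\piSyst=\{\{1,2\},\varnothing\}$, the condition fails, and the model is indeed not identifiable: replace $\Fb,\bLambda_1$ by $\Fb\mathbf{R},\bLambda_1\mathbf{R}$ for a $2\times 2$ rotation $\mathbf{R}$. In your scheme, however, with $r=1$ and $\cT=\{1,2\}$:
\begin{itemize}
\item the requirement $\bar\Fb_{[,1]}\perp\Fb_{[,2]}$, together with the imposed $\Fb_{[,1]}\perp\Fb_{[,2]}$, forces $y_2=0$ for every admissible $\Fb$, so no ``generic'' choice of $\Fb$ helps;
\item $\{1\}=\cT\setminus\{2\}$ cannot be produced by the difference rule, because $\{2\}\notin\piSyst$.
\end{itemize}
In fact, here any $\bar\Fb$ that differs from $\Fb$ in a single column, keeps $V_{\cA_1}$ and satisfies $\cM$ is a sign flip. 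So a one-column shear cannot handle factors that share all block memberships and are mutually constrained to be orthogonal. A transformation mixing several columns is needed, and $r$ has to be chosen rather than taken arbitrarily.

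\textbf{A repair.} Take $\Fb$ centered with $N^{-1}\Fb^\T\Fb=\Ib_D$, which is admissible for every $\cM$. Look for $\bar\Fb=\Fb\Ab$ with $\Ab_{[[D]\setminus\cA_k,\cA_k]}=\zero$ for all $k$; the loadings then follow from $\bar\bLambda_k=\bLambda_k\{(\Ab_{[\cA_k,\cA_k]})^{-1}\}^\T$. A factor lying in no $\cA_k$ can simply be replaced by another centered, unit-variance vector orthogonal to all columns of $\Fb$ (possible once $N\ge D+2$).

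Otherwise, let $\cT_s$ be the smallest member of $\piSyst$ containing $s$. Pick $\cT$ inclusion-minimal among the non-singleton $\cT_s$, and write $\cT=\cT'\cup U$ with $\cT'=\{s\in\cT:\cT_s=\cT\}\neq\varnothing$. For $u\in U$ we have $\cT_u\subsetneq\cT$, so minimality gives $\{u\}\in\piSyst$. There are two cases.
\begin{itemize}
\item \emph{Case $U\neq\varnothing$.} Applying the difference rule to $\cS_0=\cT$ and the singletons $\{u\}$, $u\in U$, would place $\cT'$ in $\piSyst$ unless $\cM_{[U,\cT']}\neq\zero$. That would contradict $\cT_s=\cT$ for $s\in\cT'$. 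Hence $m_{tu}=1$ for some $t\in\cT'$, $u\in U$. Your shear $\bar\Fb_{[,t]}=x\Fb_{[,t]}+y\Fb_{[,u]}$, with $x^2+y^2=1$ and $y\neq 0$, is then admissible: every block containing $t$ contains $\cT\ni u$, and the only new nonzero off-diagonal entry of $N^{-1}\bar\Fb^\T\bar\Fb$ sits at $(t,u)$.
\item \emph{Case $U=\varnothing$.} Each $\cA_k$ contains either all of $\cT$ or none of it. A planar rotation of two columns indexed by $\cT$ preserves every $V_{\cA_k}$ and keeps $N^{-1}\bar\Fb^\T\bar\Fb=\Ib_D$.
\end{itemize}
Either way you obtain an equivalent parameter set that is not a sign change of the original, which is what necessity requires.
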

Theorem~\ref{thm_id_iff} determines when a given block structure $\cQ$ and orthogonality constraint $\cM$ ensure structural identifiability. The theorem can also be applied in other practical scenarios: given a specified block structure $\cQ$, it allows practitioners to determine the minimal orthogonality constraint that should be imposed for identifiability; conversely, given pre-specified orthogonality constraint $\cM$ provided by practitioners, it offers a guideline for designing the block structures $\cQ$ under which the model parameters are identifiable. 
\begin{example}\label{example_id}
    \it
    The $\cMQ$ Condition is satisfied for setups (a)–(d) of Figure~\ref{fig:illus} and thus, by Theorem~\ref{thm_id_iff}, the model parameters are structurally identifiable for these setups. Below, we verify the $\cMQ$ Condition for case (d), and arguments for cases (a)--(c) are left to Section~\ref{sec_continue_remark_id_cM} 
    of the Supplementary Material.
    In Figure~\ref{fig:illus}(d), $\mathscr{A}_{\cQ} = \big\{\{1,2\},\{1,2,3\},\{1,4\},\{1,4,5\}\big\}$; $\cM$ is given in \eqref{eq_exampleof_cM}. We can then verify that each singleton belongs to $\piSyst$: $\{1\} = \{1,2\}\cap \{1,4\}$; $\{2\} = \{1,2\}\setminus\{1\}$ with $\cM_{[1,2]} = 0$; $\{3\} = \{1,2,3\}\setminus\{1,2\}$ with $\cM_{[\{1,2\},3]} = \zero_2$; and similarly for $\{4\},\{5\}\in\piSyst$. By contrast, for $\cM^{\prime}$ in Example~\ref{example_non_id}, we observe that $\{3\}\notin \pi_{\cM^{\prime}}(\mathscr{A}_{\cQ})$, and thus the $\cMQ$ Condition is not satisfied for this $\cM^{\prime}$ and $\cQ$ in Figure~\ref{fig:illus}(d). 
    
\end{example}
 
\begin{remark}\label{remark_minimal_id}\it
Given $\cM$ and $\cQ$, we say that the $\cMQ$ Condition is satisfied minimally if the constraint $\cM$ cannot be further relaxed, i.e., changing any $m_{rl}$ from 0 to 1 would cause the $\cMQ$ Condition to fail, such as the situation discussed in Example~\ref{example_non_id}. To streamline our analysis and avoid redundant conditions, we focus on this minimal identifiable scenario and, for simplicity, refer to the $\cMQ$ Condition as being satisfied when it is satisfied in this minimal sense.
\end{remark} 


\section{Local Convexity and Global Optimality of Log-likelihood Function}\label{sec_estimation}

Now we study the maximum likelihood estimation under the orthogonality constraint $\cM$ and the block structure $\cQ$. 
The negative log-likelihood function as a loss is given as
\begin{equation*}
    L\big(\Fb,\bLambda_{\cQ},\bbeta\mid\Yb\big) ~=~-\sum_{i\in[N]}\sum_{k\in[K]}\sum_{j\in\cJ_k} l_{ij}\Big(\beta_j + \sum_{d\in\cA_k}\lambda_{jd}f_{id}\Big),
\end{equation*}
where $l_{ij}(\cdot) := \log g_{ij}(Y_{ij}\mid \cdot)$. 
Subsequently, the maximum likelihood estimator is given as
\begin{align}
\big(\hat\Fb,\hat\bLambda_{\cQ},\hat\bbeta\big)~ = \mathop{\arg\min}_{(\Fb,\bLambda_{\cQ},\bbeta)\in \Xi_{\cM\text{-}\cQ}(M)} L\left(\Fb,\bLambda_{\cQ},\bbeta\mid\Yb\right). \label{eq_constrained_maximum_likelihood_estimation} 
\end{align}
Here, the feasible set $\Xi_{\cMQ}(M)$ defines a bounded region in which the parameters satisfy block structure $\cQ$, the orthogonality constraint $\cM$, and \eqref{eq_cond_normal}. Specifically, let\begin{equation}\begin{aligned}
    \Xi_{\cM\text{-}\cQ} (M) ~=~ \Big\{(\Fb,\bLambda_{\cQ},\bbeta)\;:&\;\|\Fb\|_{\max}\le M,\;\|\bLambda_{\cQ}\|_{\max}\le M,\;\|\bbeta\|_{\infty}\le M,\\&\,\Fb\text{ satisfy constraint }\cM\text{ and }\eqref{eq_cond_normal} \Big\}\end{aligned}\label{eq_regime_true}
\end{equation}
 for some $M$ that is selected large enough to regulate the
magnitudes of the parameters. 


One difficulty in analyzing \eqref{eq_constrained_maximum_likelihood_estimation} is that, as $J\to\infty$, different blocks may diverge at different rates, exerting distinct effects on the curvature of the log-likelihood function. Later in Theorem~\ref{thm_suff_necce_concavity}, we show that the curvature is determined solely by specific blocks that are critical for satisfying the $\cMQ$ Condition. 



Another notable challenge arises from the constraints in $\Xi_{\cMQ}(M)$. In contrast to the standard orthogonality constraints $N^{-1}\Fb^\T\Fb = \Ib_D$ adopted in existing literature~\citep{bai2012statistical,fan2017sufficient,chen2019joint,cui2025identifiability}, the constraint $\cM$ imposes $N^{-1}{\Fb_{[,r]}}^\T\Fb_{[,l]}=0$ for selected pairs $(r,l)$. 
Whereas $N^{-1}\Fb^\T\Fb = \Ib_D$ restricts $\Fb$ on a Stiefel manifold~\citep{plumbley2005geometrical}, the constraints considered in our setting lead to a manifold with no convenient structure, which makes direct analysis of the optimization problem \eqref{eq_constrained_maximum_likelihood_estimation} nearly infeasible.
To overcome this difficulty, we construct a Lagrangian-type formulation of \eqref{eq_constrained_maximum_likelihood_estimation}, which we later show yields the same optimum as the original problem. Specifically, we introduce a quadratic penalty function as follows: 
\begin{equation*}
    P(\Fb) ~=~ \frac{1}{2}NJ\Big\{\big\|N^{-1}\Fb^\T\one_N\big\|^2 + \big\|\mathrm{diag}(\Mb_{ff})-\Ib_D\big\|_F^2 +\big\|\Mb_{ff} - \Mb_{ff}\circ \cM\big\|_F^2 \Big\},
\end{equation*}
where $\Mb_{ff}:=N^{-1}\Fb^\T\Fb$, $\mathrm{diag}(\Mb_{ff})$ is a diagonal matrix preserving only the diagonal elements in $\Mb_{ff}$, and $\circ$ represents the element-wise (Hadamard) product.
Note that $P(\Fb)=0$ precisely when latent factors $\Fb$ satisfy constraints in \eqref{eq_cond_normal} and $\cM$.  
The term $NJ/2$ serves as a scaling factor to align with the scale of the log-likelihood function. With function $P(\Fb)$, we let $\cL_{\nu}(\Fb,\bLambda_{\cQ},\bbeta) := L(\Fb,\bLambda_{\cQ},\bbeta) + \nu P(\Fb)$ and construct the Lagrangian-type formulation of \eqref{eq_constrained_maximum_likelihood_estimation} as
\begin{equation}
     (\tilde\Fb,\tilde\bLambda_{\cQ},\tilde\bbeta) ~= \mathop{\arg\min}_{\|\Fb\|_{\max}\le M,\|\bLambda_{\cQ}\|_{\max}\le M,\|\bbeta\|_{\infty}\le M} \cL_{\nu}(\Fb,\bLambda_{\cQ},\bbeta),
     \label{key_optimization_problem}
\end{equation} where $M$ is the same as in \eqref{eq_constrained_maximum_likelihood_estimation}. For brevity, we sometimes write $\cL_{\nu}(\cdot)$ without explicitly listing its arguments. 
\begin{remark}
    \it While quadratic penalty functions have been used to construct Lagrangian-type formulations in previous studies~\citep{chen2021nonlinear,Wang2018Maximum,li2023statistical,wu2024general}, the analysis here faces some unique challenges. Specifically, existing works impose conditions on the entire matrix $N^{-1}\Fb^\T\Fb$, which relates to restrictions on the Stiefel manifold and thus yield a simpler Hessian structure. In contrast, our condition involves $N^{-1}{\Fb_{[,r]}}^\T\Fb_{[,l]}=0$ for selected pairs $(r,l)$. Moreover, the selected pairs interact intricately with the given block structure, as confined by the $\cMQ$ Condition. This interaction is the central difficulty unique to our framework, requiring new technical tools. 
\end{remark}

This Lagrangian-type formulation possesses some useful properties that help study the original constrained problem \eqref{eq_constrained_maximum_likelihood_estimation}. We give an informal summary of its properties below, with further details provided in the following subsections. 
\begin{enumerate}[label=$(\roman*)$]
    \item \label{res_convex}(Local Strong Convexity) $\cL_{\nu}(\cdot)$ is strongly convex within a neighborhood of the true parameters. Specifically, let $\lambda_{\min,\nu}(\Fb,\bLambda_{\cQ},\bbeta)$ denote the smallest eigenvalue of the properly scaled Hessian of $\cL_{\nu}(\cdot) $(definition given in \eqref{eq_scaled_minimal_eigenvalue}). There exist constants $\gamma>0$ and $C>0$ such that, when $N$ and $J$ are large enough, for any constant $c>0$, with high probability,
    \begin{equation*}
    \min_{(\Fb,\bLambda_{\cQ},\bbeta)\in\Xi_{\cQ}^*(c,M)}\lambda_{\min,\nu}\big(\Fb,\bLambda_{\cQ},\bbeta\big)~\ge~\gamma{J_*}/{J} - C\Big\{\nu\sqrt{c} + \sqrt{1/(N\wedge J)}\Big\},
\end{equation*} where $J_*$ denotes the size of the smallest non-removable block for the $\cMQ$ Condition to hold (definition given in \eqref{eq_spec_J_star}), $\Xi_{\cQ}^*(c,M)$ is a neighborhood region around the true parameters defined below in \eqref{eq_local_regime_true}, and $N\wedge J = \min\{N,J\}$. Moreover, this lower bound is tight in terms of the ratio $J_*/J$: there exists some true parameter set $(\Fb^*,\bLambda_{\cQ}^*,\bbeta^*)\in\Xi_{\cMQ}(M)$ such that, with high probability,
\begin{equation*}
        \lambda_{\min,\nu}\big(\Fb^*,\bLambda_{\cQ}^*,\bbeta^*\big)~\asymp ~\Big\{J_*/J + \sqrt{1/(N\wedge J)}~\Big\}\text{ for any constant }\nu>0;
    \end{equation*}
\item\label{res_equi} (Global Optimality) solutions of \eqref{eq_constrained_maximum_likelihood_estimation} and \eqref{key_optimization_problem} are equivalent, both coinciding with the unique minimizer of $\cL_{\nu}(\cdot)$ in $\Xi_{\cQ}^*(c,M)$ for any constant $c>0$. Specifically, for any constants $c>0$ and $\nu>0$, when $N$ and $J$ are large enough, with high probability,
\begin{equation*}
        (\hat\Fb,\hat\bLambda_{\cQ},\hat\bbeta) ~=~ (\tilde\Fb,\tilde\bLambda_{\cQ},\tilde\bbeta) ~= \mathop{\arg\min}_{(\Fb,\bLambda_{\cQ},\bbeta)\in\Xi_{\cQ}^* (c,M)} \cL_{\nu}(\Fb,\bLambda_{\cQ},\bbeta).
    \end{equation*}
\end{enumerate}
Here, $\Xi_{\cQ}^* (c,M)$ is a local region around the true parameters given by
\begin{align}
    \Xi_{\cQ}^* (c,M) ~=~ \Big\{(\Fb,\bLambda_{\cQ},\bbeta):&\,\;\|\Fb\|_{\max}\le M,\|\bLambda_{\cQ}\|_{\max}\le M,\|\bbeta\|_{\infty}\le M,\nonumber\\&\,\frac{1}{N}\|\Fb - \Fb^*\|_F^2 + \frac{1}{J}\|\bLambda_{\cQ}-\bLambda_{\cQ}^*\|_F^2+\frac{1}{J}\|\bbeta-\bbeta^*\|^2\le c\Big\},\label{eq_local_regime_true}
\end{align}
where $c$ is a small constant confining the size of this local region, $M$ is selected the same as in \eqref{eq_regime_true}, and $(\Fb^*, \bLambda_{\cQ}^*, \bbeta^*)$ are the true parameters.  Since the model parameters are identifiable up to a sign change, the result holds for the estimators after applying appropriate sign adjustments to both the loadings and the latent factors. For conciseness, we omit these adjustments here and in all subsequent results.


Our results offer three key implications. 
\begin{enumerate}[label=$\bullet$,leftmargin=0.5cm]
    \item  First and most importantly, result~\ref{res_equi} establishes that the local optimum of $\cL_{\nu}(\cdot)$ is globally optimal and coincides with the constrained estimator $(\hat\Fb,\hat\bLambda_{\cQ},\hat\bbeta)$ given in \eqref{eq_constrained_maximum_likelihood_estimation}. Thus, the theoretical properties of $(\hat\Fb,\hat\bLambda_{\cQ},\hat\bbeta)$ can be derived by analysing the more tractable Lagrangian-type formulation \eqref{key_optimization_problem} within the local region $\Xi_{\cQ}^*(c,M)$.

\item Second, the minimizer shared by problems \eqref{eq_constrained_maximum_likelihood_estimation} and \eqref{key_optimization_problem} is unique,  provided the ratio $J_*/J$ is bounded away from zero. Specifically, result~\ref{res_convex} shows that the Lagrangian function $\cL_{\nu}(\cdot)$ is strongly convex in $\Xi_{\cQ}^*(c,M)$ for some small constant $c>0$ when $J_*/J$ is non-diminishing. Combined with result~\ref{res_equi}, this strong convexity ensures the uniqueness of the solution to both optimization problems and guarantees that it lies within $\Xi_{\cQ}^*(c,M)$. 


\item The third implication concerns the behavior of the log-likelihood, where we identify $J_*$ as a key quantity controlling the convexity of $\cL_{\nu}(\cdot)$. Specifically, $J_*$ represents the size of the smallest non-removable blocks needed for the $\cMQ$ Condition to hold and is given by\begin{equation}
    J_* ~:=~ \max_{\cS\in\cD}\min_{k\in \cS}|\cJ_k|,\label{eq_spec_J_star}
\end{equation}with $\cD:=\big\{\cS\subseteq[K]\text{:}\text{ the $\cMQ$ Condition is satisfied with }\mathscr{A}_{\cQ}\text{ replaced by }\{\cA_k\}_{k\in\cS}\big\}$. As implied by result~\ref{res_convex}, the ratio $J_*/J$ characterizes the scaled minimal eigenvalues. Intuitively, since $J_*$ corresponds to the minimal block size for the $\cMQ$ Condition to hold, keeping $J_*/J$ bounded away from zero preserves enough discrepancies between the factors asymptotically, which helps maintain the curvature of the objective function and ensures the convexity. 

\end{enumerate}




\subsection{Local Convexity}\label{subsec_local_conv}
In this subsection, we examine the behavior of $\cL_{\nu}(\cdot)$ in the local region $\Xi_{\cQ}^*(c,M)$.
Following the literature, we introduce two commonly used regularity conditions. The true parameters are $(\Fb^*,\bLambda^*_{\cQ},\bbeta^*)$. 
For $k\in[K]$, let $\bLambda_k^*=(\lambda_{jd}^*)_{j\in\cJ_k,d\in\cA_k}$ be the true loading matrix for the $k$-th block. Let $J_k = |\cJ_k|$ be the number of items in block $k$. For $i\in[N]$, $k\in[K]$ and $j\in\cJ_k$, define $w_{ij(k)}^{*} = \sum_{d\in\cA_k}\lambda_{jd}^*f_{id}^* + \beta_j^*$. 
\begin{assumption}
	\label{assumption_psd_covariance}  \begin{enumerate}[label=$(\roman*)$]
        \item  There exists a constant $M_1$ such that $(\Fb^*,\bLambda^*_{\cQ},\bbeta^*)\in\Xi_{\cMQ}(M_1)$.
        \item $\bSigma_f^* = \lim\limits_{N \rightarrow \infty}N^{-1}{\Fb^*}^\T\Fb^* $ and $\bSigma_{\lambda}^* = \lim\limits_{J \rightarrow \infty}J^{-1}{\bLambda^*_{\cQ}}^\T\bLambda^*_{\cQ} $ exist and are positive definite.
        \item The eigenvalues of $\bSigma_{f}^*\bSigma_{\lambda}^*$ are distinct and nonzero.
        \item${\liminf}_{J\to\infty}\lambda_{\min}(J_k^{-1}\bLambda_k^*{}^\T\bLambda_k^*) > 0$ for all $k\in[K]$.
    \end{enumerate}
    
\end{assumption}\begin{remark}\it Assumption~\ref{assumption_psd_covariance} is commonly used in the latent factor literature. Specifically,  Assumption~\ref{assumption_psd_covariance}(i) requires the true parameters to be bounded and to satisfy the orthogonal constraint $\cM$ and the block structure $\cQ$. 
Assumption~\ref{assumption_psd_covariance}(ii) 
and Assumption~\ref{assumption_psd_covariance}(iii) are standard regularity conditions in the literature~\citep{stock2002forecasting,bai2003inferential,fan2017sufficient,fan2023bridging}.
Assumption~\ref{assumption_psd_covariance}(iv) requires that for each block $k$, the smallest eigenvalue of $J_k^{-1}\bLambda_k^*{}^\T\bLambda_k^*$ remain bounded away from zero. This assumption ensures that the items in each block $k$ contain sufficient information on the latent factors in $\cA_k$~\citep{chen2020structured}.
\end{remark}

 	\begin{assumption}
	\label{assumption_smoothness}
  The function $l_{ij}( w)  := \log g_{ij}(Y_{ij}\mid w)$ is three times differentiable to $ w$, with the first, second, and third order derivatives denoted by $l_{ij}^{\prime}( w)$, $l_{ij}^{\prime\prime}( w)$, and $l_{ij}^{\prime\prime\prime}( w)$, respectively. For any $i\in[N]$ and $j\in[J]$,  $l_{ij}^{\prime} ( w_{ij(k)}^*)$ is sub-exponential  with sub-exponential norm $\|l_{ij}^{\prime} ( w_{ij(k)}^*)\|_{\varphi_1} \le M_2$ for some constant $M_2>0$. 
 Within a compact set of $w$, there exist $0<b_L<b_U$ such that $b_L\le -l_{ij}^{\prime\prime}( w)\le b_U$ and $|l_{ij}^{\prime\prime\prime}( w)| \leqslant b_U$. 
   \end{assumption}

\begin{remark}
    \it
Assumption~\ref{assumption_smoothness} imposes smoothness conditions on the individual log-likelihood function $l_{ij}(w)$, which are standard in the literature~\citep{fernandez2016individual,Wang2018Maximum}. The sub-exponential tail assumption on $l_{ij}^{\prime}(w_{ij(k)}^*)$ is mild and holds for a broad class of commonly used link functions, including linear, logistic, probit, and Poisson. The conditions on the second and third order derivatives are technical regularity requirements for the theoretical analysis and are likewise satisfied by these links. 
\end{remark}

Now we formally introduce our results for the minimal eigenvalues of the scaled Hessian matrix of $\cL_{\nu}(\cdot)$, given as
\begin{equation}
    \lambda_{\min,\nu}(\Fb,\bLambda_{\cQ},\bbeta) ~=~ \lambda_{\min}\big\{\Sbb_m\partial^2_{\bpsi\bpsi}\cL_{\nu}(\Fb,\bLambda_{\cQ},\bbeta)\Sbb_m \big\}.\label{eq_scaled_minimal_eigenvalue}
\end{equation}
Here, $\bpsi := \big\{\bbf_1^\T,\dots,\bbf_N^\T,\mathrm{vec}(\bLambda_1)^\T,\dots,\mathrm{vec}(\bLambda_K)^\T,\beta_1,\dots,\beta_J\big\}^\T$ denotes the vector for the model parameters, where $\mathrm{vec}(\Ab)$ is the vectorization operator that stacks the columns of a matrix $\Ab$ into a vector; $\Sbb_m := \mathrm{diag}(J^{-1/2}\Ib_{ND},N^{-1/2}\Ib_{J_{\cQ}})$ is a scaling matrix where $J_\cQ = \sum_{k=1}^KJ_k\times |\cA_k| + J$ represents the number of free loading parameters and intercepts.

\begin{theorem}[Local strong convexity]\label{thm_suff_necce_concavity} \it
    Suppose Assumptions~\ref{assumption_psd_covariance}, \ref{assumption_smoothness} and the $\cMQ$ Condition hold. Let $\nu_0>0$ be a positive lower bound for the Lagrangian multiplier $\nu$. Then for any $\delta>1$, $\nu>\nu_0$ and $c>0$, there exists some constant $\gamma_{\delta}>0$ depending on $\delta$ such that, when $N$ and $J$ are sufficiently large, with probability at least $1-(N\wedge J)^{-\delta}$,
\begin{equation}
    \min_{(\Fb,\bLambda_{\cQ},\bbeta)\in\Xi_{\cQ}^*(c,M)}\lambda_{\min,\nu}(\Fb,\bLambda_{\cQ},\bbeta)~\ge~\gamma_{\delta}{J_*}/{J} - C_{\delta}\big\{\nu\sqrt{c} + \sqrt{1/({N\wedge J})}\big\},\label{eq_well_conditioning}
\end{equation}for some constant $C_{\delta}>0$ depending only on $\delta$.
On the other hand, there always exists a set of true parameters $(\Fb^*,\bLambda_{\cQ}^*,\bbeta^*)\in\Xi_{\cMQ}(M)$ such that, for any constant $\delta>1$, when $N$ and $J$ are large enough, with probability at least $1-(N\wedge J)^{-\delta}$,
\begin{equation}
    \lambda_{\min,\nu}(\Fb^*,\bLambda_{\cQ}^*,\bbeta^*)~\le~ C_{\delta}\big\{J_*/J+ \sqrt{1/({N\wedge J})}\big\},\label{eq_ill_conditioning}
\end{equation}
for some constant $C_{\delta}>0$ depending only on $\delta$.
\end{theorem}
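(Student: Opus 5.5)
We prove the lower bound \eqref{eq_well_conditioning} by first analyzing the scaled Hessian at the truth and then moving to the neighborhood $\Xi_{\cQ}^*(c,M)$. Write $\Hb_\nu(\bpsi)=\Sbb_m\partial^2_{\bpsi\bpsi}\cL_\nu\Sbb_m$ and split it as
\[
\Hb_\nu=\Hb^{E}+(\Hb^{L}-\Hb^{E})+\nu\Sbb_m\partial^2 P\Sbb_m .
\]
Here $\Hb^L$ is the scaled Hessian of $L$, and $\Hb^E$ is the same matrix with $l_{ij}''$ kept and the terms involving $l_{ij}'(w^*_{ij(k)})$ dropped.

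\textbf{Step 1 (noise and perturbation).} Terms involving $l'_{ij}$ enter only the $(\bbf_i,\bLambda_k)$ cross blocks, as $l'_{ij}$ times indicator matrices. Their scaled operator norm is bounded by $(NJ)^{-1/2}\|(l'_{ij})\|_{op}$. A sub-exponential random-matrix bound from Assumption~\ref{assumption_smoothness} gives $\|(l'_{ij})\|_{op}\lesssim\sqrt{N\vee J}$ (up to log factors absorbed into $\delta$) with probability $1-(N\wedge J)^{-\delta}$. This yields the $C_\delta\sqrt{1/(N\wedge J)}$ term.

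Moving from the truth to any point of $\Xi_{\cQ}^*(c,M)$ changes $l''_{ij}$ through $|l'''|\le b_U$. It also changes the Gauss–Newton and penalty terms by a Lipschitz amount, which is $O(\sqrt c)$ in scaled norm by Cauchy–Schwarz on the averages $N^{-1}\|\Fb-\Fb^*\|_F^2$ and $J^{-1}\|\bLambda-\bLambda^*\|_F^2$. The penalty part of this change carries the factor $\nu$, which gives $C_\delta\nu\sqrt c$. For the likelihood part, one checks that its $\sqrt c$ term is also dominated by $\nu\sqrt c$, using $\nu>\nu_0$.

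\textbf{Step 2 (curvature at the truth).} It remains to show $\lambda_{\min}\{\Hb^E+\nu\Sbb_m\partial^2P\Sbb_m\}\ge\gamma J_*/J$ at $\bpsi^*$. Since $-l''\ge b_L$, we have $\Hb^E\succeq b_L\,\Sbb_m\Gb^\T\Gb\Sbb_m$, where $\Gb$ is the Jacobian of $\bpsi\mapsto(w_{ij(k)})$. The quadratic form of a direction $\Delta=(\Delta\Fb,\Delta\bLambda,\Delta\bbeta)$ is then
\[
(NJ)^{-1}\sum_k\sum_{i,j\in\cJ_k}\big(\Delta\beta_j+\Delta\bbf_{i,\cA_k}^\T\blambda^*_{j}+\bbf^{*\T}_{i,\cA_k}\Delta\blambda_j\big)^2 ,
\]
and the penalty adds $\nu J^{-1}\cdot NJ$ times the squared linearized constraints in $\Delta\Fb$.

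The kernel of the linearized map is exactly the tangent space of the non-identifiability orbit. To see this, linearize the identifiability argument of Theorem~\ref{thm_id_iff} to infinitesimal transformations $\Delta\Fb=\Fb^*\Ab$, $\Delta\bLambda_k=-\bLambda_k^*\Ab_{[\cA_k,\cA_k]}^\T$, with $\Ab_{[\cA_k,\cA_k]}$ compatible with each block. The $\cMQ$ Condition, built from $\pi_\cM$ operations, forces $\Ab$ to be diagonal, and the diagonal directions are killed by the penalty. Quantitatively, restrict the sum to a subset $\cS\in\cD$ attaining $J_*$, which discards the other blocks since they only add PSD terms. Each retained block contributes curvature at least of order $J_k/J\ge J_*/J$ on the directions it controls; this uses Assumption~\ref{assumption_psd_covariance}(iv) for the loading directions and Assumption~\ref{assumption_psd_covariance}(ii) for the factor directions.

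The main obstacle is turning the combinatorial derivation of singletons in $\pi_\cM(\{\cA_k\}_{k\in\cS})$ into a quantitative bound. My plan is induction over the construction of $\pi_\cM$. For each set $\cS_0$ produced, I would show that off-diagonal components $\Ab_{[\cS_0,\cS_0^c]}$ and $\Ab_{[\cS_0^c,\cS_0]}$ are controlled by $C(J_*/J)^{-1}$ times the quadratic form. Intersections combine block-compatibility restrictions. Set differences use the zero pattern of $\cM$, which is enforced by the penalty with weight $\nu>\nu_0$ and scale $NJ$. Distinct eigenvalues (Assumption~\ref{assumption_psd_covariance}(iii)) are used to separate the rotation part from the residual part of $\Delta$ via a Bai–Ng type decomposition. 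Since the number of steps depends only on $D$ and $K$, the constants stay bounded.

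\textbf{Step 3 (tightness \eqref{eq_ill_conditioning}).} Choose $\cS\in\cD$ attaining $J_*$ and let $k_0\in\cS$ be a block with $J_{k_0}=J_*$. By the definition of $J_*$, every subset of $[K]$ that avoids all blocks of size at least $J_*$ fails the $\cMQ$ Condition, since otherwise the max-min would be larger. Removing the blocks of size at most $J_*$ therefore leaves a configuration that is non-identifiable. By the necessity part of Theorem~\ref{thm_id_iff}, it admits a transformation as in Example~\ref{example_non_id}, and its tangent direction $\Delta$ is annihilated by all blocks of size greater than $J_*$ and by the penalty.

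I would construct true parameters, for instance with the relevant factors exactly orthogonal, so that this direction is an exact null direction for those blocks. Its Rayleigh quotient then picks up only the blocks of size at most $J_*$, whose total size is at most $KJ_*$, giving order $J_*/J$. Adding the $O(\sqrt{1/(N\wedge J)})$ noise from Step 1 yields \eqref{eq_ill_conditioning}.
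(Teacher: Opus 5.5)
Your architecture is viable: a Gauss--Newton lower bound, the kernel identified with the linearized non-identifiability orbit, a quantitative induction along the construction of $\pi_{\cM}$, and a non-identifiable curve for tightness. It is genuinely different from the paper's route. The paper frees all loadings and builds an expanded version $\Hb_v$ of the penalty Hessian whose extra component encodes the block structure. It uses $\Hb_v$ to offset the ill-conditioned directions of the likelihood Hessian in that enlarged space, then returns to the structured coordinates.

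\textbf{Main gap (Step 2).} Restricting the sum to $\cS\in\cD$ and discarding the other blocks ``since they only add PSD terms'' lower-bounds the quadratic form, not the smallest eigenvalue over all of $\bpsi$. For $k\notin\cS$, the coordinates $(\Delta\bLambda_k,\Delta\bbeta_{\cJ_k})$ appear in no retained term, and not in $P$, which depends on $\Fb$ only. A unit vector supported on them therefore has Rayleigh quotient $0$. You must keep every block to control its own coordinates, as follows.
\begin{itemize}
\item Write $\Delta\Fb=\Fb^*\Ab+\one\bm{b}^\T+\bm{E}$ with $\bm{E}\perp[\one,\Fb^*]$.
\item For each $k$, set $\bm{R}_k=\Delta\bLambda_k+\bLambda_k^*\Ab_{[\cA_k,\cA_k]}^\T$ and $\bm{r}_k=\Delta\bbeta_{\cJ_k}+\bLambda_k^*\bm{b}_{\cA_k}$.
\item Block $k$ then contributes $\|\Fb^*_{[,\cA_k^c]}\Ab_{[\cA_k^c,\cA_k]}\bLambda_k^{*\T}+\Fb^*_{[,\cA_k]}\bm{R}_k^\T+\one\bm{r}_k^\T\|_F^2+\|\bm{E}_{[,\cA_k]}\bLambda_k^{*\T}\|_F^2$.
\end{itemize}
This controls $\bm{R}_k$ and $\bm{r}_k$ at order one for \emph{every} $k$. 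It controls $\bm{E}$ at rate $J_*/J$: every factor lies in some $\cA_k$ with $k\in\cS$, since otherwise its singleton could not be generated. It also yields $N\sum_{k}J_k\|\Ab_{[\cA_k^c,\cA_k]}\|_F^2$, and only at this point may you drop $k\notin\cS$. Only this last quantity and the linearized constraints should feed the $\pi_{\cM}$ induction; $\bm{b}$ is handled by the mean-zero part of $P$.

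\textbf{Further corrections.}
\begin{itemize}
\item \emph{Step 1.} Perturbing $l''_{ij}$ does not cost $O(\sqrt c)$ in operator norm. The set $\Xi_{\cQ}^*(c,M)$ only bounds averages, so a single $\bbf_i$ (or $\blambda_j$) may move by order one, and the diagonal block $J^{-1}\sum_j(-l''_{ij})\blambda_j\blambda_j^\T$ then changes by order one. Instead, use $-l''_{ij}(w_{ij})\ge b_L$ at the current point. Then perturb only $\Sbb_m\Gb^\T\Gb\Sbb_m$, the penalty Hessian and the $l'$ cross term, which are genuinely $O(\sqrt c)$.
\item \emph{Induction invariant.} Block terms only bound $\Ab_{[\cS_0^c,\cS_0]}$, i.e.\ approximate invariance of $\mathrm{span}\{\bm{e}_r:r\in\cS_0\}$. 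For example, in Figure~\ref{fig:illus}(d) the direction $\Ab=\bm{e}_1\bm{e}_3^\T$ is invisible to every block and is seen only through $m_{13}=0$. This one-sided bound is the invariant that propagates. Intersections preserve it directly. For a set difference with $T=\cup_r\cS_r$ and $U=\cS_0\setminus T$, combine three facts: the linearized constraints on $T\times U$, $(\bSigma_f)_{[T,U]}=\zero$, and positive definiteness of $(\bSigma_f)_{[T\cap\cS_0,T\cap\cS_0]}$. Together they bound $\Ab_{[T\cap\cS_0,U]}$. No eigengap (Assumption~\ref{assumption_psd_covariance}(iii)) is needed.
\item \emph{Step 3.} Your justification is reversed: a family of small blocks may well satisfy the $\cMQ$ Condition. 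What follows from the definition of $J_*$ and monotonicity of $\pi_{\cM}$ in its generators is that $\{k:J_k>J_*\}$ fails the condition. You also need the necessity construction of Theorem~\ref{thm_id_iff} to give a smooth curve through the truth, not merely a second equivalent point.
\end{itemize}
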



The first part of Theorem~\ref{thm_suff_necce_concavity} provides a lower bound for $\lambda_{\min,\nu}(\Fb,\bLambda_{\cQ},\bbeta)$ within $\Xi_{\cQ}^*(c,M)$. It implies that when $J_*/J$ is lower bounded by some positive constant, one can choose a sufficiently small $c>0$ so that $\cL_{\nu}(\cdot)$ exhibits strong convexity within $\Xi_{\cQ}^*(c,M)$. The theorem’s second statement shows that this bound is tight in terms of $J_*/J$ by deriving an upper bound that matches the lower bound up to some small order terms. 


Matrix $\Sbb_m$ is used to scale the Hessian matrix $\partial^2_{\bpsi\bpsi}\cL_{\nu}(\cdot)$ so that eigenvalues of the scaled matrix are all of constant order regardless of the ratio $N/J$ as $N,J\to\infty$. Therefore, with Theorem~\ref{thm_suff_necce_concavity}, we conclude that the condition number of the scaled Hessian matrix is of constant order when $J_*$ is comparable to $J$. This property further motivates the use of a gradient descent method, whose details are presented later in Section~\ref{sec_gdalgorithm}.

There are several challenges in deriving Theorem~\ref{thm_suff_necce_concavity}. First, both $L(\Fb, \bLambda_{\cQ}, \bbeta)$ and $P(\Fb)$ are non-convex even at the true parameters. In particular, 
$\lambda_{\min}\big\{\Sbb_m\partial^2_{\bpsi\bpsi} L(\Fb^*, \bLambda_{\cQ}^*, \bbeta^*) \Sbb_m\big\} = O_p\big((N \wedge J)^{-1/2}\big)$ and $\lambda_{\min}\big\{\partial_{\bpsi\bpsi}^2P(\Fb^*)\big\} =  0$.
There is no simple complementary relation between the column space of the Hessian of $P(\Fb)$ and the ill-conditioned directions of the Hessian matrix $\Sbb_m \partial^2_{\bpsi\bpsi} L(\Fb^*, \bLambda_{\cQ}^*, \bbeta^*) \Sbb_m$ to ensure the summation is strictly positive definite. To see this, we note that the corresponding ill-conditioned directions involve a large proportion of $\bLambda_{\cQ}$, whereas $P(\Fb)$ depends solely on $\Fb$.
A more intricate challenge is that both the number of vanishing eigenvalues of $\Sbb_m\partial^2_{\bpsi\bpsi} L(\Fb^*, \bLambda_{\cQ}^*, \bbeta^*) \Sbb_m$ and their associated eigenvectors vary significantly with the block structure $\cQ$. In addition, the relationship between $\cQ$ and the penalty $P(\Fb)$, which depends on $\cM$, is implicitly confined by the $\cMQ$ Condition. To address these challenges, we consider an enlarged parameter space by freeing all the loading parameters, and construct an expanded matrix for $\partial_{\bpsi\bpsi}P(\Fb)$, denoted by $\Hb_{v}$ (see Lemma~\ref{lemma_hessian_property1} in the Supplementary Material). The matrix $\Hb_v$ preserves the information in penalty $P(\Fb)$ and its expansion component is carefully designed to capture information in the block structure. As shown in Lemma 1, this construction offsets the ill-conditioned directions of $\Sbb_m \partial^2_{\bpsi\bpsi} L(\Fb^*, \bLambda_{\cQ}^*, \bbeta^*) \Sbb_m$ within the enlarged parameter space, thereby establishing a proper lower bound on the eigenvalues. This lower bound acts as a critical intermediary step in proving Theorem~\ref{thm_suff_necce_concavity}.

\begin{remark}\it
    Theorem~\ref{thm_suff_necce_concavity} also implies that structural identifiability is necessary, though not sufficient, for local strong convexity of the function $\cL_{\nu}(\cdot)$. Specifically, when the given orthogonality constraint $\cM$ and block structure $\cQ$ do not meet the $\cMQ$ Condition, the model is not structurally identifiable by Theorem~\ref{thm_id_iff}. In this case, $J_*$ can be taken as $0$, and \eqref{eq_ill_conditioning} implies that $\lambda_{\min,\nu}(\Fb^*,\bLambda_{\cQ}^*,\bbeta^*)\overset{p}{\to}0$ as $N,J\to\infty$.
\end{remark}

\subsection{Global Optimality}\label{subsec_global_opt}
Section~\ref{subsec_local_conv} establishes the strong convexity of $\cL_{\nu}(\cdot)$ inside $\Xi_{\cQ}^*(c,M)$, ensuring that it attains a unique local minimum in this region given $J_*/J$ is non-diminishing. In this subsection, we further study this local optimum and demonstrate that it coincides with the global optimum of both \eqref{eq_constrained_maximum_likelihood_estimation} and \eqref{key_optimization_problem}.
 We first introduce the following scaling conditions.

\begin{assumption}\label{assumption_scaling}\it
For $J_*$ in \eqref{eq_spec_J_star}, we assume $J_*\asymp J$.
Furthermore, there exist sequences $\epsilon_N\to\infty$ and $\epsilon_J\to\infty$ as $N$ and $J\to\infty$ such that\begin{equation*}
        \log\big(N + J\big)\Big(\frac{\epsilon_J\log N + \epsilon_N\log J}{N\wedge J}\Big)^{1/2}\to ~0\text{, as }N,J\to\infty.
    \end{equation*}
\end{assumption}
\begin{remark}\it As implied by Theorem~\ref{thm_suff_necce_concavity}, the first part of Assumption~\ref{assumption_scaling} ensures the strong convexity of $\cL_{\nu}(\cdot)$ inside $\Xi_{\cQ}^*(c,M)$ for sufficiently small $c$. This, in turn, ensures the existence and uniqueness of the minimizer of $\cL_{\nu}(\cdot)$ in this region. The second part of Assumption~\ref{assumption_scaling} is a technical regularity condition required to derive the first-order optimality condition for this minimizer. 
It allows for a wide range of scalings in $N,J$, as $\epsilon_N$ $(\epsilon_J)$ can diverge at an arbitrarily slow rate, e.g., $\log N$ or $\log \log N$ $(\text{similarly, }\log J$ or $\log \log J)$. 
\end{remark}

The next result shows that, for any constant $c$, the minimizer of $\cL_{\nu}(\cdot)$ in $\Xi_{\cQ}^*(c,M)$ corresponds to the global optimum of $\cL_{\nu}(\cdot)$ over the feasible set in \eqref{key_optimization_problem}, and equals the constrained maximum likelihood estimator in \eqref{eq_constrained_maximum_likelihood_estimation}.
\begin{theorem}[Global Optimality]\label{thm_global}\it
    Let $(\hat\Fb,\hat\bLambda_{\cQ},\hat\bbeta)$ and $ (\tilde\Fb,\tilde\bLambda_{\cQ},\tilde\bbeta)$ be the solution of \eqref{eq_constrained_maximum_likelihood_estimation} and \eqref{key_optimization_problem}, respectively. Suppose Assumptions~\ref{assumption_psd_covariance}--\ref{assumption_scaling} and the $\cMQ$ Condition hold. Then for any $\delta>1$ and any constant $\nu>0$, when $N$ and $J$ are sufficiently large, for some small constant $c>0$, with probability at least $1-(N\wedge J)^{-\delta}$,
    \begin{equation}
        (\hat\Fb,\hat\bLambda_{\cQ},\hat\bbeta) ~=~ (\tilde\Fb,\tilde\bLambda_{\cQ},\tilde\bbeta) ~= \mathop{\arg\min}_{(\Fb,\bLambda_{\cQ},\bbeta)\in\Xi_{\cQ}^*(c,M)} \cL_{\nu}(\Fb,\bLambda_{\cQ},\bbeta).\label{eq_equi_thm}
    \end{equation}
\end{theorem}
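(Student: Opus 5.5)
The argument has three stages. First we show that both global minimizers, of \eqref{eq_constrained_maximum_likelihood_estimation} and of \eqref{key_optimization_problem}, fall in a small neighbourhood $\Xi_{\cQ}^*(c,M)$. Second, Theorem~\ref{thm_suff_necce_concavity} gives strong convexity there, so the local minimizer is unique. Third, a first-order argument shows this local minimizer satisfies the constraints exactly, so the penalty vanishes at the optimum.

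\textbf{Step 1 (average consistency of the constrained MLE).} Since $(\Fb^*,\bLambda^*_{\cQ},\bbeta^*)$ is feasible, $L(\hat\Fb,\hat\bLambda_{\cQ},\hat\bbeta)\le L(\Fb^*,\bLambda^*_{\cQ},\bbeta^*)$. A second-order Taylor expansion in $w$, with curvature bounded below by $b_L$ from Assumption~\ref{assumption_smoothness}, gives
\[
b_L\|\hat\Wb-\Wb^*\|_F^2\lesssim \langle \nabla,\hat\Wb-\Wb^*\rangle ,
\]
where $\Wb$ is the $N\times J$ matrix of $w_{ij(k)}$ and $\nabla=(l'_{ij}(w^*_{ij(k)}))$ is a sub-exponential noise matrix. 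The matrix $\hat\Wb-\Wb^*$ has rank at most $2D+2$ and bounded entries. A sub-exponential spectral norm bound, $\|\nabla\|_{op}\lesssim\sqrt{N+J}\log(N+J)$ with probability $1-(N\wedge J)^{-\delta}$, then yields
\[
(NJ)^{-1}\|\hat\Wb-\Wb^*\|_F^2=O\big(\log^2(N+J)/(N\wedge J)\big)=o(1).
\]
We convert this to parameter closeness with a quantitative version of Theorem~\ref{thm_id_iff}, argued block by block. Within block $k$, Assumption~\ref{assumption_psd_covariance}(iv) identifies the column space of $\Fb_{[,\cA_k]}$ up to an invertible transform. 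Intersections and the $\cM$-extended differences in $\pi_\cM(\mathscr{A}_\cQ)$ then propagate the approximate recovery to the single columns $\Fb_{[,r]}$, using the orthogonality constraints and $J_*\asymp J$ to keep every needed block informative. Up to signs, this gives $N^{-1}\|\hat\Fb-\Fb^*\|_F^2+J^{-1}\|\hat\bLambda_{\cQ}-\bLambda^*_{\cQ}\|_F^2+J^{-1}\|\hat\bbeta-\bbeta^*\|^2\le c$ for any fixed small $c$.

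\textbf{Step 2 (the penalized minimizer is also local).} The same comparison applied to $\cL_\nu$ gives $\cL_\nu(\tilde\cdot)\le \cL_\nu(\ast)=L(\ast)$. Hence $\nu P(\tilde\Fb)\le L(\ast)-L(\tilde\cdot)$, and the right-hand side is bounded by the noise term above, namely $O(\sqrt{NJ(N+J)}\log(N+J))=o(NJ)$. So $\tilde\Fb$ nearly satisfies the constraints, $P(\tilde\Fb)/(NJ)=o(1)$, and $\tilde\Wb$ is close to $\Wb^*$ in scaled Frobenius norm. A perturbation version of the Step~1 identifiability argument, now allowing approximate constraints, places $\tilde{\ }$ in $\Xi^*_\cQ(c,M)$.

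\textbf{Step 3 (coincidence).} Choose $c$ small so that, by Theorem~\ref{thm_suff_necce_concavity} and $J_*\asymp J$, $\cL_\nu$ is strongly convex on $\Xi_\cQ^*(c,M)$ and has a unique minimizer there. By Step~2, $\tilde{\ }$ is that minimizer. It remains to show $P(\tilde\Fb)=0$; then $\tilde{\ }$ is feasible for \eqref{eq_constrained_maximum_likelihood_estimation}, and
\[
L(\tilde\cdot)=\cL_\nu(\tilde\cdot)\le\cL_\nu(\hat\cdot)=L(\hat\cdot)\le L(\tilde\cdot),
\]
so the two problems share minimizers. Uniqueness of the minimizer, using Step~1 which places $\hat{\ }$ in the region, gives \eqref{eq_equi_thm}.

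To show $P(\tilde\Fb)=0$ we use the invariance of $L$. For any invertible $\Ab$ preserving the zero pattern $\cQ$ (block-compatible, meaning $\Ab$ maps columns in $\cA_k$ into $\cA_k$ for every $k$) and any shift $\bm b$, the likelihood $L$ is unchanged along $\Fb\mapsto(\Fb-\one\bm b^\T)\Ab$ with the compensating change of $\bLambda$ and $\bbeta$. Minimality of the $\cMQ$ Condition (Remark~\ref{remark_minimal_id}) means these transformations span exactly as many directions as there are constraint components in $P$. We use them to renormalize $\tilde\Fb$ onto the constraint set without changing $L$; the renormalization is valid because $\tilde\Fb$ is close to $\Fb^*$, so the implicit function theorem applies. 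Doing so strictly lowers $\nu P$ unless $P(\tilde\Fb)=0$, which gives the claim.

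\textbf{Main obstacle.} The hardest step is this last one: showing that the group of $\cQ$-preserving transformations acts transversally on the constraint manifold near $\Fb^*$. This requires a Jacobian argument tied to the $\pi_\cM$-system structure, essentially the same construction as the matrix $\Hb_v$ in Lemma~\ref{lemma_hessian_property1}. The second difficulty is making the identifiability argument quantitative in Step~1.
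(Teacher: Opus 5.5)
Your plan follows the paper's outline: a low-loss comparison gives $P(\tilde\Fb)=o(NJ)$ and closeness to the truth, strong convexity gives uniqueness, and a constructive argument forces $P(\tilde\Fb)=0$. Using the truth instead of the paper's nearly-minimal reference point is fine for $P(\tilde\Fb)=o(NJ)$.

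\textbf{The gap is in Step~3.} To contradict the optimality of $(\tilde\Fb,\tilde\bLambda_{\cQ},\tilde\bbeta)$, your renormalized point must lie in the feasible set of \eqref{key_optimization_problem}, the box $\|\Fb\|_{\max},\|\bLambda_{\cQ}\|_{\max},\|\bbeta\|_\infty\le M$. Nothing in your argument ensures this.
\begin{itemize}
\item Your map is $\tilde\Fb\mapsto(\tilde\Fb-\one_N\bm b^\T)\Ab$, $\tilde\bLambda_{\cQ}\mapsto\tilde\bLambda_{\cQ}\Ab^{-\T}$, $\tilde\bbeta\mapsto\tilde\bbeta+\tilde\bLambda_{\cQ}\bm b$.
\item Because the group dimension equals the number of constraints, this map is unique up to signs, so you cannot steer it.
\item If a column of $\tilde\Fb$ has empirical variance slightly below one and an entry equal to $M$, rescaling pushes that entry above $M$. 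Likewise, a loading at $\pm M$ in a small block can leave the box under $\Ab^{-1}$. In either case the comparison proves nothing.
\end{itemize}
Steps~1--2 cannot exclude this. They control only $N^{-1}\|\cdot\|_F^2$ and $J^{-1}\|\cdot\|_F^2$, which are blind to $o(N)$ rows of $\tilde\Fb$ and to every loading in blocks of size $o(J)$.

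The missing ingredient is entrywise control of the local minimizer. You need to show it sits strictly inside the box (e.g.\ $\|\tilde\Fb-\Fb^*\|_{\max}=o(1)$ with $M>M_1$) and therefore satisfies the first-order conditions. The paper uses the second half of Assumption~\ref{assumption_scaling} for exactly this, and your argument never uses it.

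Once interiority holds, your idea closes the proof. Invariance makes $\partial L$ orthogonal to the orbit directions, so stationarity forces $\partial P$ to be orthogonal to them too. Write $P=\tfrac12 NJ\|c\|^2$ (up to positive weights), where $c$ is the vector of constraint residuals. Since the orbit derivative is surjective onto the residual space, you can pick an orbit direction that moves $c$ along itself; orthogonality then forces $c=0$.

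\textbf{The step you flag as the main obstacle is elementary.} Write $r\prec d$ if every $\cA_k$ containing $d$ also contains $r$.
\begin{itemize}
\item If $m_{rd}=1$, induction over $\pi_{\cM}(\mathscr{A}_{\cQ})$ shows that no set in it contains $d$ but not $r$.
\item Conversely, zeros exactly at the pairs $r\prec d$ already produce every singleton, since the condition forces $\prec$ to be a strict partial order.
\end{itemize}
So the minimal $\cM$ has zeros exactly at comparable pairs. The admissible $\Ab$ are those with $\Ab_{[r,d]}\neq0$ only if $r\preceq d$, and the dimension count matches.

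The renormalization is then explicit: centre the columns and run Gram--Schmidt along a linear extension of $\prec$. It depends on $\tilde\Fb$ only through $N^{-1}\tilde\Fb^\T\one_N$ and $N^{-1}\tilde\Fb^\T\tilde\Fb$, so no implicit-function argument is needed.
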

Establishing Theorem~\ref{thm_global} consists of two steps. First, we show that the global optimum $(\tilde\Fb,\tilde\bLambda_{\cQ},\tilde\bbeta)$ lies in $\Xi_{\cQ}^*(c,M)$ for some small constant $c$, and so it consistently estimates the true parameters. Since the Lagrangian-type formulation remains non-convex over the entire feasible region, identifying the global minimizer among many potential local minima is highly non-trivial. To address this,
we take a novel approach by first constructing a reference parameter set inside $\Xi_{\cQ}^*(c,M)$ whose (negative) log-likelihood is nearly minimal over the whole feasible region of \eqref{key_optimization_problem} (see Lemma~\ref{lemma_init_reference} in the Supplementary Material). Its closeness to the true parameters forces its penalty to be small. With this reference parameter set, we can determine that the global minimizer $(\tilde\Fb,\tilde\bLambda_{\cQ},\tilde\bbeta)$ should carry a similarly small penalty and therefore nearly satisfy constraint $\cM$ and \eqref{eq_cond_normal}. Based on that, we establish  $(\tilde\Fb,\tilde\bLambda_{\cQ},\tilde\bbeta)\in\Xi_{\cQ}^*(c,M)$, which, together with Theorem~\ref{thm_suff_necce_concavity}, further implies the second equality in \eqref{eq_equi_thm}. 

The second step involves establishing the equivalence between $ (\hat\Fb,\hat\bLambda_{\cQ},\hat\bbeta)$ and $(\tilde\Fb,\tilde\bLambda_{\cQ},\tilde\bbeta)$. By our construction, this equivalence can be verified by showing that the global minimizer $(\tilde\Fb, \tilde\bLambda_{\cQ}, \tilde\bbeta)$ exactly satisfies the constraints in the original optimization problem. However, this step remains challenging, since the first step only guarantees that $P(\tilde\Fb)$ is small, but not necessarily zero. To address this issue, we develop a constructive approach showing that whenever $P(\tilde\Fb) \neq 0$, a set of parameters can be constructed within the feasible region that attains a strictly smaller value of $\cL_{\nu}(\cdot)$, contradicting the optimality of $(\tilde\Fb, \tilde\bLambda_{\cQ}, \tilde\bbeta)$. Notably, this property does not extend to the entire feasible region of \eqref{key_optimization_problem}, so this step must be carried out after establishing the result in the first step. 


\section{Estimation Consistency and Statistical Inference}\label{sec_mainres}

In this section, we establish the consistency and asymptotic distributions of the estimator $(\hat\Fb,\hat\bLambda_{\cQ},\hat\bbeta)$ obtained in \eqref{eq_constrained_maximum_likelihood_estimation}.
The following theorem states the consistency result.

\begin{theorem}[Non-asymptotic Error Bounds]\label{thm_precise_consis}
    \it  
    Suppose Assumptions~\ref{assumption_psd_covariance}--\ref{assumption_scaling} and the $\cMQ$ Condition hold. Let $(\hat\Fb,\hat\bLambda_{\cQ},\hat\bbeta)$ be the estimator defined in \eqref{eq_constrained_maximum_likelihood_estimation}. For any $\delta>1$ and some constant $C_{\delta}$ depending only on $\delta$, when $N$ and $J$ are large enough, with probability at least $1 - (N\wedge J)^{-\delta}$, the following bounds hold:
    \begin{equation*}
       {J}^{-1}\|\hat\bLambda_{\cQ} - \bLambda_{\cQ}^*\|_F^2 +  {J}^{-1}\|\hat\bbeta - \bbeta^*\|^2\le C_{\delta}(N^{-1}+J^{-2}),\; {N}^{-1}\|\hat\Fb - \Fb^*\|_F^2\le C_{\delta}(N^{-2}+J^{-1});\end{equation*} and \begin{equation*}
       \|\hat\bLambda_{\cQ} - \bLambda_{\cQ}^*\|_{\infty} + \|\hat\bbeta-\bbeta^*\|_{\infty} ~\le~ C_{\delta}\frac{\epsilon_{NJ}\log J}{(N\wedge J)^{1/2}},\; \|\hat\Fb-\Fb^*\|_{\infty}~\le~ C_{\delta} \frac{\epsilon_{NJ}\log N}{(N\wedge J)^{1/2}},
    \end{equation*}
    where 
    $\epsilon_{NJ}= (\epsilon_N\log J+\epsilon_J\log N)^{1/2}$ with $\epsilon_N$ and $\epsilon_J$ specified as in Assumption~\ref{assumption_scaling}.
\end{theorem}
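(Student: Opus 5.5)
By Theorem~\ref{thm_global}, with probability at least $1-(N\wedge J)^{-\delta}$ the estimator $(\hat\Fb,\hat\bLambda_{\cQ},\hat\bbeta)$ is the unique minimizer of $\cL_{\nu}$ over the local region $\Xi_{\cQ}^*(c,M)$. It also satisfies $P(\hat\Fb)=0$ and lies in the interior of the box constraints once $M>M_1$. So I will work entirely with the unconstrained first-order condition $\partial_{\bpsi}\cL_{\nu}(\hat\bpsi)=\zero$, where $\nabla P(\hat\Fb)$ is evaluated at an exactly feasible point, and with the local strong convexity of Theorem~\ref{thm_suff_necce_concavity}. Under Assumption~\ref{assumption_scaling}, $J_*\asymp J$, so taking $c$ small gives $\lambda_{\min,\nu}\ge\gamma/2$ on $\Xi^*_{\cQ}(c,M)$ for large $N,J$.

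\textbf{Step 1: a crude $\ell_2$ rate.} Write $\Delta=\hat\bpsi-\bpsi^*$. A second-order Taylor expansion of $\cL_\nu$ around $\bpsi^*$, together with $\cL_\nu(\hat\bpsi)\le\cL_\nu(\bpsi^*)$ and $P(\Fb^*)=0$, gives
\[
\tfrac{\gamma}{4}\|\Sbb_m^{-1}\Delta\|^2\le |\langle \Sbb_m\nabla L(\bpsi^*),\Sbb_m^{-1}\Delta\rangle|+\nu\,|\langle\nabla P(\Fb^*),\Delta\rangle|.
\]
Since $\nabla P(\Fb^*)=\zero$ (the penalty is a sum of squares vanishing at $\Fb^*$), only the score term remains. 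The score has entries $\sum_j l'_{ij}\lambda^*_{jd}$ for $\bbf_i$ and $\sum_i l'_{ij}f^*_{id}$ for $\lambda_{jd}$, which are sub-exponential sums. Bernstein plus a union bound gives $\|\Sbb_m\nabla L(\bpsi^*)\|^2\lesssim N+J$ with the stated probability. Hence $N^{-1}\|\hat\Fb-\Fb^*\|_F^2+J^{-1}(\|\hat\bLambda_{\cQ}-\bLambda_{\cQ}^*\|_F^2+\|\hat\bbeta-\bbeta^*\|^2)\lesssim (N\wedge J)^{-1}$.

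\textbf{Step 2: sharpening to the oracle rates.} This is the main obstacle. The sharper rates $N^{-1}+J^{-2}$ for the loadings and $J^{-1}+N^{-2}$ for the factors require decoupling the two blocks of parameters. I would write $\Delta=-[\partial^2\cL_\nu(\bar\bpsi)]^{-1}\nabla L(\bpsi^*)$ using the integral-form Hessian $\bar{\Hb}$. Next I would split $\bar\Hb$ into its block-diagonal part $\Hb_D$ (the $\bbf_i$ blocks and the per-item $(\bLambda,\bbeta)$ blocks, each well conditioned by Assumption~\ref{assumption_psd_covariance}(iv) and Assumption~\ref{assumption_smoothness}) and its off-diagonal part. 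The leading term $\Hb_D^{-1}\nabla L(\bpsi^*)$ gives oracle rates $N^{-1}$ for the loadings and $J^{-1}$ for the factors. The cross-block corrections are products of the cross Hessian, of order $l''\lambda f+l'$, with the other block's error. The $l'$ part is controlled by the random matrix norm $\|(l'_{ij})\|_{op}\lesssim\sqrt{N}+\sqrt{J}$. The deterministic low-rank part is exactly the direction handled by the penalty Hessian and the $\cMQ$ structure, that is, the $\Hb_v$ construction used for Theorem~\ref{thm_suff_necce_concavity}. Projecting it out via the constraints leaves a remainder of order $J^{-1}$ times the factor error for the loadings, and symmetrically for the factors. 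Third-derivative Taylor remainders are bounded by $b_U$ times squared Step-1 errors. Substituting the Step-1 rates into these remainders yields the $J^{-2}$ and $N^{-2}$ bias terms.

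\textbf{Step 3: the $\ell_\infty$ bounds.} I would use a leave-one-out-free row-wise argument. For each $i$, the first-order condition in $\bbf_i$ alone reads $\zero=\sum_j l'_{ij}(\hat w_{ij})\hat\blambda_j+\nu\,\partial_{\bbf_i}P(\hat\Fb)$. Expanding around $\bbf_i^*$ with the local $D\times D$ Hessian, whose eigenvalues are of order $J$, gives
\[
\|\hat\bbf_i-\bbf_i^*\|\lesssim J^{-1}\Big\|\sum_j l'_{ij}(w^*_{ij})\lambda^*_{j}\Big\|+J^{-1}\sum_j|\hat w_{ij}-w^*_{ij}|\cdot\|\hat\blambda_j-\blambda_j^*\|+\cdots.
\]
The first term is $\lesssim\sqrt{\log N/J}$ uniformly in $i$ by Bernstein with a union bound. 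The remaining terms are handled by Step 2 plus a truncation of $\max_{ij}|l'_{ij}|\lesssim\log(N+J)$. The dependence of $\hat\blambda_j$ on row $i$ is absorbed through the slowly diverging sequences $\epsilon_N$ and $\epsilon_J$ of Assumption~\ref{assumption_scaling}, which yields the factor $\epsilon_{NJ}\log N/(N\wedge J)^{1/2}$. The loadings and intercepts are treated symmetrically, with the penalty gradient absent.
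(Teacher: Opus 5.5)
\textbf{Step 2 has a genuine gap, and it is the step that carries all of the content of the sharp rates.}

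The off-diagonal block $\Hb_{\lambda F}$ has entries $-l''_{ij}\bbf^{c}_i\blambda_j^\T$ plus an $l'_{ij}$ term. It is not a perturbation of $\Hb_D$: after scaling by $\Sbb_m$ its operator norm is of order one, the same as the diagonal blocks (test it on directions $\Fb^*\Ab$). Consequently the identity $\Delta_\lambda=-\Hb_{D,\lambda}^{-1}(\nabla_\lambda L+\Hb_{\lambda F}\Delta_F)$, combined with the Step-1 bound on $\Delta_F$, only returns $J^{-1}\|\hat\bLambda_{\cQ}-\bLambda^*_{\cQ}\|_F^2\lesssim (N\wedge J)^{-1}$. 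This is why the paper stresses that strong convexity alone is insufficient.

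Your remedy is to treat the $l''$ part as a deterministic low-rank piece and project it out via the constraints. That does not work. For nonlinear $l_{ij}$ the array $(l''_{ij}\bbf_i\blambda_j^\T)$ is not low rank. What enters item $j$ is the $j$-dependent weighted moment $N^{-1}\sum_i l''_{ij}\bbf^c_i(\hat\bbf_i-\bbf_i^*)^\T$, and the constraints say nothing about it. They only pin the $\cM$-constrained entries of $N^{-1}\Fb^{*\T}(\hat\Fb-\Fb^*)$, up to $O(N^{-1}\|\hat\Fb-\Fb^*\|_F^2)$. Even in the linear case, the remaining entries must come from the likelihood through $\cQ$.

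The missing idea is to eliminate $\Delta_F$ jointly rather than perturbatively, which is the block-wise inverse-Hessian analysis the paper relies on. Let $\Hb_{FF}=\Hb_{ff}+\nu\partial^2_{\Fb\Fb}P$. Theorem~\ref{thm_suff_necce_concavity} gives the Schur complement $\Hb_{\lambda\lambda}-\Hb_{\lambda F}\Hb_{FF}^{-1}\Hb_{F\lambda}$ smallest eigenvalue $\gtrsim N$. So you must show that the effective score $\nabla_\lambda L-\Hb_{\lambda F}\Hb_{FF}^{-1}\nabla_F L$ has squared norm $\lesssim NJ+N^2/J$.

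That bound needs independence across rows, not operator norms. The $j$th component of $\Hb_{\lambda F}\Hb_{ff}^{-1}\nabla_F L$ has two parts:
\begin{itemize}
\item a sum over $i$ of independent terms of size $J^{-1/2}$, giving $\sqrt{N/J}$;
\item the $j'=j$ contribution, e.g.\ $J^{-1}\sum_i (l'_{ij})^2$, of order $N/J$.
\end{itemize}
The second is an incidental-parameter term that produces $J^{-2}$ at first order, not only through third-derivative remainders. The penalty's Woodbury correction and the quadratic constraint residual contribute at most the same orders. The symmetric argument gives the $N^{-2}$ term for the factors.

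\textbf{Your ordering is circular.} Membership in $\Xi^*_{\cQ}(c,M)$ is an averaged $\ell_2$ statement and does not keep individual coordinates away from $\pm M$. So ``interior once $M>M_1$'' is an $\ell_\infty$ fact, which you only derive in Step 3 from the stationarity equations.

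This is repairable, and the repair also makes Step 3 rigorous using only Step 1.
\begin{itemize}
\item Since $\hat\bpsi=\tilde\bpsi$ minimizes $\cL_\nu$ over the box, $\hat\bbf_i$ minimizes $h_i(\bbf)=\cL_\nu(\ldots,\bbf,\ldots)$ over $[-M,M]^D$.
\item Because $\hat\Fb$ is feasible, the penalty restricted to row $i$ is $J/(2N)$ times a bounded quartic in $\bbf$ vanishing at $\hat\bbf_i$. With $-l''\ge b_L$ on the whole box, $h_i$ is therefore strongly convex with modulus $\asymp J$.
\item The variational inequality at $\hat\bbf_i$, tested against $\bbf_i^*$, gives $\|\hat\bbf_i-\bbf_i^*\|\lesssim J^{-1}\|\sum_j l'_{ij}(\hat\beta_j+\hat\blambda_j^\T\bbf_i^*)\hat\blambda_j\|+O(N^{-1})\|\hat\bbf_i-\bbf_i^*\|$. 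No local Taylor expansion or a priori smallness is needed.
\item The dependence of $\hat\blambda_j$ on row $i$ is handled by Cauchy--Schwarz, $\|\sum_j l'_{ij}(w^*_{ij})(\hat\blambda_j-\blambda^*_j)\|\le(\sum_j (l'_{ij})^2)^{1/2}\|\hat\bLambda_{\cQ}-\bLambda^*_{\cQ}\|_F$. The remaining term is bounded by $J^{-1}\sum_j(|\hat\beta_j-\beta_j^*|+\|\hat\blambda_j-\blambda_j^*\|)$. Both are controlled by the Step-1 rate.
\end{itemize}
This yields $\max_i\|\hat\bbf_i-\bbf_i^*\|\lesssim\sqrt{\log N/J}+(N\wedge J)^{-1/2}$. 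That is within the stated bound and uses no further appeal to $\epsilon_N,\epsilon_J$; your claim that the dependence is ``absorbed through the slowly diverging sequences'' is not an argument as written. Items are analogous and simpler, since the penalty does not involve $\bLambda_{\cQ}$ or $\bbeta$.

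The sound order is therefore Step 1, then Step 3, which gives interiority, the first-order conditions, and the $\ell_\infty$ smallness needed to absorb remainders. Only then can Step 2, done via the Schur complement as above, be carried out.
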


Theorem~\ref{thm_precise_consis} establishes non-asymptotic $\ell_2$-error bounds that improve upon the previously obtained $O_p(N^{-1}+J^{-1})$ rate~\citep[e.g.,][]{chen2020structured,Wang2018Maximum}.
It also provides the $\ell_{\infty}$-error bound that applies to the loading parameters across all blocks, regardless of their sizes. We emphasize that the strong convexity established in Theorem~\ref{thm_suff_necce_concavity} alone is insufficient to achieve these results. A key ingredient in establishing Theorem~\ref{thm_precise_consis} is a precise characterization of the inverse of the high-dimensional Hessian matrix $\partial^2_{\bpsi\bpsi} \cL_{\nu}(\cdot)$. Since the Hessian matrix exhibits different properties for the latent factor and loading parameter components, we partition it into corresponding blocks and establish sharp $\ell_2$- and $\ell_{\infty}$-error bounds for each matrix block (see Lemma~\ref{lemma_hessian_property12} in the Supplementary Material). These blockwise results are then used to derive an explicit expression for the inverse Hessian and to obtain the sharp error bounds. 



Across a broad range of asymptotic regimes, the results in Theorem~\ref{thm_precise_consis} attain the oracle $\ell_2$ rates for the latent factors and loading parameters when the other component is known. Specifically, when $\Fb^*$ is given, the estimation of $\hat\bLambda_{\cQ}^*$ and $\hat\bbeta^*$ reduces to a generalized linear model setting, under which the oracle rate is $O_p(N^{-1})$. When $N = O(J^2)$, the consistency rate for the loading parameters and intercepts achieves this oracle rate. Similarly, for the latent factors, when $J = O(N^2)$, their consistency rate achieves the oracle rate of $O_p(J^{-1})$. 



Next, we present the limiting distributions of the estimators $(\hat\Fb,\hat\bLambda_{\cQ},\hat\bbeta)$. For each $k\in[K]$ and $j\in\cJ_k$, define the vector of free loading parameters as $\blambda_{j(k)}:= (\lambda_{jd})_{d\in\cA_k}$. Denote the estimated vector by $\hat\blambda_{j(k)} = (\hat\lambda_{jd})_{d\in\cA_k} ^\T $ with $\hat\lambda_{jd} = (\hat\bLambda_{\cQ})_{[j,d]}$. 
The subvector of latent factors associated with response to item $j$ is denoted by $\bbf_{i(k)} = (\bbf_i)_{\cA_k}$, with the corresponding true value denoted by $\bbf_{i(k)}^*= (\bbf_i^*)_{\cA_k}$. Let $A_k = |\cA_k|$ be the number of latent factors associated with block $k$. 

\begin{theorem}[Asymptotic Distributions]\label{thm_asymptotic_normality}\it 
    Suppose Assumptions~\ref{assumption_psd_covariance}--\ref{assumption_scaling} and the $\cMQ$ Condition hold. For $(\hat\Fb,\hat\bLambda_{\cQ},\hat\bbeta)$ obtained in \eqref{eq_constrained_maximum_likelihood_estimation}, as $N,J\to\infty$, if $N=o(J^2)$, it holds that \begin{equation*}
        \big(\bPhi_{j(k)}^*\big)^{-1/2}\begingroup
    \renewcommand{\arraystretch}{0.5}\begin{pmatrix}
            \hat\beta_j - \beta_j^*\\\hat\blambda_{j(k)} - \blambda_{j(k)}^*
        \end{pmatrix}\endgroup ~ \overset{d}{\to} ~\cN(\zero_{A_k+1},\Ib_{A_k+1}\big)\text{ for each }k\in[K]\text{ and }j\in J_k,
    \end{equation*}
    where letting $\bbf_{i(k)}^{c*} = (1, (\bbf_{i(k)}^*)^\T)^\T$, $l_{ij(k)}^{\prime *} = l_{ij}^{\prime}(w_{ij(k)}^*)$, and $l_{ij(k)}^{\prime\prime *} = l_{ij}^{\prime\prime}(w_{ij(k)}^*)$, $\bPhi_{j(k)}^*$ is given by
    \begin{equation}
    \Big[\sum_{i\in[N]}-l_{ij(k)}^{\prime\prime*}\bbf_{i(k)}^{c*}\big\{\bbf_{i(k)}^{c*}\big\}^\T\Big]^{-1}\Big[\sum_{i\in[N]}(l_{ij(k)}^{\prime*})^2\bbf_{i(k)}^{c*}\big\{\bbf_{i(k)}^{c*}\big\}^\T\Big]\Big[\sum_{i\in[N]}-l_{ij(k)}^{\prime\prime*}\bbf_{i(k)}^{c*}\big\{\bbf_{i(k)}^{c*}\big\}^\T\Big]^{-1} \label{eq_sandwich}
\end{equation} 
For the estimates of latent factors, as $N,J\to\infty$, if $J=o(N^2)$,  it holds that
    \begin{equation*}
        \big(\bPsi_{i}^*\big)^{-1/2}\big(\hat\bbf_i - \bbf_i^*\big)~\overset{d}{\to}~\cN(\zero_{D},\Ib_{D}\big),\text{ for each }i\in[N],
    \end{equation*}
    where $\bPsi_i^*$ is given as 
\[
    \Big\{\!\!\sum_{k\in[K]}\!\sum_{j\in\cJ_k}-l_{ij(k)}^{\prime\prime*}\blambda_{j}^{c*}(\blambda_{j}^{c*})^\T\!\Big\}^{-1}\Big\{\!\!\sum_{k\in[K]}\!\sum_{j\in\cJ_k}(l_{ij(k)}^{\prime*})^2\blambda_{j}^{c*}(\blambda_{j}^{c*})^\T\!\Big\}\Big\{\!\!\sum_{k\in[K]}\!\sum_{j\in\cJ_k}-l_{ij(k)}^{\prime\prime*}\blambda_{j}^{c*}(\blambda_{j}^{c*})^\T\!\Big\}^{-1},
\]
with $\blambda_{j}^{c*} = (\bLambda_\cQ^*)_{[j,]}^\T$ and $k$ indexing the block that contains $j$.
Furthermore, $\bPhi_{j(k)}^*$ and $\bPsi_i^*$ can be consistently estimated by the plug-in estimators $\hat\bPhi_{j(k)}$ and $\hat\bPsi_i$ that replace $(\Fb^*,\bLambda_{\cQ}^*,\bbeta^*)$ with the estimation $(\hat\Fb,\hat\bLambda_{\cQ},\hat\bbeta)$, and the above asymptotic normality results hold with $\bPhi_{j(k)}^*$ and $\bPsi_i^*$ replaced by $\hat\bPhi_{j(k)}$ and $\hat\bPsi_i$, respectively.
\end{theorem}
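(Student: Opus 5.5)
By Theorem~\ref{thm_global}, with probability at least $1-(N\wedge J)^{-\delta}$ the constrained estimator coincides with the unique minimizer of $\cL_{\nu}$ in $\Xi_{\cQ}^*(c,M)$. That minimizer is interior, because the error bounds of Theorem~\ref{thm_precise_consis} keep it strictly inside the box. It therefore satisfies the first-order condition $\partial_{\bpsi}\cL_{\nu}(\hat\bpsi)=\zero$, and $P(\hat\Fb)=0$. The plan is to Taylor expand this score equation around $\bpsi^*$, obtaining
\[
\hat\bpsi-\bpsi^* = -\{\partial^2_{\bpsi\bpsi}\cL_{\nu}(\bpsi^*)\}^{-1}\partial_{\bpsi}L(\bpsi^*) + \mathbf{r},
\]
and then to read off the coordinates for $(\beta_j,\blambda_{j(k)})$ and for $\bbf_i$. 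Two facts make this work. First, the gradient of $P$ vanishes at $\bpsi^*$, since $P$ is a sum of squares of constraint residuals that are zero at $\bpsi^*$. Second, the remainder $\mathbf{r}$ is quadratic in the error, with the third derivative controlled by Assumption~\ref{assumption_smoothness}.

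The key step is to show that the relevant rows of the inverse Hessian are asymptotically block diagonal, i.e.\ the oracle rows. I would use the blockwise description of the inverse Hessian from Lemma~\ref{lemma_hessian_property12} in the Supplementary Material. That lemma is used to prove Theorem~\ref{thm_precise_consis} and gives the factor--factor, loading--loading and cross blocks with $\ell_\infty$-type bounds. For item $j$ in block $k$, the loading--loading block is, to leading order, the block diagonal matrix $\mathrm{diag}_j\{\sum_i -l''^*_{ij(k)}\bbf^{c*}_{i(k)}(\bbf^{c*}_{i(k)})^\T\}$. Its inverse is of order $N^{-1}$. The cross terms come from the factor score $\sum_{j}l'^*_{ij}\blambda_j^{c*}$, which is of size $J^{1/2}$ per individual, and from the $P$-Hessian correction. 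Their contribution to $(\hat\beta_j,\hat\blambda_{j(k)})$ should be $O_p(J^{-1}+N^{-1})$. The leading term is then
\[
\Big[\sum_i -l''^*\bbf^{c}\bbf^{c\T}\Big]^{-1}\sum_i l'^*_{ij(k)}\bbf^{c*}_{i(k)} .
\]
This is a sum of independent sub-exponential mean-zero terms, so a Lyapunov CLT gives a normal limit with sandwich covariance $\bPhi^*_{j(k)}$, of order $N^{-1}$. After scaling by $N^{1/2}$, the bias term is $O(N^{1/2}/J)$, which is $o(1)$ exactly because $N=o(J^2)$. The remainder $\mathbf{r}$ is handled with the $\ell_\infty$ and $\ell_2$ bounds of Theorem~\ref{thm_precise_consis}: in its $j$-th coordinates it is at most $\sum_i(\hat w_{ij}-w^*_{ij})^2/N$, which is of order $N^{-1}+J^{-1}$ up to log factors. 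Assumption~\ref{assumption_scaling} then makes it $o(N^{-1/2})$ once combined with $N=o(J^2)$.

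The factor result is symmetric. The leading term is $\{\sum_j -l''^*\blambda^{c*}_j\blambda^{c*\T}_j\}^{-1}\sum_j l'^*_{ij}\blambda_j^{c*}$. Here the bias is $O(J^{1/2}/N)$, which is $o(1)$ under $J=o(N^2)$. A subtlety is that the constraints (normalization and $\cM$) couple the factors only through averages over $N$. Their effect on a single $\hat\bbf_i$ is therefore of order $N^{-1}$, which is negligible.

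The main obstacle is establishing rigorously that the cross blocks of the inverse Hessian contribute only $O(N^{-1}+J^{-1})$ entrywise. This must hold despite the ill-conditioned directions being removed only through $P$ and the $\cMQ$ structure, and uniformly over blocks of differing sizes. I would first try to write the inverse by a Schur complement, with the factor block as the pivot, and invoke the entrywise bounds of Lemma~\ref{lemma_hessian_property12}.

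Finally, the plug-in estimators $\hat\bPhi_{j(k)}$ and $\hat\bPsi_i$ are consistent by the uniform $\ell_\infty$ bounds of Theorem~\ref{thm_precise_consis} together with Lipschitz continuity of $l'$ and $l''$ on the compact set. The studentized versions then follow from Slutsky's theorem.
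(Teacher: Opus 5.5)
Your outline follows the route the paper indicates. By Theorem~\ref{thm_global} the estimator is an interior minimiser of $\cL_\nu$, with $\partial_{\bpsi}P=\zero$ at both $\hat\bpsi$ and $\bpsi^*$. You then expand the score at $\bpsi^*$, use the blockwise description of the inverse Hessian, take the oracle GLM score as the leading term, kill an $O(J^{-1})$ (resp.\ $O(N^{-1})$) bias using $N=o(J^2)$ (resp.\ $J=o(N^2)$), and finish with Lyapunov and Slutsky. There is, however, a genuine gap in the step you yourself flag as the main obstacle, and the tool you propose for it cannot close it.

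\textbf{The main gap: off-diagonal parts of the inverse Hessian.} Write $\Hb=\partial^2_{\bpsi\bpsi}\cL_\nu(\bpsi^*)$. The row of $\Hb^{-1}$ for $\bxi_{j(k)}$ is \emph{not} asymptotically block diagonal in any sense that can be combined with the triangle inequality.
\begin{itemize}
\item Its loading--loading entries for $j'\neq j$ are of order $(NJ)^{-1}$. Over $J$ items they carry total mass of order $N^{-1}$, the same as the diagonal block. Against scores $\partial_{\bxi_{j'}}L(\bpsi^*)$ of size $N^{1/2}$ they give $O(N^{-1/2})$ up to logarithms, which is the order of the leading term itself.
\item The cross entries, also of order $(NJ)^{-1}$, against factor scores of size $J^{1/2}$ summed over $N$ individuals give $O(J^{-1/2})$. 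That is not $o(N^{-1/2})$ unless $N=o(J)$.
\item The factor rows have the same problem.
\end{itemize}
Entrywise or $\ell_\infty$-type bounds are exactly what produce the $(N\wedge J)^{-1/2}$ rate in Theorem~\ref{thm_precise_consis}; they do not give the oracle rate.

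A toy case makes this concrete. Take one factor, one block, no intercept, unit-variance Gaussian responses, $\|\Fb^*\|_F^2=N$, $\Fb^{*\T}\one_N=0$ and $\|\bLambda_{\cQ}^*\|_F^2=J$. With the mean Hessian one gets $(\Hb^{-1})_{\bLambda\bLambda}=N^{-1}\Ib_J+(4\nu NJ)^{-1}\bLambda_{\cQ}^*\bLambda_{\cQ}^{*\T}$. Its off-diagonal part has row mass of order $N^{-1}$ and depends on the arbitrary multiplier $\nu$. In the linear term it cancels \emph{exactly} against the piece routed through the factor block, leaving $N^{-1}\sum_i f_i^*(Y_{ij}-w_{ij}^*)$. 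No bound can detect such a cancellation; it needs an explicit expression for the off-diagonal part of $\Hb^{-1}$, which is what the paper's blockwise Hessian analysis is used to derive.

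In general you must show two things.
\begin{itemize}
\item The essentially nonrandom part of the off-diagonal blocks acts on the score as a normalised sum of $NJ$ independent mean-zero terms, hence contributes $O_p((NJ)^{-1/2})$.
\item The random part of $\Hb$ must be expanded explicitly. This includes the $l'_{ij}$ entries of the $(\bbf_i,\bxi_{j(k)})$ cross block, which are correlated with both $\partial_{\bbf_i}L$ and $\partial_{\bxi_{j(k)}}L$. This is where the $O(J^{-1})$ and $O(N^{-1})$ biases actually come from.
\end{itemize}
Your claim that the cross contribution ``should be $O_p(J^{-1}+N^{-1})$'' is true, but only through this mechanism, which your sketch neither identifies nor proves.

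\textbf{Smaller points.}
\begin{itemize}
\item \emph{Log factors in the remainder.} You bound $N^{-1}\sum_i(\hat w_{ij}-w_{ij}^*)^2$ using the uniform rate of Theorem~\ref{thm_precise_consis} for $\|\hat\bxi_{j(k)}-\bxi_{j(k)}^*\|$. Multiplied by $N^{1/2}$, the resulting $\epsilon_{NJ}^2\log^2J/(N\wedge J)$ need not vanish under $N=o(J^2)$ and Assumption~\ref{assumption_scaling}. For $N\le J$ one would need $\epsilon_{NJ}^4\log^4J=o(N)$. The fix is to keep this term on the left and absorb it, since it equals $o_p(1)\,\|\hat\bxi_{j(k)}-\bxi_{j(k)}^*\|$. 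Remainders of other coordinates, entering through off-diagonal blocks, need only the $\ell_2$ bounds.
\item \emph{The penalty's remainder.} The Taylor remainder of $\nu\,\partial_{\bpsi}P$ is not controlled by Assumption~\ref{assumption_smoothness}. $P$ is quartic, and its remainder carries, for example, the $O(J^{-1})$ shrinkage forced by $N^{-1}\sum_i\hat f_{id}^2=1$. It is harmless but must be bounded separately.
\item \emph{Constraint effect on the factors.} For the same shrinkage reason, the constraint-induced correction to $\hat\bbf_i$ is of order $N^{-1}+J^{-1}$, not $N^{-1}$. It is still $o(J^{-1/2})$, so your conclusion there stands.
\end{itemize}
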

The asymptotic covariance matrices in Theorem~\ref{thm_asymptotic_normality} attain the oracle Cramer-Rao information lower bound. Specifically, for each item $j$, define the parameter vector $\bxi_{j(k)}:=(\beta_j,\blambda_{j(k)}^\T)^\T$. When all latent factors $\Fb^*$ are assumed to be known, for each $j \in \cJ_k$ with $k \in [K]$, the Fisher information matrix for $\bxi_{j(k)}$ is given by
\begin{equation*}
    \Ib(\bxi_{j(k)}) := \EE\Big[\frac{\partial L(\Fb^*,\bLambda_{\cQ},\bbeta)}{\partial\bxi_{j(k)}}\frac{\partial L(\Fb^*,\bLambda_{\cQ},\bbeta)}{\partial\bxi_{j(k)}}^\T\Big] = \EE\Big[\sum_{i=1}^N\big[l_{ij}^{\prime}(\bxi_{j(k)}^\T\bbf_{i(k)}^{c*})\big]^2\bbf_{i(k)}^{c*}(\bbf_{i(k)}^{c*})^\T\Big].
\end{equation*}
Here, the expectation is taken over the responses, and the second equality follows from their conditional independence. The inverse of the Fisher information at the true parameters, $\{\Ib(\bxi_{j(k)}^*)\}^{-1}$, is asymptotically equivalent to $\bPhi^*_{j(k)}$. Consequently, when $N=o(J^2)$, the estimators for each item-specific parameter vector $\bxi_{j(k)}$ are asymptotically efficient. Notably, this asymptotic efficiency holds across all items, regardless of the block sizes to which they belong. 
Analogously, when $\bLambda_{\cQ}^*$ and $\bbeta^*$ are assumed to be known, $\bPsi^*_{i}$ is asymptotically equivalent to the inverse Fisher information for $\bbf_i^*$. Thus, when $J=o(N^2)$, the estimators for the latent factors are likewise asymptotically efficient. The result is also of practical importance. Because the block structure and imposed identifiability constraints relate the latent dimensions to scientifically meaningful constructs, the individual loading parameters and latent factors often carry substantive meaning. Our results therefore provide practically meaningful and asymptotically efficient inference for these quantities.


\begin{remark}\it
In Theorem~\ref{thm_asymptotic_normality}, we adopt scaling conditions $N = o(J^2)$ for asymptotic distributions of loading parameters and intercepts and $J = o(N^2)$ for those of factors. These conditions are mild and accommodate a variety of asymptotic sequences of $N$ and $J$. They ensure certain small-order terms are negligible as $N,J\to\infty$, thereby ensuring the validity of the asymptotic distributions. Similar assumptions are commonly considered in the literature~\citep{bai2003inferential,bai2012statistical,Wang2018Maximum,cui2025identifiability}. \end{remark}


\section{Algorithm and Guarantees}\label{sec_gdalgorithm}

We now introduce a first-order algorithm to obtain the estimator $(\hat\Fb,\hat\bLambda_{\cQ},\hat\bbeta)$ in \eqref{eq_constrained_maximum_likelihood_estimation}. Building on the equivalence established in Theorem~\ref{thm_global}, $(\hat\Fb,\hat\bLambda_{\cQ},\hat\bbeta)$ can be obtained via minimizing $\cL_{\nu}(\cdot)$ within $\Xi_{\cQ}^*(c,M)$. The strong convexity established in Theorem~\ref{thm_suff_necce_concavity} motivates the use of a first-order method, ensuring both computational efficiency and provable convergence. The detailed description is given in Algorithm~\ref{alg:non-convex}.

The initialization required by Algorithm~\ref{alg:non-convex} is described in Section~\ref{supp_sec_gdalgorithm} of the Supplementary Material, where we also show that the resulting initialization lies within $\Xi_{\cQ}^*(c,M)$ with high probability.
 Starting from the initial estimates, the algorithm iteratively updates all estimates along the gradient direction $\partial_{\bpsi}\cL_{\nu}(\cdot)$ with step size
\begin{equation*}
    \eta_{f} = \eta/J,\quad\eta_{\lambda} = \eta/N,\quad\eta_{\beta} = \eta/N,
\end{equation*} for some small pre-specified constant $\eta>0$. The rescaling accounts for different parameter dimensions of $\Fb$, $\bLambda_{\cQ}$, and $\bbeta$, which may diverge at different rates. The gradients for loading parameters and intercepts are computed in a block-wise manner. For each $k\in[K]$, the algorithm updates the estimates of the loading and intercept parameters in block $k$ in Steps~7 and~8, respectively. Since the latent factors $\Fb$ are shared across many blocks, their update in Step~10 is aggregated over blocks. The gradient with respect to $\Fb$ is computed as
 \begin{align*}
     \partial_{\Fb}\cL_{\nu}\big(\hat\Fb^{(t)},\hat\bLambda_{\cQ}^{(t)}, \hat\bbeta^{(t)}\big) ~=~\sum_{k=1}^K\tilde\Lb_{k}^{(t)}\hat\bLambda_{k}^{(t)}(\Ib_D)_{[\cA_k,]} ~+~ \nu\partial_{\Fb}P(\hat\Fb^{(t)}).
 \end{align*}\begin{algorithm}
   \caption{ Block-wise Gradient Descent Scheme.} 
   \label{alg:non-convex}
\begin{algorithmic}[1]
\STATE {\bfseries Input:} initial estimates: $(\hat\Fb^{(0)},\hat\bLambda^{(0)}_{\cQ},\hat\bbeta^{(0)})$; step sizes: $\eta_f, \eta_\lambda,  \eta_\beta$; block structure: $\cQ$; orthogonality constraint: $\cM$; maximum number of iterations $T$.\\{\bfseries Output:} $(\hat\Fb^{(T)},\hat\bLambda_{\cQ}^{(T)},\hat\bbeta^{(T)})$.
\STATE for each $k\in[K]$, set $\hat\bLambda_k^{(0)} = \big(\bLambda_{\cQ}^{(0)}\big)_{[\cJ_k,\cA_k]}$ and $\hat\bbeta^{(0)}_k = \bbeta^{(0)}_{\cJ_k}$;
  \FOR{$t=0, 1, \dots, T-1$ }
  \FOR{$k = 1,2,\dots,K$}
  \STATE $\tilde\bTheta_k^{(t)} = \hat\Fb^{(t)}_{[,\cA_k]}\big\{\hat\bLambda_{k}^{(t)}\big\}^\T + \one_N\big\{\hat\bbeta^{(t)}_k\big\}^\T$;
  \STATE $\tilde \Lb_{k}^{(t)} = -\left\{l_{ij}^{\prime}\big(\{\tilde \bTheta^{(t)}_k\}_{[i,j]}\big)\right\}_{i\in[N],j\in\cJ_k}$;
  \STATE $\tilde\bLambda_{k}^{(t+1)} = \hat\bLambda_{k}^{(t)} -\eta_{\lambda} \big\{\tilde \Lb_{k}^{(t)}\big\}^\T\hat\Fb^{(t)}_{[,\cA_k]}$;
  \STATE $\tilde\bbeta^{(t+1)}_k = \hat\bbeta^{(t)}_k - \eta_{\beta} \big\{\tilde \Lb_{k}^{(t)}\big\}^\T\one_N$;
  \ENDFOR
  \STATE $\tilde\Fb^{(t+1)} = \hat\Fb^{(t)} -\eta_{f}\sum_{k=1}^{K}\tilde \Lb_{k}^{(t)}\hat\bLambda_{k}^{(t)}(\Ib_D)_{[\cA_k,]} -\eta_{f}\nu\partial_{\,\Fb} P(\hat\Fb^{(t)})$;
  \STATE truncate the parameters to $[-M,M]$: for each $k\in[K]$, $\hat\bLambda_{k}^{(t+1)} = \cT_M\big(\tilde\bLambda_{k}^{(t + 1)}\big)$, $\hat\bbeta_k^{(t+1)} = \cT_M\big(\tilde\bbeta_k^{(t + 1)}\big)$, and $ \hat\Fb^{(t+1)} = \cT_M\big(\widetilde{\Fb}^{(t+1)}\big)$, with $\cT_M(\cdot)$ the corresponding truncation function; \ENDFOR
  \STATE assemble $\hat\bLambda_{\cQ}^{(T)} \leftarrow \big\{\hat\bLambda_{k}^{(T)}\big\}_{k\in[K]}$ and $\hat\bbeta^{(T)} \leftarrow \big\{\hat\bbeta^{(T)}_k\big\}_{k\in[K]}$.
\end{algorithmic}
\end{algorithm}
 Here, $\partial_{\bF}P(\bF)$ denotes an $N\times D$ matrix $(\partial_{f_{id}}P(\Fb))_{N\times D}$, and $(\Ib_D)_{[\cA_k,]}$ maps to the corresponding factors associated with block $k$. Theorem~\ref{thm_suff_necce_concavity} provides a range of $\nu$ for which the Lagrangian-type objective is strongly convex. We set $\nu=0.5$ throughout implementation, and the numerical results are insensitive to this choice. To satisfy the inequality constraints in \eqref{key_optimization_problem}, a truncation operator $\cT_M(\cdot)$ is then applied using the same constant $M$  as in \eqref{key_optimization_problem}. This operator truncates each parameter value at $M$ if it exceeds $M$, and at $-M$ if it falls below $-M$. After the $T$-th iteration, we assemble $\{\hat\bLambda_k^{(T)}\}_{k\in[K]}$ and $\{\hat\bbeta_k^{(T)}\}_{k\in[K]}$ into $\hat\bLambda_{\cQ}^{(T)}$ and $\hat\bbeta^{(T)}$, where for each $k\in[K]$, $(\bLambda_{\cQ}^{(T)})_{[\cJ_k,\cA_k]} = \hat\bLambda_k^{(T)}$ and $\bbeta^{(T)}_{\cJ_k} = \hat\bbeta^{(T)}_k$. The final estimate is then given by $(\hat\Fb^{(T)}, \hat\bLambda_{\cQ}^{(T)}, \hat\bbeta^{(T)})$. At each iteration $t$, we similarly define the assembled iterates for the loading parameters and intercepts by $\hat\bLambda_{\cQ}^{(t)}$ and $\hat\bbeta^{(t)}$, respectively.


Now we establish the error bounds for the iterates in Algorithm~\ref{alg:non-convex}. Define the $\ell_2$- and $\ell_{\infty}$-error between the optimizer $(\hat\Fb, \hat\bLambda_{\cQ}, \hat\bbeta)$ in \eqref{eq_constrained_maximum_likelihood_estimation} and the estimates at the $t$-th iteration as
        \begin{equation*}
             \hat e_{\ell,2}^{(t)} ~=~ {N}^{-1}\|\hat\Fb^{(t)} - \hat\Fb\|_F ^2 + {J}^{-1}\|\hat\bLambda_{\cQ}^{(t)} - \hat\bLambda_{\cQ}\|_F^2 + {J}^{-1}\|\hat\bbeta^{(t)} - \hat\bbeta\|_F^2,
        \end{equation*}
        \begin{equation*}
             \hat e_{\ell,\infty}^{(t)} ~=~ \|\hat\Fb^{(t)} - \hat\Fb\|_{\infty} + \|\hat\bLambda_{\cQ}^{(t)} - \hat\bLambda_{\cQ}\|_{\infty}  + \|\hat\bbeta^{(t)} - \hat\bbeta\|_{\infty},
        \end{equation*}
        respectively.
        The following theorem demonstrates the linear convergence of Algorithm~\ref{alg:non-convex}.
\begin{proposition}[Linear Convergence]\label{thm_linear_convergen}
    Suppose Assumptions~\ref{assumption_psd_covariance}--\ref{assumption_scaling} and the $\cMQ$ Condition hold. Let the initial estimates $(\hat\Fb^{(0)},\hat\bLambda_{\cQ}^{(0)}, \hat\bbeta^{(0)})$ be obtained from Algorithm~\ref{alg:init} below. Fix any $\delta>1$ and choose $\eta$ such that $\eta J_*/J \le 1/(2\gamma_\delta)$, where $\gamma_\delta$ is the constant from Theorem~\ref{thm_suff_necce_concavity}. Then, with probability at least $1 - (N\wedge J)^{-\delta} - C\exp(-c_0\epsilon_N) - C\exp(-c_0\epsilon_J)$, the iterates produced by Algorithm~\ref{alg:non-convex} converge linearly to $(\hat\Fb, \hat\bLambda_{\cQ}, \hat\bbeta)$: for all $t\ge 0$,
    \begin{equation}
             \hat e_{\ell, 2}^{(t)}~\le~ C_0\big(1-\eta C_1J_*/J\big)^{2t}\kappa_{NJ}^2 ,\;\text{ and }~\;\hat e_{\ell, \infty}^{(t)}~\le~ C_0\big(1-\eta C_1J_*/J\big)^{t}\log(N\vee J)\kappa_{NJ},\label{eq_l_2_cont_star}\end{equation}
    for a constant $C_1>0$ depending only on $\delta$, where $\kappa_{NJ} = \log^2(N\vee J)/\sqrt{N\wedge J}$.
\end{proposition}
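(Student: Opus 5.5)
We will prove the $\ell_2$ and $\ell_\infty$ contraction statements by induction on $t$. We work on the intersection of three high-probability events: the event of Theorem~\ref{thm_suff_necce_concavity}, the event of Theorem~\ref{thm_global}, and the event of the initialization guarantee from the Supplementary Material. The latter says that $(\hat\Fb^{(0)},\hat\bLambda_{\cQ}^{(0)},\hat\bbeta^{(0)})\in\Xi_{\cQ}^*(c,M)$, with $\hat e^{(0)}_{\ell,2}\lesssim\kappa_{NJ}^2$ and $\hat e^{(0)}_{\ell,\infty}\lesssim\log(N\vee J)\kappa_{NJ}$; the $\exp(-c_0\epsilon_N)$ and $\exp(-c_0\epsilon_J)$ terms in the probability come from this event.

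On this intersection, Theorem~\ref{thm_global} identifies $(\hat\Fb,\hat\bLambda_{\cQ},\hat\bbeta)$ as the unique minimizer of $\cL_\nu$ in $\Xi_{\cQ}^*(c,M)$. By Theorem~\ref{thm_precise_consis} it lies strictly inside the box $[-M,M]$, since $M$ is chosen larger than $M_1$ plus a vanishing error. Hence $\partial_{\bpsi}\cL_\nu(\hat\bpsi)=0$, and $\hat\bpsi$ is a fixed point of both the gradient map and the truncation $\cT_M$.

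\textbf{$\ell_2$ contraction.} The algorithm is preconditioned gradient descent $\bpsi^{(t+1)}=\cT_M\{\bpsi^{(t)}-\eta\Sbb_m^2\partial_{\bpsi}\cL_\nu(\bpsi^{(t)})\}$. We work in the rescaled coordinates $\bm u=\Sbb_m^{-1}\bpsi$, in which the Hessian becomes $\Sbb_m\partial^2_{\bpsi\bpsi}\cL_\nu\Sbb_m$. The weighted norm $\|\Sbb_m^{-1}(\bpsi-\hat\bpsi)\|^2$ equals $NJ\,\hat e_{\ell,2}$, so it is the right quantity to track.

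Assume $\bpsi^{(t)}\in\Xi_{\cQ}^*(c,M)$. By the mean value theorem along the segment to $\hat\bpsi$, which stays in the convex set $\Xi_{\cQ}^*(c,M)$, the difference of gradients equals $\bar\Hb(\bpsi^{(t)}-\hat\bpsi)$ with an averaged Hessian $\bar\Hb$. By Theorem~\ref{thm_suff_necce_concavity}, with $c$ small relative to $\nu$ and $N,J$ large, $\lambda_{\min}(\Sbb_m\bar\Hb\Sbb_m)\ge\gamma_\delta J_*/(2J)$. The largest eigenvalue is $O(1)$ by Assumption~\ref{assumption_smoothness} and boundedness. Thus the map $\Ib-\eta\Sbb_m\bar\Hb\Sbb_m$ has operator norm at most $1-\eta C_1J_*/J$ under the stated step-size restriction, shrinking $\eta$ by a constant if needed. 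Because $\cT_M$ is coordinatewise nonexpansive and fixes $\hat\bpsi$, it does not increase the distance.

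It follows that $\hat e^{(t+1)}_{\ell,2}\le(1-\eta C_1J_*/J)^2\hat e^{(t)}_{\ell,2}$. Combined with Theorem~\ref{thm_precise_consis}, this keeps $\bpsi^{(t+1)}$ inside $\Xi_{\cQ}^*(c,M)$, which closes the induction.

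\textbf{$\ell_\infty$ contraction (main obstacle).} The Hessian couples all coordinates, so row-wise control is not automatic. We use a leave-one-out-free row decomposition instead.

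For a row $\bbf_i$, write the update error $\hat\bbf_i^{(t+1)}-\hat\bbf_i=(\Ib-\eta J^{-1}\Hb_{ii})(\hat\bbf_i^{(t)}-\hat\bbf_i)-\eta J^{-1}\Hb_{i,\lambda\beta}(\bm\theta^{(t)}-\hat{\bm\theta})+$ (penalty and Taylor remainder). Here $\Hb_{ii}=\sum_j-l''_{ij}\blambda_j\blambda_j^\T+\nu\partial^2_{\bbf_i}P$ is a $D\times D$ block with eigenvalues of order $J$; under the minimal $\cMQ$ Condition its smallest eigenvalue is of order $J_*$, which comes from the blockwise analysis of Lemma~\ref{lemma_hessian_property12}. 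The diagonal factor therefore contracts by $1-\eta C_1J_*/J$.

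The cross term and the remainder are bounded by Cauchy--Schwarz. Since $J^{-1}\|\Hb_{i,\lambda\beta}\|\lesssim J^{-1/2}$ row-wise, the cross term is controlled by $(\hat e^{(t)}_{\ell,2})^{1/2}$ times a $\log(N\vee J)$ factor from uniform bounds on $l'_{ij}$ over rows. The penalty gradient $\partial_{\bbf_i}P$ depends on $\Fb$ only through the averages $N^{-1}\Fb^\T\Fb$ and $N^{-1}\Fb^\T\one_N$, which are also $\ell_2$-controlled. Similarly for the rows $(\beta_j,\blambda_j)$ with $N$ in place of $J$.

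Unrolling gives $\hat e^{(t)}_{\ell,\infty}\le(1-\eta C_1J_*/J)^t\hat e^{(0)}_{\ell,\infty}+C\log(N\vee J)\sum_{s<t}(1-\eta C_1J_*/J)^{t-1-s}(1-\eta C_1J_*/J)^s\kappa_{NJ}$. The extra factor of $t$ produced by this sum has to be absorbed: either use a slightly weaker rate constant $C_1$, or note that the cross term carries an additional small factor $\sqrt{c}+(N\wedge J)^{-1/2}$ that makes the geometric recursion close directly. We expect this careful bookkeeping of the coupling between the $\ell_\infty$ and $\ell_2$ recursions to be the hardest step.
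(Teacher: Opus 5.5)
Your proposal is correct and follows essentially the same route the paper takes. The $\ell_2$ part is a contraction of the preconditioned iteration in the $\Sbb_m$-rescaled norm, using the uniform local strong convexity of Theorem~\ref{thm_suff_necce_concavity}, an $O(1)$ bound on the scaled Hessian, stationarity of $(\hat\Fb,\hat\bLambda_{\cQ},\hat\bbeta)$, and nonexpansiveness of $\cT_M$ on the convex set $\Xi_{\cQ}^*(c,M)$; the $\ell_\infty$ part is a row-wise recursion whose coupling terms are bounded by $(\hat e^{(t)}_{\ell,2})^{1/2}$.

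Two details need tightening. First, the extra requirement $\eta\,\lambda_{\max}\{\Sbb_m\partial^2_{\bpsi\bpsi}\cL_\nu\Sbb_m\}\le 1$ that you add is genuinely needed, because the stated condition $\eta J_*/J\le 1/(2\gamma_\delta)$ only keeps the rate positive. Second, the factor $t$ in your $\ell_\infty$ unrolling is removed by taking $C_1$ small enough that the per-row contraction factor $1-\eta c$ is strictly smaller than $1-\eta C_1J_*/J$, so that the geometric sum is bounded by a constant. Here $c$ comes from $\lambda_{\min}(J^{-1}\bLambda_{\cQ}^\T\bLambda_{\cQ})$ and $\lambda_{\min}(N^{-1}\Fb^\T\Fb)$, bounded below by Assumption~\ref{assumption_psd_covariance}. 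Your alternative ``extra small factor'' argument would not remove the $t$.
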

\begin{proof}
    See Section~\ref{supp_sec_prove_thm_linear_convergen}.
\end{proof}
Proposition~\ref{thm_linear_convergen} establishes the linear convergence of Algorithm~\ref{alg:non-convex} in both $\ell_2$- and $\ell_{\infty}$-norms under a constant step size $\eta$. The constant $C_1$ in the proposition depends on the curvature parameter $\gamma_\delta$ from Theorem~\ref{thm_suff_necce_concavity}, which is assumed to be constant by Assumption~\ref{assumption_scaling}. The estimation errors between $(\hat\Fb^{(t)}, \hat\bLambda_{\cQ}^{(t)},\hat\bbeta^{(t)})$ and the true parameters follow directly from Theorem~\ref{thm_precise_consis} and Proposition~\ref{thm_linear_convergen}: with probability at least $1-2(N\wedge J)^{-\delta}$, for some constant $C_{\delta}$ depending on $\delta$, we have
\begin{equation*}
    {J}^{-1}\|\hat\bLambda_{\cQ}^{(t)} - \bLambda_{\cQ}^*\|_F^2 +  {J}^{-1}\|\hat\bbeta^{(t)} - \bbeta^*\|^2 ~\le~ C_0\big(1-\eta C_1J_*/J\big)^{2t}\kappa_{NJ}^2 + C_{\delta}\big(N^{-1} + J^{-2}\big);
\end{equation*}
\begin{equation*}
    \|\hat\bLambda_{\cQ}^{(t)} - \bLambda_{\cQ}^*\|_{\infty} + \|\hat\bbeta^{(t)}-\bbeta^*\|_{\infty} ~\le~ C_0\big(1-\eta C_1J_*/J\big)^{t}\log(N\vee J)\kappa_{NJ} + C_{\delta}\frac{\epsilon_{NJ}\log J}{(N\wedge J)^{1/2}}.
\end{equation*}
Since $T$ can be arbitrarily large, $ \hat\bLambda_{\cQ}^{(t)}$ and $\hat\bbeta^{(t)}$ achieve the same statistical optimality as the original estimator in \eqref{eq_constrained_maximum_likelihood_estimation}. Similar optimality holds for $\hat\Fb^{(t)}$.
In Proposition~\ref{thm_linear_convergen}, the contraction rates of both $\hat e_{\ell,2}^{(t)}\to 0$ and $\hat e_{\ell,\infty}^{(t)}\to 0$ depend on the ratio $J_*/J$, reflecting the convexity results established in Theorem~\ref{thm_suff_necce_concavity}. The term $\kappa_{NJ}$ captures the statistical error inherited from the initialization (see Proposition~\ref{thm_consistent_alg_init} below). This rate does not affect either the statistical optimality or the convergence speed of the algorithm, as long as the initialization yields an estimate within $\Xi_{\cQ}^*(c,M)$.

With Theorem~\ref{thm_asymptotic_normality} and Proposition~\ref{thm_linear_convergen}, we immediately have the following result.
\begin{corollary}\label{coro_linear_convergence}
Under the setup in Proposition~\ref{thm_linear_convergen}, when the number of iterations $T$ satisfies
\begin{equation*}
    \frac{1}{T}\cdot\frac{\log\big\{C_0\log(N\vee J)\kappa_{NJ}(N\vee J)^{1/2}\big\}}{|\log (1-\eta C_1J_*/J)|}~ \to ~0~\text{  as }~N,J\to\infty,
\end{equation*}for constants $C_0$ and $C_1$ specified in Proposition~\ref{thm_linear_convergen},
the asymptotic results in Theorem~\ref{thm_asymptotic_normality} apply for the estimates $(\hat\Fb^{(T)},\hat\bLambda_{\cQ}^{(T)},\hat\bbeta^{(T)})$ obtained from Algorithm~\ref{alg:non-convex}. 
\end{corollary}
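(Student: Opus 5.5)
The plan is to split the algorithmic estimate into the exact estimator plus an optimization error, and to show the optimization error is $o_p$ of the CLT scaling. Write
\[
\hat\bxi^{(T)}_{j(k)}-\bxi^*_{j(k)} = \big(\hat\bxi_{j(k)}-\bxi^*_{j(k)}\big)+\big(\hat\bxi^{(T)}_{j(k)}-\hat\bxi_{j(k)}\big),
\]
where $\bxi_{j(k)}=(\beta_j,\blambda_{j(k)}^\T)^\T$, and similarly $\hat\bbf_i^{(T)}-\bbf_i^*=(\hat\bbf_i-\bbf_i^*)+(\hat\bbf_i^{(T)}-\hat\bbf_i)$. By Theorem~\ref{thm_asymptotic_normality}, $(\bPhi^*_{j(k)})^{-1/2}(\hat\bxi_{j(k)}-\bxi^*_{j(k)})\overset{d}{\to}\cN(\zero,\Ib)$ and $(\bPsi_i^*)^{-1/2}(\hat\bbf_i-\bbf_i^*)\overset{d}{\to}\cN(\zero,\Ib)$. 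By Slutsky's lemma it then suffices to show that the remainder terms, multiplied by $(\bPhi^*_{j(k)})^{-1/2}$ and $(\bPsi^*_i)^{-1/2}$, are $o_p(1)$.

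For the scaling, Assumption~\ref{assumption_smoothness} and Assumption~\ref{assumption_psd_covariance}(ii),(iv) give that the bread matrices in \eqref{eq_sandwich} are of order $N$, and the meat is of order at most $N$ with high probability, using sub-exponential concentration of $(l'_{ij})^2$. Hence $\|(\bPhi^*_{j(k)})^{-1/2}\|\lesssim N^{1/2}$. By the same argument $\|(\bPsi^*_i)^{-1/2}\|\lesssim J^{1/2}$.

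It therefore suffices to show
\[
(N\vee J)^{1/2}\,\hat e^{(T)}_{\ell,\infty}\to 0
\]
in probability, since the $\ell_\infty$ error dominates every entry of $\hat\bxi^{(T)}_{j(k)}-\hat\bxi_{j(k)}$ and of $\hat\bbf_i^{(T)}-\hat\bbf_i$. Proposition~\ref{thm_linear_convergen} gives, on an event with probability tending to one (since $\epsilon_N,\epsilon_J\to\infty$),
\[
(N\vee J)^{1/2}\hat e^{(T)}_{\ell,\infty}\le \exp\Big\{\log\big(C_0\log(N\vee J)\kappa_{NJ}(N\vee J)^{1/2}\big)+T\log(1-\eta C_1J_*/J)\Big\}.
\]
Write $a_{NJ}$ for the first logarithm. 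The hypothesis says $a_{NJ}/\{T|\log(1-\eta C_1J_*/J)|\}\to0$, so the exponent equals $-T|\log(1-\eta C_1J_*/J)|\,(1-o(1))$.

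The main subtlety is showing that this exponent actually diverges to $-\infty$. If $a_{NJ}\to\infty$, the hypothesis forces $T|\log(1-\eta C_1 J_*/J)|\gg a_{NJ}\to\infty$. This does hold: $\kappa_{NJ}(N\vee J)^{1/2}\ge\log^2(N\vee J)$, so $a_{NJ}\to\infty$. Moreover, $J_*\asymp J$ by Assumption~\ref{assumption_scaling}, so the contraction factor is a fixed constant below one. Therefore the bound tends to zero.

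The result then follows from Slutsky's lemma. For the plug-in covariance estimators, I would repeat the consistency argument of Theorem~\ref{thm_asymptotic_normality} with $(\hat\Fb^{(T)},\hat\bLambda_\cQ^{(T)},\hat\bbeta^{(T)})$. This works because their $\ell_\infty$ distance to the true parameters is of the same order as in Theorem~\ref{thm_precise_consis}, by the triangle inequality and the bound above.
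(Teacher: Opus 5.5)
Your overall route is the one the paper intends. The paper says the corollary follows immediately from Theorem~\ref{thm_asymptotic_normality} and Proposition~\ref{thm_linear_convergen}: on the high-probability event of the proposition, the condition on $T$ forces $(N\vee J)^{1/2}\hat e^{(T)}_{\ell,\infty}\to 0$. The optimization error is then negligible at the $N^{-1/2}$ and $J^{-1/2}$ scales, and Slutsky's lemma transfers the limits. Your handling of the exponent is correct: you check that $a_{NJ}\to\infty$, hence $T|\log(1-\eta C_1J_*/J)|\to\infty$. The entrywise domination by the row-sum $\ell_\infty$ error is also correct.

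The step that fails as written is the bound $\|(\bPhi^*_{j(k)})^{-1/2}\|\lesssim N^{1/2}$. Write $\bPhi^*_{j(k)}=\mathbf{B}^{-1}\mathbf{V}\mathbf{B}^{-1}$, with bread $\mathbf{B}$ and meat $\mathbf{V}$ as in \eqref{eq_sandwich}. You need $\lambda_{\min}(\bPhi^*_{j(k)})\gtrsim N^{-1}$, and $\lambda_{\min}(\bPhi^*_{j(k)})\ge \lambda_{\min}(\mathbf{V})/\lambda_{\max}(\mathbf{B})^2$. So what is required is $\lambda_{\max}(\mathbf{B})\lesssim N$ together with a \emph{lower} bound $\lambda_{\min}(\mathbf{V})\gtrsim N$ on the meat. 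Your upper bound on the meat only yields $\|\bPhi^*_{j(k)}\|\lesssim N^{-1}$, which is the opposite direction, so the ``hence'' does not follow.

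The missing lower bound is available. For the correctly specified $g_{ij}$, the information identity and Assumption~\ref{assumption_smoothness} give $\EE[(l^{\prime*}_{ij(k)})^2]=-\EE[l^{\prime\prime*}_{ij(k)}]\ge b_L$. Combine this with two facts:
\begin{itemize}
\item concentration of $N^{-1}\sum_i\{(l^{\prime*}_{ij(k)})^2-\EE(l^{\prime*}_{ij(k)})^2\}\bbf^{c*}_{i(k)}(\bbf^{c*}_{i(k)})^\T$, whose summands are independent with light tails;
\item positive definiteness of $N^{-1}\sum_i\bbf^{c*}_{i(k)}(\bbf^{c*}_{i(k)})^\T$, which follows from \eqref{eq_cond_normal} and Assumption~\ref{assumption_psd_covariance}(ii).
\end{itemize}
Together these give $\lambda_{\min}(\mathbf{V})\gtrsim N$ with high probability. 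The analogous argument with $\bLambda_\cQ^*$ in place of $\Fb^*$ gives the $J$-scale bound for $\bPsi_i^*$. This nondegeneracy is also what the normalization in Theorem~\ref{thm_asymptotic_normality} itself relies on, so you may instead cite it from that proof. With this correction your argument goes through, including the plug-in part via the triangle inequality you describe.
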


After ignoring the $\log\log$-terms, and with the scaling conditions $N=o(J^2)$ and $J=o(N^2)$ required by Theorem~\ref{thm_asymptotic_normality}, the asymptotic results in Theorem~\ref{thm_asymptotic_normality}  hold for the $T$-th step estimators  $(\hat\Fb^{(T)},\hat\bLambda_{\cQ}^{(T)},\hat\bbeta^{(T)})$ as long as $T\gg \log (N\vee J)$. 

There have been extensive studies on statistical properties of gradient descent sequences for similar non-convex problems involving matrix factorization~\citep{ma2018implicit,ma2020universal,chen2021bridging,yan2024inference}. However, most existing studies focus on the linear setting, where the analysis primarily relies on a fine-grained decomposition of the estimation error $\hat\bLambda_{\cQ}^{(t)} - \bLambda_{\cQ}^*$. In nonlinear settings, however, such techniques are inapplicable because the first-order updates no longer yield a linear form for $\hat\bLambda_{\cQ}^{(t)} - \bLambda_{\cQ}^*$, making the noise component difficult to disentangle. Instead, our analysis uses the Lagrangian-type formulation that benefits from the favorable properties established in Section~\ref{sec_estimation}. Remarkably, unlike algorithms that impose regularization and thereby introduce additional bias into the solution~\citep{keshavan2010matrix,sun2016guaranteed,chen2021bridging}, our Lagrangian-type formulation has exactly the same solution as the original problem. While tailored to the block structured latent variable model studied here, this technique has the potential to extend to more general and challenging settings.




\section{Simulation Study}\label{sec:simu}
In this section, we present simulation studies to validate our theoretical findings. We consider logistic latent factor models with various block structures, where the link function is $g_{ij}(y|w) = \exp(w)^y/(1+\exp(w))$ for all $i\in[N]$ and $j\in[J]$. 

We begin with introducing the four block structure designs used in the simulation study. These designs are adapted from the four setups in Figure~\ref{fig:illus}. In each simulation setup, the block structure $\cQ$ is proportional to that illustrated in the corresponding graphical representation in Figure~\ref{fig:illus}. Specifically, if $x$\% of the items belong to a block in the figure, the corresponding block in the simulation contains $x$\% of the $J$ total items, and the items’ responses are influenced by the same subset of factors. The number of latent factors and their orthogonality constraints are also preserved, as indicated in the figure. For reference, we denote the block structures by (a)–(d), following their labels in the figure.

For the data-generating process of the true parameters, the latent factors $\{\bbf_i\}_{i\in[N]}$ are first drawn i.i.d. from $\cN(\zero_D,\bSigma_{\tau})$, where $\bSigma_{\tau}\in\RR^{D\times D}$ has $(l,h)$-th entry $\tau^{|l-h|}$, with $\tau$ specified later. Then they are transformed to satisfy the constraints in \eqref{eq_cond_normal} and the orthogonal constraint $\cM$, with $\cM$ indicated for each setup in Figure~\ref{fig:illus}. 
For loading parameters, the entries in $\bLambda_k$ are independently generated from $Unif(1,2)$ for each $k\in[K]$.  The intercepts are drawn i.i.d. from $\cN(0,1)$. 

Under all four block structures, we generate the true parameters under the following settings: (1) sample size $N\in\{500, 1000, 1500, 2000\}$; (2) number of items $J \in \{500, 1000, 1500,$ $2000\}$; (3) factor correlation $\tau \in \{0.3, 0.7\}$. The estimator in \eqref{eq_constrained_maximum_likelihood_estimation} is obtained using the algorithm described in Section~\ref{sec_gdalgorithm}. To assess our asymptotic results, we compute 95\% Wald intervals for each latent factor, loading parameter, and intercept based on Theorem~\ref{thm_asymptotic_normality}. We perform 200 replications and report the empirical coverage rates. Figures~\ref{fig_factors_low} and~\ref{fig_loadings_low} present the coverage results for the latent factors and loading parameters under the low-correlation setting ($\tau = 0.3$), with panels (a)–(d) corresponding to block structures (a)–(d). Additional results for intercepts and for the high-correlation setting are provided in Section~\ref{sec_supp_simu1}
of the Supplementary Material.


{\spacingset{1.19}\begin{figure}[h]
    \centering
    \includegraphics[width=3.6in, height=0.16in]{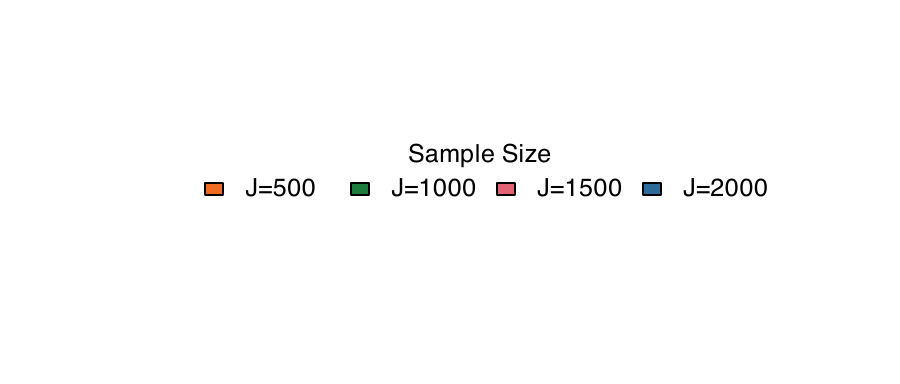}\\
    \includegraphics[width=5.6in, height=2in]{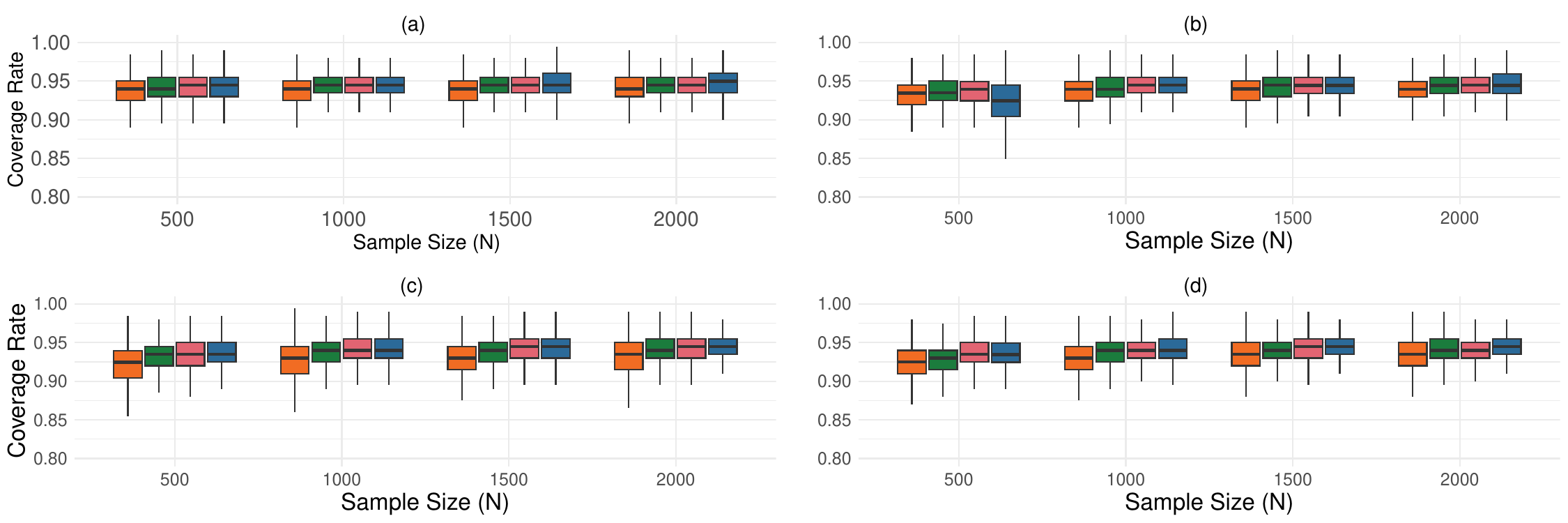}
    \caption{Empirical coverage rates of factor parameters in the low-correlation case ($\tau = 0.3$) under different $N$, $J$. Panels (a)–(d) report results for block structures (a)–(d).} 
    \label{fig_factors_low}
\end{figure}
\begin{figure}[h]
    \centering
    \includegraphics[width=3.6in, height=0.16in]{legende.pdf}
    \includegraphics[width=5.6in, height=2in]{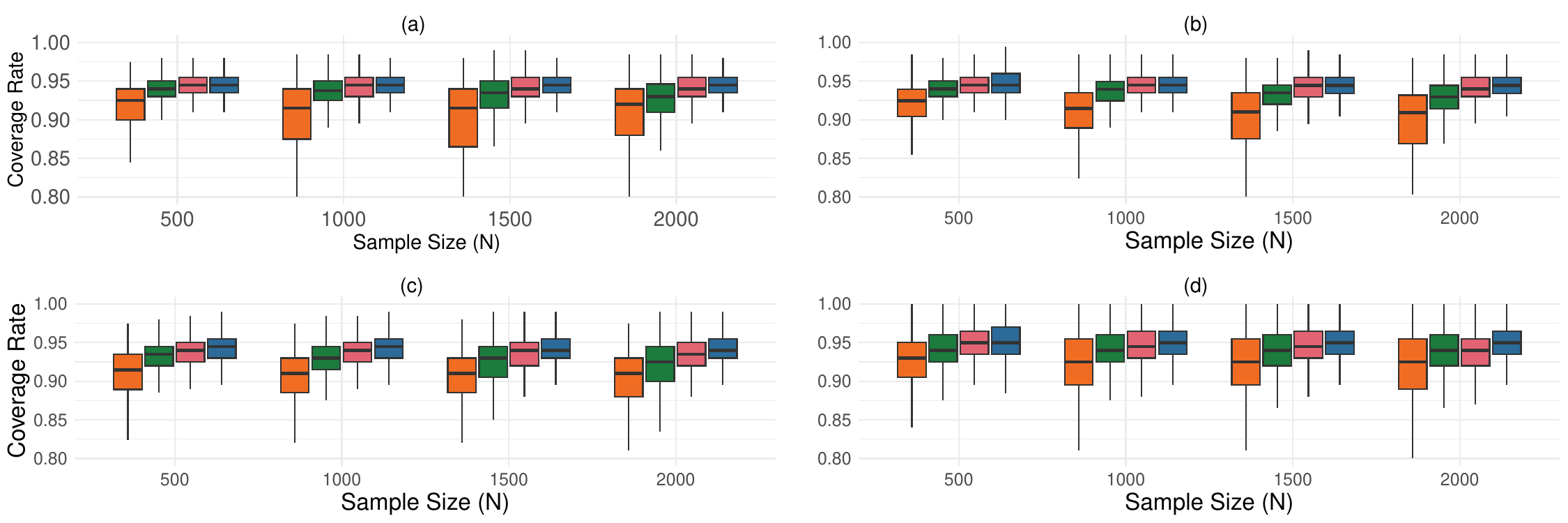}
    \caption{Empirical coverage rates of loading parameters in the low-correlation case ($\tau = 0.3$) under different $N$ and $J$. Panels (a)–(d) report results for block structures (a)–(d).}
    \label{fig_loadings_low}
\end{figure}}

Overall, the asymptotic results show good empirical coverage across all scenarios and parameters. For the estimated factor parameters in the low-correlation setting (Figure~\ref{fig_factors_low}), the coverage rates reach about 95\% in all scenarios once both $N$ and $J$ exceed 1000. Under the block structure (b), the coverage for factors is slightly below 95\% when $N=500$ and $J=2000$. Theoretically, this occurs because some higher-order terms in the asymptotic expansion of $(\bPsi_i^*)^{-1/2}(\hat\bbf_i - \bbf_i^*)$ become large when $\sqrt{J}/N$ is not sufficiently small. For the estimated loading parameters in the low-correlation case (Figure~\ref{fig_loadings_low}), the coverage rates follow a similar trend, increasing noticeably as $J$ grows from 500 to 1000 and staying around 95\% once $J$ exceeds 1000. There are only minor discrepancies in the empirical coverage rates across different settings, implying that the accuracy of the asymptotic approximations in Theorem~\ref{thm_asymptotic_normality} depends primarily on $N$ and $J$ rather than the block structure.
For the intercepts in the low‑correlation setting (Figure~S1), 
the empirical coverage rates stay close to 95\%  across all settings of $N$, $J$, and block structures. The high‑correlation setting exhibits similar coverage results across these settings (see details in Figures~S2--S4). 
Taken together, these results provide strong empirical support for our theoretical findings.

\section{Real Data Analysis}\label{sec:data}

We apply our proposed method to two datasets: the PISA 2022 educational assessment data analyzed below and a psychological assessment dataset analyzed in Section~\ref{sec_supp_data} of the Supplementary Material. PISA is a large-scale international assessment of 15-year-old students’ proficiency in mathematics, science, and reading~\citep{organisation2023pisa}. We analyze the data collected in the United States. Under PISA’s rotated-booklet design, each student is administered a prespecified subset of the full item pool. We fit the model using the observed-response likelihood and, following standard treatments in educational assessment, assume that booklet assignment is conditionally independent of the responses given the students' latent factors~\citep{chen2023statistical,ouyang2024statistical}. The resulting missingness allows our theoretical results to be extended to the maximum likelihood estimation based on the observed data likelihood. After excluding students and items with no valid responses, the resulting sample contains $N=4020$ students and $J=453$ binary items, of which items $1$--$169$, $170$--$272$, and $273$--$453$ assess mathematics, science, and reading, respectively.

Following the literature~\citep{brunner2008no,pokropek2022much}, we fit a four-dimensional bifactor model, a special case of our block structured latent variable model framework. For each student $i$, let $\bbf_i = (f_{i1},f_{i2},f_{i3},f_{i4})^\T$, where $f_{i1}$ represents a general factor capturing student $i$'s overall learning ability, and $f_{i2},f_{i3},f_{i4}$ are domain specific factors capturing mathematics, science, and reading learning abilities, respectively. Responses to all items depend on the general factor $f_{i1}$, while responses to mathematics, science, and reading items depend on $f_{i2}$, $f_{i3}$, and $f_{i4}$, respectively. For identifiability, the three domain-specific factors are constrained to be orthogonal to the general factor, consistent with the $\cMQ$ Condition.

Since the responses are binary, we fit the model under the block structure and orthogonality constraint described above using a logistic link, corresponding to the multidimensional two‑parameter logistic (M2PL) model~\citep{reckase2009}. 
The estimation is obtained using the algorithm in Section~\ref{sec_gdalgorithm}. 
We construct asymptotic confidence intervals for the loading parameters using the asymptotic distributions established in Theorem~\ref{thm_asymptotic_normality}.

{\spacingset{1.19}
\begin{figure}[h]
    \centering
    \includegraphics[width=5.3in, height=1.6in]{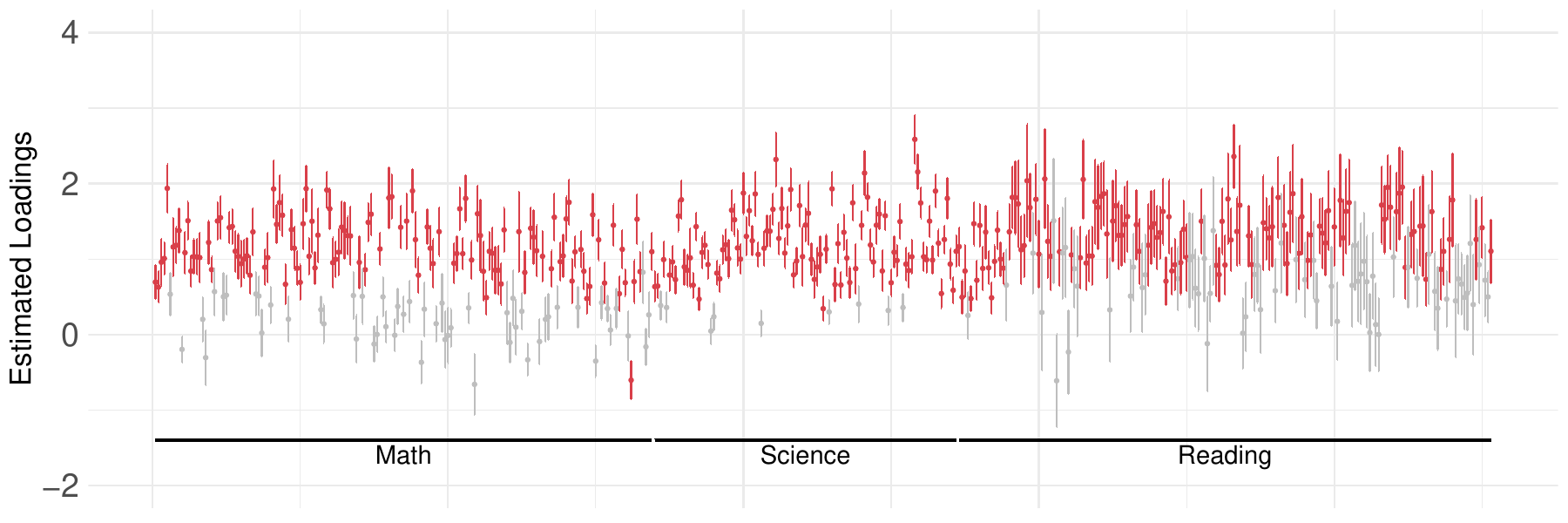}
    \caption{Confidence intervals for the loading parameters on the general factor. Intervals in red represent significant loading parameters after adjustment with Bonferroni correction.}
    \label{fig_main_laoding}
\end{figure}
\begin{figure}[h]
    \centering
    \includegraphics[width=5.3in, height=1.25in]{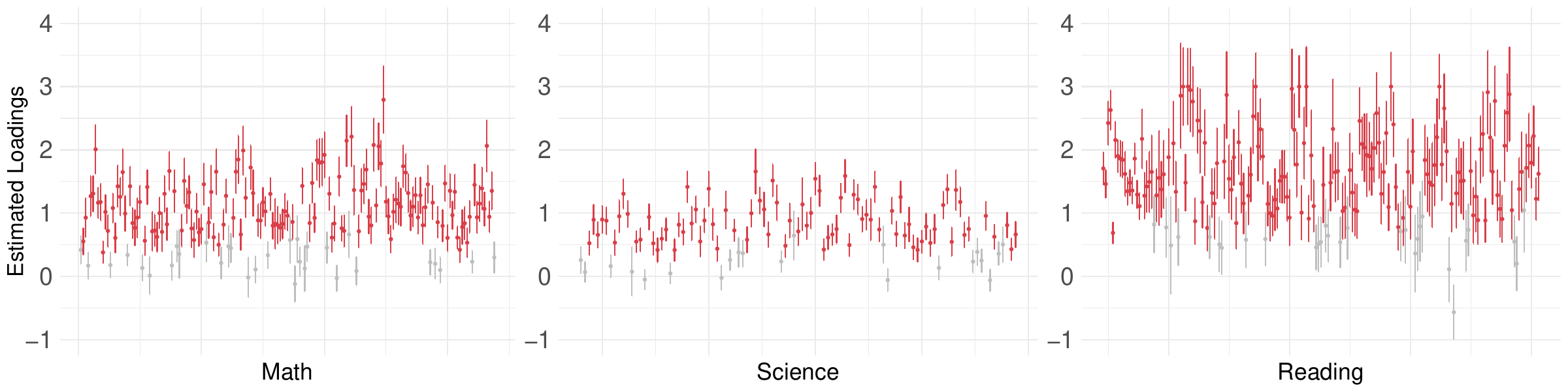}
\caption{Confidence intervals for the loading parameters on the three domain-specific factors. Intervals in red represent significant loading parameters after adjustment with Bonferroni correction.}
    \label{fig_indi_laoding}
\end{figure}}

Figure~\ref{fig_main_laoding} presents the estimated loading parameters for the general factor, and Figure~\ref{fig_indi_laoding} displays the estimated domain-specific loading parameters for mathematics, science, and reading in three separate subfigures. Significant loading parameters after Bonferroni correction at the 0.05 level are highlighted in red. The results for the latent abilities are provided in Figure~\ref{fig_pisa_factors}.

One notable pattern in Figures~\ref{fig_main_laoding} and~\ref{fig_indi_laoding} is that science items exhibit relatively larger values of loading parameters on the general factor, compared to their science-specific loading parameters. This suggests that students’ learning ability in science is largely captured by the general factor. In contrast, reading items display smaller values of general-factor loading parameters but much larger values of reading-specific loading parameters, indicating that reading skill relies more on domain-specific learning abilities and is less captured by the general factor. A similar pattern is observed by~\cite{pokropek2022much}.


Moreover, we find that all significant loadings are positive except for one mathematics item on the general factor. The positive loadings align with the model’s interpretation, as they indicate that correctly answering a testing item is associated with higher values of the latent factors representing students’ learning abilities. Notably, the significance in general-factor loadings suggests that the general factor captures broad learning skills across mathematics, science, and reading.
The only exception is the mathematics item ``Sleep and Reaction Time'', which displays a negative loading value on the general factor ($p=2.75\times 10^{-3})$. This item’s estimated loading parameter on the math‐specific factor is $0.93$ and is significant ($p = 4.94\times 10^{-10}$), indicating that it primarily assesses mathematical ability but is negatively related to the general latent factor.

{\spacingset{1.19}
\begin{figure}[h]
    \centering
    \includegraphics[width=3.6in, height=0.26in]{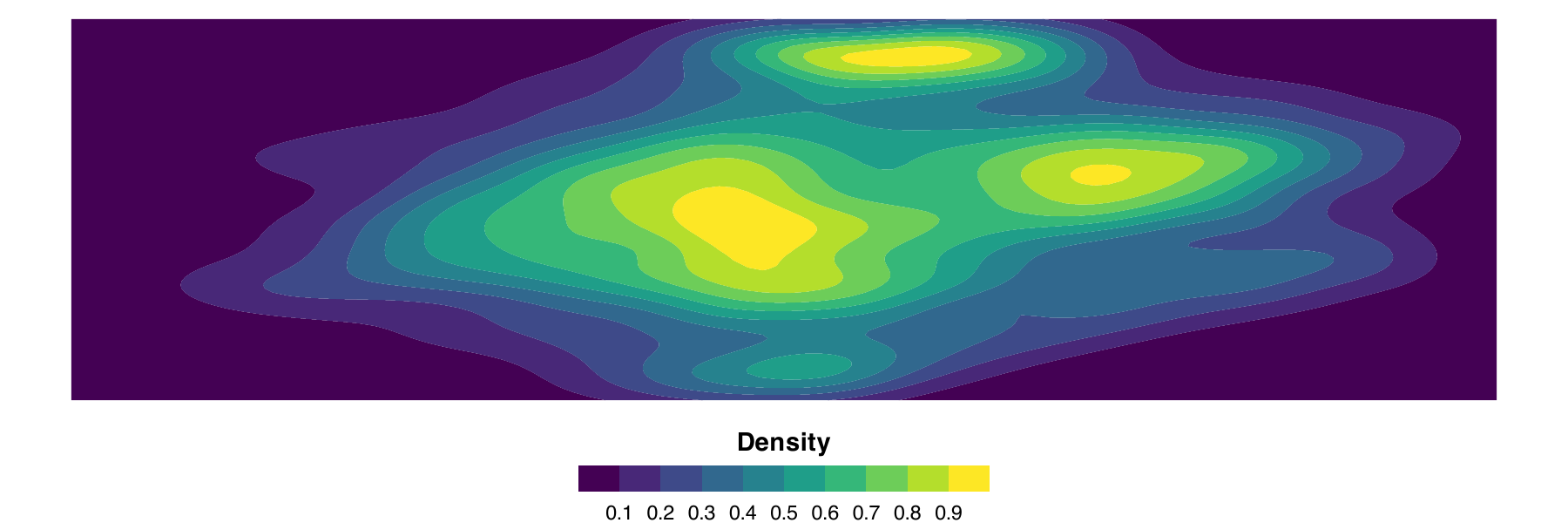}
    \includegraphics[width=5.6in, height=1.9in]{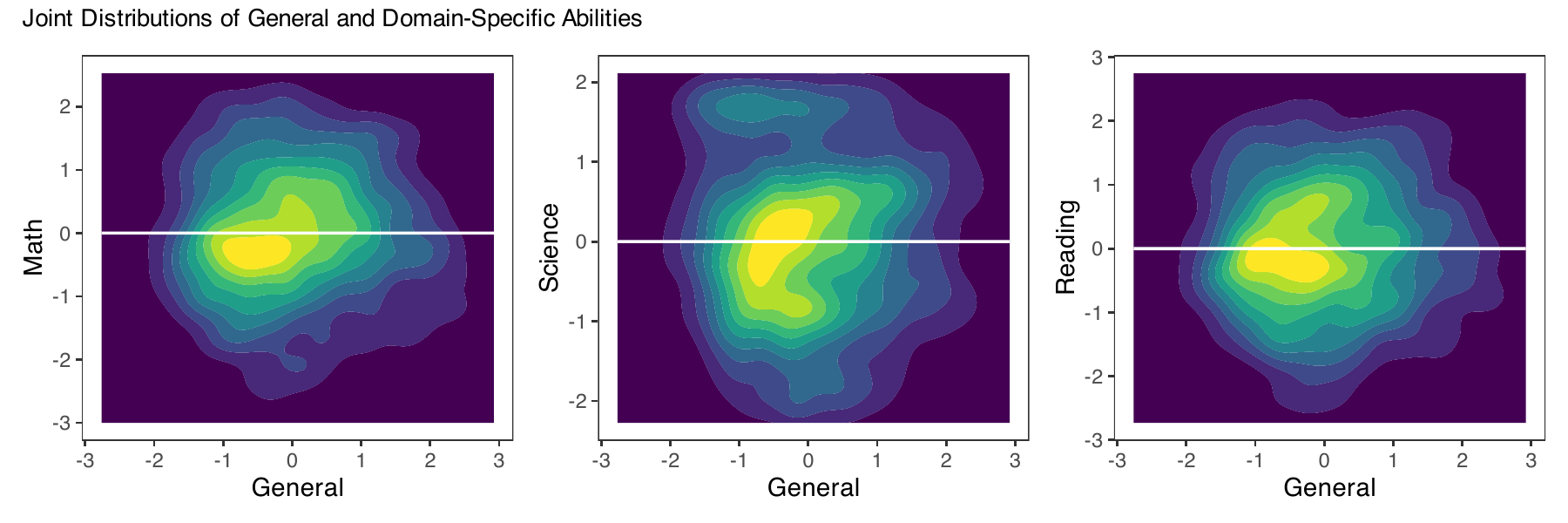}
    \includegraphics[width=5.6in, height=1.9in]{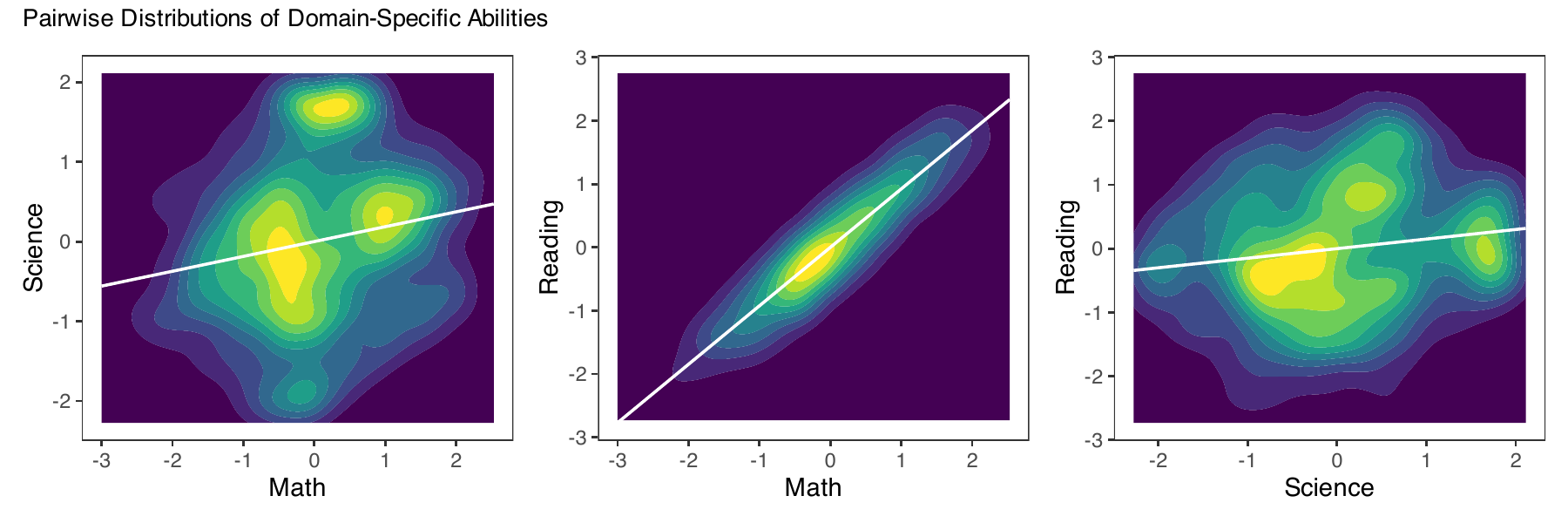}
    \caption{\small Joint distributions of the estimated latent abilities. The upper panel shows the joint distributions between the general ability and each domain-specific ability (mathematics, science, and reading), while the lower panel presents the pairwise relationships among the three domain-specific abilities. }
    \label{fig_pisa_factors}
\end{figure}}
Figure~\ref{fig_pisa_factors} presents the joint density contours of the estimated latent abilities. 
The general, mathematics, and reading factors exhibit approximately normal distributions, whereas the science factor shows two distinct density peaks, suggesting a non-unimodal distribution in this domain. Correlations between the science factor and the mathematics and reading factors are moderate. A notable feature is the pronounced linear ridge in the joint distribution of the mathematics and reading factors, indicating a strong correlation between these two abilities. This finding aligns with prior research \citep{pokropek2022much,organisation2023pisa}, pointing to potential dependence between mathematical reasoning and reading comprehension.


\section{Conclusion}\label{sec:end}
This paper provides a comprehensive statistical analysis of block structured latent variable models, covering model identifiability, estimation consistency, and statistical inference for the model parameters. We establish identifiability conditions for a broad class of block structures. We then consider maximum likelihood estimation subject to the proposed constraints. To analyze this constrained optimization problem, we propose a Lagrangian-type formulation and show that the minimizer of the Lagrangian function coincides with the solution of the original problem. Building on this result, we derive non‑asymptotic $\ell_2$- and $\ell_{\infty}$-error bounds and asymptotic distributions for the constrained maximum likelihood estimation. Theoretical findings are supported by numerical studies.

Our work opens several directions for future studies. 
First, in this paper, the link function $g_{ij}(\cdot\mid\cdot)$ is treated as known and correctly specified. An important direction for future research is to study robustness to misspecification and to develop semiparametric or nonparametric methods to estimate the link~\citep[e.g.][]{liu2022semiparametric,caner2025deep}. Second, our theoretical results assume a known number of latent factors. The analysis of the maximum likelihood estimation can be used to further develop procedures, such as information criteria, to determine the latent dimension and to select among competing models~\citep{chen2022determining}. 
Third, our framework provides a foundation for extending the analysis to more complex settings, such as multi-layer latent factor models~\citep{qiao2025exploratory} and deep generative models~\citep{moran2021identifiable}.

{
 \spacingset{1.33}
\setlength{\bibsep}{0pt} 
 \small
 \bibstyle{agsm}
\bibliography{reference}}

\end{document}